\global\long\def\emd{\mathrm{EMD}}
\global\long\def\tv{\mathrm{TV}}
\newcommand{\findclosepairs}{\texttt{FindClosePairs}}
\newcommand{\sscp}{\texttt{SubsCP}}
\newcommand{\llp}{\texttt{LastSmallPrefix}}
\newcommand{\constsampler}{\texttt{ConstantSampler}}
\newcommand{\arbitsample}{\texttt{ArbitrarySampler}}
\newcommand{\certify}{\textsc{Certify}}
\newcommand{\cp}{\texttt{CP}}
\newcommand{\allclosepairs}{\texttt{AllClosePairs}}
\algnewcommand{\LineComment}[1]{\State \(\triangleright\) #1}
\newtheorem{theorem}{Theorem}
\newtheorem*{theorem*}{Theorem}
\newtheorem{lemma}{Lemma}[section]
\newtheorem{claim}[lemma]{Claim}
\newtheorem{fact}[lemma]{Fact}
\newtheorem{corollary}[lemma]{Corollary}
\newtheorem{question}{Question}
\newtheorem{definition}[lemma]{Definition}
\theoremstyle{definition}
\newtheorem{observation}[theorem]{Observation}
\theoremstyle{plain}
\newcommand{\cE}{\mathcal E}
\newcommand{\bbZ}{\mathbb Z}
\newcommand{\bbR}{\mathbb R}
\newcommand{\cL}{\mathcal L}
\newcommand{\cF}{\mathcal F}
\newcommand{\one}{\mathbbm 1}
\newcommand{\set}[1]{\left \{ #1 \right \}}
\newcommand{\cond}{\;|\;}
\renewcommand{\epsilon}{\varepsilon}
\newcommand{\ex}[1]{\mathop{{\bf E}\left[ #1 \right]}}
\newcommand{\pr}[1]{\operatorname{{\bf Pr}}\left[ #1 \right]}
\newcommand{\boldeta}{\bm{\eta}}
\newcommand{\bS}{\mathbf{S}}
\newcommand{\bT}{\mathbf{T}}
\newcommand{\bY}{\mathbf{Y}}
\newcommand{\bZ}{\mathbf{Z}}
\newcommand{\ba}{\boldsymbol{a}}
\newcommand{\bb}{\boldsymbol{b}}
\newcommand{\bg}{\boldsymbol{g}}
\newcommand{\bh}{\boldsymbol{h}}
\newcommand{\bi}{\boldsymbol{i}}
\newcommand{\bj}{\boldsymbol{j}}
\newcommand{\bu}{\boldsymbol{u}}
\newcommand{\bv}{\boldsymbol{v}}
\newcommand{\bw}{\boldsymbol{w}}
\newcommand{\bx}{\boldsymbol{x}}
\newcommand{\by}{\boldsymbol{y}}
\newcommand{\bz}{\boldsymbol{z}}
\newcommand{\balpha}{\boldsymbol{\alpha}}
\newcommand{\bbeta}{\boldsymbol{\beta}}
\newcommand{\bgamma}{\boldsymbol{\gamma}}
\newcommand{\bsigma}{\boldsymbol{\sigma}}
\newcommand{\cG}{\mathcal{G}}
\newcommand{\cH}{\mathcal{H}}
\newcommand{\cP}{\mathcal{P}}
\newcommand{\cI}{\mathcal{I}}
\title{Approximating High-Dimensional Earth Mover's Distance as Fast as Closest Pair}
\author{Anonymous Authors}
\author[1]{Lorenzo Beretta}
\author[2]{Vincent Cohen-Addad}
\author[3]{Rajesh Jayaram}
\author[4]{Erik Waingarten}
\affil[1]{University of California, Santa Cruz, \texttt{lorenzo2beretta@gmail.com}}
\affil[2]{Google Research NYC \texttt{cohenaddad@google.com}}
\affil[3]{Google Research NYC \texttt{rkjayaram@google.com}}
\affil[4]{University of Pennsylvania \texttt{ewaingar@seas.upenn.edu}}
\date{}
\begin{document}

\maketitle
\thispagestyle{empty}

\begin{abstract}
We give a reduction from $(1+\eps)$-approximate Earth Mover's Distance (EMD) to $(1+\epsilon)$-approximate Closest Pair (CP). As a consequence, we improve the fastest known approximation algorithm for high-dimensional EMD. Here, given $p\in [1, 2]$ and two sets of $n$ points $X,Y \subset (\R^d,\ell_p)$, their EMD is the minimum cost of a perfect matching between $X$ and $Y$, where the cost of matching two vectors is their $\ell_p$ distance. Further, CP is the basic problem of finding a pair of points realizing $\min_{x \in X, y\in Y} ||x-y||_p$.
Our contribution is twofold:
\begin{itemize}
    \item 
    We show that if a $(1+\epsilon)$-approximate 
    CP can be computed in time $n^{2-\phi}$, then a $1+O(\epsilon)$ approximation to EMD can be computed in time $n^{2-\Omega(\phi)}$.  
    
    \item 
    Plugging in the fastest known algorithm for CP \cite{ACW16}, we obtain a $(1+\epsilon)$-approximation algorithm for EMD running in time $n^{2-\tilde{\Omega}(\epsilon^{1/3})}$ for high-dimensional point sets, which improves over the prior fastest running time of $n^{2-\Omega(\eps^2)}$ \cite{AZ23}.
\end{itemize} 
Our main technical contribution is a sublinear implementation of the Multiplicative Weights Update framework for EMD. Specifically, we demonstrate that the updates can be executed without ever explicitly computing or storing the weights; instead, we exploit the underlying geometric structure to perform the updates implicitly.

\end{abstract}

\newpage
\setcounter{page}{1}

\newcommand{\citeplaceholder}[1]{[#1]} 

\section{Introduction}

In this paper, we study algorithms for the Earth Mover's Distance ($\EMD$)
which is a fundamental measure of dissimilarity between two sets of point in a metric space \cite{villani2008optimal, santambrogio2015optimal, PC19} (also known as the Wasserstein-1 metric). Specifically, we consider the ground metric $(\R^d, \ell_p)$, for $p \in [1,2]$, 
and given two subsets of $n$ vectors, $X = \{ x_1,\dots, x_n \}, Y = \{ y_1,\dots, y_n\} \subset \R^d$, the Earth Mover's distance ($\EMD$) between $X$ and $Y$ is
\[ \EMD(X, Y) = \min_{\substack{\pi \colon [n]\to[n] \\ \text{bijection}}} \hspace{0.2cm} \sum_{i=1}^n \|x_i - y_{\pi(i)} \|_{p} \]
In other words, $\EMD(x, y)$ is the value of a min-cost flow in the complete bipartite geometric graph: the $n$ points of $X$ become vertices on the left-hand side and have uniform supply, the $n$ points of $Y$ become vertices on the right-hand side and have uniform demand, and every edge between $x_i$ and $y_j$ appears with cost $\|x_i - y_j\|_p$ and infinite capacity.

A natural algorithmic approach to compute $\EMD$ is to first construct the geometric graph and then utilize the best graph algorithm for min-cost flow. However, this graph has $n^2$ edges, resulting in an at least quadratic time algorithm. On the other hand, the input has size $O(nd)$, thus it is feasible that faster algorithms exist. In fact, there is a extensive line-of-work on approximation algorithms for $\EMD$ running in sub-quadratic time; to avoid constructing the complete graph, algorithms must exploit the geometric structure of $(\R^d,\ell_p)$~\cite{C02, IT03, AIK08,SA12,agarwal2014approximation, ANOY14, AFPVX17,KNP19,rohatgi2019conditional, ACRX22,AZ23, BR24, CCRW23, FL23}. 

\paragraph{The Spanner Approach.} A standard approach to avoid constructing the full geometric graph is to employ \textit{geometric spanners}, which are sparse graphs where the shortest path distances approximates the geometric distances. Since the runtime of min-cost flow graph algorithms scale with the number of edges~\cite{S17,ASZ20, L20,CKLPPS22}, sparser spanners will yield faster approximations for $\EMD$.\footnote{Using the almost-linear time algorithm for capacitated min-cost flow~\cite{CKLPPS22}, it suffices to construct a \emph{directed} spanner (i.e., a directed graph whose shortest paths approximate pairwise distances). If using near-linear, approximate, uncapacitated min-cost flow~\cite{S17, ASZ20, CKLPPS22, L20}, the spanner must be undirected.} 
In low-dimensional space, where $d = o(\log n)$, one can construct $(1+\eps$) approximate spanners with $n \cdot O(\eps^{-d})$ edges \cite{clarkson1987approximation, althofer1993sparse,ruppert1991approximating}. 
  However,  until recently it was not known whether sub-quadratic $(1+\eps)$ approximate spanners exist for high-dimensional spaces for $\eps < 2.4$.\footnote{For large constant approximations, it is known that $C$-approximate $n^{1+O(1/C)}$ sized spanners exist \cite{HIS13}. } This was addressed by~\cite{AZ23}, who designed such spanners (crucially using additional Steiner vertices) with $n^{2-\Omega(\eps^3)}$ undirected edges or $n^{2-\Omega(\eps^2)}$ directed edges. The main application of their work is a $(1+\eps)$-approximation algorithm for $\EMD$ over $(\R^d,\ell_p)$ running in time $n^{2 - \Omega(\eps^2)}$. To date, this is the only sub-quadratic $(1+\eps)$ approximation of high-dimensional $\EMD$.

  The fact that the runtimes of all state-of-the-art $\EMD$ approximations are directly tied to the best available spanner construction motivates our central question: can this dependency be circumvented? Specifically:

    \begin{quote} \centering 
    {\it
Do there exists $(1+\eps)$-approximation algorithms to high-dimensional $\EMD$ with running times faster than that of the best spanner constructions?}
\end{quote}

We answer the above question affirmatively, and thereby improve the runtime of best known $(1+\eps)$ approximation for $\EMD$. Our key technical contribution enabling this is a black-box reduction from $\EMD$ to the approximate closest pair (CP) problem. Here, the CP problems is: given two sets $X,Y \subseteq \R^d$, return a pair $(x,y) \in X \times Y$ such that $\|x-y\|_1 \leq (1+\eps) \min_{x' \in X, y' \in Y} \|x'-y'\|_1$.\footnote{For both $\EMD$ and CP, we focus on the $\ell_1$ norm. Via standard reductions \cite{johnson1982embedding}, one can embed $\ell_p$ for $p \in [1,2]$ into $\ell_1$ with $(1+\eps)$ error, making $\ell_1$ the more general space.} Closest pair is a cornerstone problem in computational geometry \cite{,,V15,ACW16}, and is intimately related to the nearest neighbor search   problem \cite{IM98,HIM12,AI08,A09}.
Crucially, CP is amenable to powerful algorithmic techniques outside of spanners, most notably the polynomial method~\cite{ACW16}. These techniques have yielded CP algorithms with runtimes faster than the best known spanner constructions.  Specifically, we prove the following:

\begin{theorem}[Main Theorem, informal version of \Cref{thm:main theorem formal}]\label{thm:main} 
If there exists an algorithm for $(1+\varepsilon)$-approximate closest pair on $(\mathbb{R}^d, \ell_1)$ with running time $T_{CP}(n, \varepsilon) = n^{2-\phi(\varepsilon)}$, then there exists an algorithm that computes a $(1+O(\varepsilon))$-approximation to $\EMD(A, B)$ over $(\mathbb{R}^d, \ell_p)$ for any $p \in [1,2]$ in time $n^{2-\Omega(\phi(\varepsilon))}$.
\end{theorem}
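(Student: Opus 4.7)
The plan is to build a multiplicative weights update (MWU) solver for the optimal transport LP whose only nontrivial subroutine per iteration is a single call to the assumed $(1+\eps)$-approximate closest pair oracle, and then amortize the cost over a small number of iterations. After reducing to $\ell_1$ via the standard $\ell_p \hookrightarrow \ell_1$ embedding with $(1+\eps)$ distortion, I would write EMD as its transportation LP and instantiate a Plotkin--Shmoys--Tardos-style MWU that maintains multiplicative weights on the $2n$ supply/demand constraints. With an initial coarse preprocessing to bound the LP width, MWU should converge to a $(1+\eps)$-approximate solution in $T = n^{o(1)}$ iterations, and each iteration needs only the following oracle: given the current log-weights, which induce dual potentials $u_i$ on $X$ and $v_j$ on $Y$, identify an edge $(i,j)$ approximately minimizing the reduced cost $\|x_i - y_j\|_1 - u_i - v_j$; the returned edge is used to route a small amount of flow, and the weights of its endpoints are updated multiplicatively.

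The second step is to implement the reduced-cost oracle using the assumed CP oracle. Since appending coordinates directly changes $\ell_1$ distances additively (producing $\|x_i-y_j\|_1 + |u_i|+|v_j|$ rather than $\|x_i-y_j\|_1 - u_i - v_j$), the encoding is not immediate. I would instead use multi-scale thresholding: for each scale $\tau$ on a geometric grid, bucket points by the values of their potentials, and invoke the CP oracle inside each bucket (on the original coordinates) to decide whether any pair achieves reduced cost at most $\tau$. A single MWU iteration then costs $\mathrm{polylog}(n) \cdot T_{CP}(n, \eps) = n^{2-\phi(\eps)+o(1)}$, and chaining $T$ such iterations gives total runtime $n^{2-\Omega(\phi(\eps))}$.

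The main obstacle --- indeed the technical heart of the paper, as the abstract stresses --- is that the MWU update rule modifies essentially every $u_i, v_j$ in every iteration, since the normalization step globally rescales weights. A naive implementation would have to rewrite all $2n$ augmented points each round, and, worse, force the CP data structure to be rebuilt from scratch on an $n^{o(1)}$-iterations loop, destroying any sub-quadratic savings. I would attempt to exploit the combinatorial structure of the updates: the log-potential vectors across $T$ rounds lie in a $T$-dimensional subspace spanned by indicator vectors of the endpoints of edges chosen so far, so one can carry cumulative potentials implicitly via a low-rank sketch. Batching updates into $\mathrm{poly}(1/\eps)$-sized epochs should then allow the CP oracle to be invoked on instances that differ by only $\mathrm{poly}(1/\eps)$ rank-one shifts, which can be absorbed into $\mathrm{poly}(1/\eps, \log n)$ auxiliary coordinates without inflating the CP dimension.

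Finally, the analysis must carefully compose (i) the CP oracle's $(1+\eps)$ error with the MWU tolerance to obtain the advertised $(1+O(\eps))$ guarantee, (ii) ensure that the augmented cost function remains a valid $\ell_1$ metric (via uniform additive shifts to enforce nonnegativity of the embedded potentials), and (iii) handle the mismatch between the single edge returned per CP call and the set of edges that MWU ideally wants to process together in each round. Of these, the most delicate is the implicit-weight bookkeeping in the previous paragraph; the MWU setup and the bucketed reduction to CP are relatively standard, but ensuring that an opaque CP black box can be composed with evolving weights \emph{without} ever instantiating them is what distinguishes this reduction from prior spanner-based approaches and yields the final runtime $n^{2-\Omega(\phi(\eps))}$.
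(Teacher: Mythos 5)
Your proposal takes a genuinely different route from the paper, and unfortunately it has a gap that I don't think can be patched without essentially adopting the paper's machinery.

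The most serious issue is the iteration count. You maintain multiplicative weights on the $2n$ supply/demand constraints and, per round, call an oracle that returns a single min-reduced-cost edge along which ``a small amount of flow'' is routed. After $T$ rounds, the accumulated flow has support of size at most $T$, but a (near-)optimal transport plan for two $n$-point sets requires support $\Omega(n)$. So you cannot hope to converge in $T = n^{o(1)}$ iterations, and with $T = \Omega(n)$ iterations each costing $n^{2-\phi}$ for the bucketed CP call, the total is $\Omega(n^{3-\phi})$, which is worse than brute force. This is not a cosmetic problem: a single-edge oracle is fundamentally incompatible with a sublinear-time MWU for EMD. The Plotkin--Shmoys--Tardos width bound you would need does not hold for this oracle, and Frank--Wolfe-style single-vertex steps have the same sparsity barrier.

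The paper avoids this by running MWU on the \emph{dual}, with a weight $w_{i,j,\sigma}$ on each of the $2n^2$ dual constraints $\frac{\sigma(\alpha_i-\beta_j)}{C_{ij}} \le 1$. Each of the $\sfR = \polylog(n)$ iterations updates \emph{all} $O(n)$ dual variables at once via an oracle $\textsc{Certify}(\lambda)$ (Figure~\ref{fig:cert-task}), and the invariant that $|\alpha_i-\beta_j|/C_{ij} \le \sfK = \polylog(n)$ gives the width bound that keeps $\sfR$ polylogarithmic. The algorithmic heart is then not a reduced-cost-minimization oracle but a \emph{sampling} oracle: implementing $\textsc{Certify}$ requires $\tilde O(n)$ samples from the softmax distribution $\lambda_{i,j,\sigma} \propto \exp(\eta\,\sigma P_{ij} D_{ij}/C_{ij})$ (Definition~\ref{def:disc-dist} and Lemma~\ref{lem:sep}), and the reduction to CP is a reduction for this sampling task, not for finding the single best pair. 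The sampling task is harder in one sense (it needs to see the whole tail of the distribution, not just the max) but more robust in another, and that robustness is what lets $(1+\eps)$-approximate CP suffice.

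Your bucketing-by-potentials idea is a coarse cousin of the paper's combinatorial-rectangle decomposition (Lemma~\ref{lem:from constant D to arbitrary D}): the paper also reduces to subproblems where the coefficient $K_{ij} = \alpha_i - \beta_j$ is essentially constant. But the paper then needs three further ingredients you do not have: (i) a consistent up/down rounding of distances via a random set $S$ so that rejection sampling succeeds (Section~\ref{sec:consistent rounding}, Lemma~\ref{lem:consistency lemma}), with a union bound over $2^{\tilde O(n)}$ possible rectangle configurations to break the circular dependence between $S$ and the rectangles; (ii) a generic boosting of CP to ``retrieve all $\tilde O(n^{1+\phi})$ approximately closest pairs'' (Lemma~\ref{lem:from CP to all close pair retrieval}), because the sampler needs a level set, not one pair; and (iii) a bi-Lipschitz (not merely one-sided) tree embedding obtained via post-hoc perturbation (Section~\ref{sec:oblivious routing}), which is what makes the MWU width bound $\sfK = \polylog(n)$ true in the first place. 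None of these appear in your outline, and the ``low-rank sketch of potentials'' you gesture at is not needed and does not substitute for them: the paper never stores the $n^2$ weights or a sketch of them, it only ever samples.

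In short, the dual-side MWU with a sampling oracle, plus the three technical pillars above, is what lets the paper get $\polylog$ iterations and subquadratic time. Your primal-side, single-edge MWU cannot.
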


Theorem \ref{thm:main} provides a fine-grained reduction, translating runtime improvements for CP directly into runtime improvements for $\EMD$, losing only a constant in $\eps$ and $\phi(\eps)$. The existence of such an efficient reduction is perhaps surprising. Namely, while exact closest pair admits a simple $O(n^2)$ algorithm, computing $\EMD$, which is a global matching optimization task, appears significantly more complex. Indeed, a natural greedy approach for $\EMD$---repeatedly finding the closest available pair $(a,b)$, matching them, and removing them---is known to yield very poor approximations~\cite{reingold1981greedy}. Despite the failure of this simple reduction,
Theorem \ref{thm:main} demonstrates that, via a considerably more involved reduction, approximate $\EMD$ is essentially no harder than approximate CP.

By plugging in the fastest known algorithm for approximate CP in high-dimensional $\ell_1$ by Alman, Chan, and Williams \cite{ACW16}, which runs in time $n^{2 - \tilde{\Omega}(\eps^{1/3})}$, we obtain:

\begin{corollary}\label{cor:main}
There exists an algorithm that, with high probability, computes a $(1+\varepsilon)$-approximation to $\EMD(A, B)$ between two $n$-point sets $A, B \subset (\mathbb{R}^d, \ell_1)$ in time $dn + n^{2-\Omega(\varepsilon^{1/3}/\log(1/\varepsilon))}$.
\end{corollary}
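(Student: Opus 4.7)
The plan is to obtain \Cref{cor:main} as a direct consequence of \Cref{thm:main} by substituting the closest-pair algorithm of Alman, Chan, and Williams \cite{ACW16}. First, I would recall their result in the form required: given two sets of $n$ points in $(\R^d, \ell_1)$, it returns a $(1+\eps)$-approximate closest pair with high probability in time $T_{CP}(n,\eps) = n^{2-\tilde\Omega(\eps^{1/3})}$. In the notation of \Cref{thm:main} this corresponds to taking $\phi(\eps) = \tilde\Omega(\eps^{1/3}) = \Omega(\eps^{1/3}/\log(1/\eps))$, where the second equality simply unpacks the $\tilde\Omega$ notation.

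Next, I would invoke \Cref{thm:main} with this choice of $\phi$, which immediately yields a $(1+O(\eps))$-approximation to $\EMD(A,B)$ over $(\R^d, \ell_1)$ running in time $n^{2-\Omega(\phi(\eps))} = n^{2-\Omega(\eps^{1/3}/\log(1/\eps))}$. To sharpen the $(1+O(\eps))$ guarantee to the $(1+\eps)$ guarantee stated in the corollary, I would instantiate the reduction with $\eps' = \eps/C$, where $C$ is the implicit constant in the $O(\cdot)$. Since $\phi(\eps/C) = \Omega(\eps^{1/3}/\log(1/\eps))$, the stated asymptotic running time is preserved. The additive $dn$ term accounts for reading the input and for the $O(d)$-cost of the distance evaluations that the reduction itself performs directly in $\R^d$; all remaining work is dominated by black-box calls to the closest-pair subroutine, whose total cost is subsumed into the $n^{2-\Omega(\eps^{1/3}/\log(1/\eps))}$ term.

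Since the entire argument is a parameter substitution on top of \Cref{thm:main}, there is no substantive obstacle. The only points requiring care are (i) the composition of approximation factors, handled by the rescaling $\eps \mapsto \eps/C$ above, (ii) a union bound over the high-probability success events of \Cref{thm:main} and of \cite{ACW16} across the (polynomially many) calls made by the reduction, and (iii) verifying that $\phi(\eps) = \Omega(\eps^{1/3}/\log(1/\eps))$ is sufficiently well-behaved that $\Omega(\phi(\eps/C))$ matches $\Omega(\eps^{1/3}/\log(1/\eps))$ up to constants; the latter is immediate from the explicit form of $\phi$, so the corollary follows.
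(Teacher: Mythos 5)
Your proposal is correct and follows the same route as the paper: the corollary is simply \Cref{thm:main theorem formal} instantiated with the closest-pair algorithm of \cite{ACW16} (so $\phi(\eps)=\tilde\Omega(\eps^{1/3})=\Omega(\eps^{1/3}/\log(1/\eps))$), with a constant rescaling of $\eps$ to absorb the $(1+O(\eps))$ into $(1+\eps)$, and a standard boosting of the constant success probability to high probability by $O(\log n)$ repetitions and taking a median. The paper treats this as immediate and gives no separate proof, so your write-up supplies exactly the missing parameter-substitution details.
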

This is a significant improvement over the $n^{2 - O(\eps^2)}$-time spanner-based algorithm of \cite{AZ23}, reducing the exponent of $\epsilon$ by a factor of $6$. Moreover, we note that it is perhaps unlikely that the complexity of Corollary \ref{cor:main} can be matched by the Spanners approach. Specifically, there is a non-trivial barrier to improving the sparsity of $(1+\eps)$ high-dimensional spanners beyond $n^{2-\eps}$; specifically, all such spanner constructions are crucially based on Locality Sensitive Hashing (LSH)~\cite{I01a,HIS13,AZ23}, and therefore only as efficient as the best algorithm for all-pairs nearest neighbor search via LSH, for which the best known constructions require $n^{2-O(\epsilon)}$ time. Thus, improving the spanners approach beyond $n^{2-O(\epsilon)}$ would likely require a major breakthrough in LSH algorithms. 

\subsection{High-Level Overview and Conceptual Contribution} \label{sec:high-level approach}
We now provide a high-level overview of our algorithm, and elaborate on additional details in the Technical Overview (Section \ref{sec:tech-overview}). 
Our approach to Theorem \ref{thm:main} is to solve the standard transshipment linear program (Figure \ref{fig:LP}). 
However, a direct approach faces an immediate obstacle: the primal linear program  has $O(n^2)$ variables, making it impossible to even represent a feasible solution within our target subquadratic runtime. While the dual program offers a potential advantage, as it has only $O(n)$ variables (allowing representation of a dual solution), it presents its own challenge: checking dual feasibility requires evaluating $O(n^2)$ constraints, which again is too expensive. Despite this strong restriction, we nonetheless demonstrate how we can implement the multiplicative weights update (MWU) framework to solve the dual in time sublinear in the number of constraints.

We emphasize that while many works have used the MWU framework to solve linear programs, including for EMD \cite{L20, Z23}, these works typically assume the ability to explicitly store and manipulate the MWU weights. To our knowledge, our work is the first to tackle the setting where, due to the strict subquadratic time requirement, the weights cannot be stored explicitly. Instead, our algorithm must continuously operate on the implicit geometric representation of the input points. This necessitates a careful integration of geometric techniques and optimization principles.

Specifically, in Section \ref{sec:template} we show that the MWU update step can be reduced to the task of generating $\tilde{O}(n)$ samples from a class of distributions $\lambda$ defined over over pairs $[n] \times [n]$. The probability $\lambda_{i,j}$ is proportional to $\exp\left(K_{i,j} \cdot \frac{1}{\|x_i - y_j\|_1}\right)$, where $K_{i,j} > 0$ is a value depends on the current dual variables. Again, we emphasize that we cannot \emph{explicitly} represent this distribution due to its size. Instead, we must rely solely 
 on the implicit geometric representation of the points. Our key insights for sampling efficiently from $\lambda$ are as follows:

\begin{enumerate}
    \item \textbf{Reducing to fixed $K$:} We first demonstrate that we can partition $X \times Y$ into combinatorial rectangles $\{X_1 \times Y_1,\dots,X_r \times Y_r\}$ so that $K_{i,j}$ is (roughly) the same for all $(i,j)$ in a rectangle. By sampling from each rectangle $X_i \times Y_j$ independently, this allows us to reduce to the case where $K_{i,j} = K$ is fixed for all pairs $i,j$ (Lemma \ref{lem:from constant D to arbitrary D}).

    \item \textbf{Ideal Sampling with Distance Gaps:} Hypothetically, suppose that every distance was of the form $(1+\eps)^t$ for an integer $t$, creating discrete gaps. Since $K$ is fixed (from step 1), $\lambda_{i,j} \propto \exp\left( K / \|x_i-y_j\|_1\right)$, so 
    we could use an $(1+\eps)$-approximate CP algorithm to iteratively find the pairs with smallest distances (thus highest probabilities $\lambda_{i,j}$). The CP algorithm can stop after it has reached a level $t$ such that there are at least $n^{1+\phi}$ pairs at distance $(1+\eps)^{t+1}$. After explicitly storing the pairs with distance at most $(1+\eps)^t$, we can use rejection sampling for the remaining pairs. Rejection sampling will be efficient ($\tilde{O}(n^{1-\phi})$ time per sample) because the distance gaps guarantee that a $1/n^{1-\phi}$ fraction of remaining points all share the \emph{next} highest probability level.
    
    \item \textbf{Challenge: Real Distances $\&$ Approximate CP:} Note that simply rounding the distances in MWU to the nearest power of $(1+\eps)$ fails. The core issue is that the approximate CP algorithm works on the original geometry of the points, not these conceptual rounded distances.
 Namely, a $(1+\eps)$ approximate CP algorithm may return pairs with (rounded) distance $(1+\eps)^{t+1}$ before it finds all pairs with (rounded) distance $(1+\eps)^t$. This is a critical issue since $\lambda_{i,j}$ depends exponentially on the distance (and $K$ can be $\polylog(n)$); thus, missing even one closer pair can unacceptably skew the sampling distribution.
 
    \item \textbf{Flawed Attempt: CP-Adaptive Rounding:} 
    To address this issue,
    we could try rounding distances adaptively based on the CP output within one iteration (e.g., round down if the pair is found by CP, round up otherwise).
    Thus, undiscovered points have lower probabilities by construction. This approach has a fundamental flaw: the rectangles $X_k \times  Y_k$ which are input to the CP algorithm will change after each MWU iteration; in particular, in one iteration the CP algorithm may find $(i,j)$ and round it down, and on the next iteration the CP algorithm may not find the pair. 
    Thus, we must decide on a fixed rounding scheme up front.
    
     \item \textbf{Solution: Fixed Randomized Rounding:} 
     Our key technique is showing the success of a randomized rounding strategy. Specifically, we sample a set  $S \subset X \times Y$ of size $n^{2-\phi/2}$; we then round distances down for pairs $(i,j) \in S$, and round distances up otherwise. 

     \begin{itemize}
         \item \textbf{Why it works:} For any fixed set $L$ of $n^{1+\phi}$ pairs, the intersection $|S \cap L|$ will be sufficiently large (at least $n^{1+\phi/2}$) with high probability. If $L$ is the set of pairs with distance between $(1+\eps)^t$ and $(1+\eps)^{t+1}$, then by rounding down the distances in $S \cap L$ we ensure there's a sufficiently large pool of points with the ``maximal'' probability among those not explicitly found by CP, making rejection sampling efficient (needed in step 2). 
         \item \textbf{Addressing Critical Dependencies:}  This seems promising, except for a potentially fatal fault:  $S$ influences $\lambda$, which influences the samples from $\lambda$, which influence the dual variables and thus the rectangles $X_k \times Y_k$ that are constructed in the next iteration. Thus, the samples $S$ are not independent of the sets $L$ (as defined by the pairs at a given distance inside of a rectangle) we need them to intersect with. However, the crucial observation is that the rectangles depend \textit{only} on the $\tilde{O}(n)$ samples from $\lambda$ on the previous step, not all $n^2$ probabilities affected by $S$. Since there are only $T = \polylog(n)$ iterations, the number of possible sample histories (and thus possible rectangle partitions) is at most $2^{T \cdot \tilde{O}(n)}$. We can thus union bound over all these possibilities to show that our single, fixed random sample $S$ works with high probability for any set $L$ arising within any rectangle that the algorithm actually encounters (details in Section \ref{sec:consistent rounding}).
         
     \end{itemize}

\end{enumerate}

\section{Technical Overview} \label{sec:tech-overview}
Earlier in Section \ref{sec:high-level approach} we gave a bird-eye view of our algorithmic approach. In this section, we sketch the techniques used to prove \Cref{thm:main} in further detail. We first devise a template algorithm that, if implemented naively, runs in quadratic time. Our main technical contribution will be to show how this template algorithm can be implemented via a small number calls to a closest pair algorithm. We will then exploit the existence of fast algorithms for approximate closest pair \cite{ACW16}.

Fix two input sets
 $X = \{x_1 \dots x_n\}, Y = \{y_1 \dots y_n\} \subseteq ([1, \Phi]^d, \ell_1)$. We extend the definition of $\EMD$ 
 to arbitrary supply vectors and denote with $\emd(\mu, \nu)$ the cost of minimum uncapacited flow routing the supply $\mu_1 \dots \mu_n \in [0,1]$ located at $x_1 \dots x_n$ to demands $\nu_1 \dots \nu_n\in [0,1]$ located at $y_1 \dots y_n$. In particular, the $\EMD$ between $X$ and $Y$ can be denoted by $\emd(\ind, \ind)$, or simply $\emd$.

 In Section \ref{sec:templateintro}, we will describe the template algorithm for approximating $\EMD$, based on solving the dual LP via multiplicative weights update (MWU). The key step to implementing the update step in the MWU procedure is to sample from the distributions $\lambda$ described inSection \ref{sec:high-level approach}. 
 Then, in Section \ref{sec:subquadratic-intro}, we describe our scheme to sample from this class of distributions in subquadratic time.

\subsection{Template Algorithm for Approximating $\EMD$}\label{sec:templateintro}

Our starting point for approximating $\EMD$ will be to solve the standard linear program for the bipartite min-cost flow problem, shown below in \Cref{fig:LP}.

\begin{figure}[H]
\begin{framed}
\begin{minipage}[t]{0.1\textwidth}
\vspace{1mm}
\textbf{Primal}
\begin{flalign*}
&\text{Minimize} &\sum_{i, j \in [n]} \bgamma_{i, j} \cdot ||x_i - y_j|| \\
&\text{subject to} &\sum_{j\in [n]} \bgamma_{i, j} = 1 \quad \forall i \in [n] \\
& &\sum_{i \in [n]} \bgamma_{i, j} = 1 \quad \forall j \in [n] \\
& & \bgamma_{i, j} \geq 0 \quad \forall i,j \in [n]
\end{flalign*}
\end{minipage}
\hfill\vline\hfill
\begin{minipage}[t]{0.45\textwidth}
\vspace{1mm}
\textbf{Dual}
\begin{align*}
&\text{Maximize} &\sum_{i \in [n]} \balpha_i - \bbeta_i &\\
&\text{subject to} & \frac{\balpha_i - \bbeta_j}{||x_i - y_j||} \leq 1 \quad & \forall i, j \in [n]
\end{align*}
\end{minipage}
\end{framed}
\caption{Linear program for $\EMD$.}
\label{fig:LP}
\end{figure}

Our goal will be to approximate the optimal objective value to this LP in subquadratic time. However, notice that the primal formulation has $n^2$ variables, in the form of the flows $\{\bgamma_{ij}\}_{i, j \in [n]}$. Thus, we cannot even write down a solution to the primal, let alone directly implement iterative update methods. On the other hand, the dual has only $2n$ variables $\{\balpha_i, \bbeta_i\}_{i \in [n]}$.

Thus, in principle it could possible to compute a dual solution $(\balpha, \bbeta)$ in subquadratic time, since its size is only $O(n)$. Of course, the challenge is now that the dual has $n^2$ constraints, thus \emph{verifying} whether a given solution $(\balpha, \bbeta)$ is feasible is likely impossible in subquadratic time. However, in what follows we will show that we can approximately verify feasibility efficiently.

For now, we first we describe an ``inefficient`` algorithm that computes a $(1+\varepsilon)$-approximation of 
$\emd$
in time $\tilde O(n^2)$.

Let $C_{ij} := ||x_i - y_j||$. For $t > 0$, consider the following polytope:
\[ 
\Gamma_t = \left\{ (\balpha, \bbeta) \in \R^{n} \times \R^n  : \frac{1}{t} \cdot \left( \sum_{i=1}^n \balpha_i - \sum_{j=1}^n \bbeta_j \right) - \max_{i,j} \frac{|\balpha_i - \bbeta_j|}{C_{ij}} > 0 \right\}. 
\]
By duality, it is straightforward to prove that 
$\emd > t$
if and only if $\Gamma_t \neq \emptyset$.
By performing an search over $t$ in powers of $(1+\eps)$, the problem of approximating 
$\emd$
can be solved by leveraging a procedure that, if $\Gamma_{(1 + \varepsilon)  t} \neq \emptyset$, returns a certificate $(\balpha, \bbeta) \in \Gamma_t$. 
To this end, we employ the framework of multiplicative weights update (MWU), described in the following. 

\paragraph{Multiplicative Weights Update (MWU).}
Observe that the polytope $\Gamma_t$ is defined by $2n^2$ constraints (each absolute value constraint corresponds to two linear constraints), which we index with triplets $(i, j, \sigma) \in [n] \times [n] \times \{\pm 1\}$.
We define a set of positive weights $w_{i, j, \sigma}$, one for each constraints, and initialize all of them to $1$.
This induces a distribution over constraints $\lambda$ such that $\lambda_{i, j, \sigma}$ is proportional to $w_{i, j, \sigma}$. Importantly, again, we note that we cannot explicitly maintain the distribution $\lambda$, as it is a $O(n^2)$ sized object. Nevertheless, in what follows we show that it will suffice to sample from $\lambda$ to compute the MWU update.

The classic MWU procedure operates in rounds: at each round, we define a ``special`` constraint (\ref{eq:special constraint}) by taking a $\lambda$-weighted combination of the constraints 
\begin{equation} \label{eq:special constraint}
\frac 1 t \cdot \left( \sum_{i=1}^n \balpha_i - \sum_{j=1}^n \bbeta_j\right) - \sum_{i=1}^n \sum_{j=1}^n \sum_{\sigma \in \{-1,1\}} \lambda_{i,j,\sigma} \frac{\sigma(\balpha_i - \bbeta_j)}{C_{ij}} > 0,
\end{equation}
then, we design an oracle $\textsc{Certify}(\lambda)$ that should return $(\balpha, \bbeta)$ such that: 
\begin{enumerate}[\textbf{Invariant} (I):]
    \item $(\balpha, \bbeta)$ satisfies the special constraint in \Cref{eq:special constraint}. \label{item:special constraint invariant}
    \item $\bigg |\frac 1 t \left( \sum_{\ell=1}^n \balpha_\ell - \sum_{\ell=1}^n \bbeta_\ell \right)  \bigg|, \bigg|  \frac{\balpha_i - \bbeta_j}{C_{ij}} \bigg| \leq \polylog(n)$ for all $i, j \in [n]$.
    \label{item:invariant linfinity norm}
\end{enumerate}
Then, the weights are updated to penalize constraints that are violated by $(\balpha, \bbeta)$. 
 
\begin{equation} \label{eq:mwu update rule}
w_{i, j, \sigma} \leftarrow w_{i, j, \sigma} \cdot \exp\left(\eta \cdot \left( \frac 1 t \cdot \left( \sum_{i=1}^n \balpha_i - \sum_{j=1}^n \bbeta_j\right) -  \frac{\sigma\cdot (\balpha_i - \bbeta_j)}{C_{ij}} \right) \right)
\quad \text{ for constant } \eta > 0.
\end{equation}

After $\sfR = \polylog(n)$ rounds, we show that if the two invariants are always satisfied by the output of the $\textsc{Certify}(\lambda)$ procedure, then the average $(\bar \balpha, \bar \bbeta)$ of the dual variables $(\balpha^{(1)}, \bbeta^{(1)})\allowdisplaybreaks, \dots , \allowbreak (\balpha^{(\sfR)}, \bbeta^{(\sfR)})$ returned at each round satisfies the constraints in $\Gamma_{(1- 2\varepsilon) t}$ (Lemma \ref{lem:mwu}).
We are left to implement the oracle $\textsc{Certify}(\lambda)$.

\paragraph{Probabilistic tree embedding.}
Before we implement $\textsc{Certify}(\lambda)$, we first recall a fundamental tool in high-dimensional geometry: the probabilistic tree embedding. Specifically, this technology allows us to embed points in $(\R^d, \ell_1)$ into a tree metric with polylogarithmic \emph{expected} distortion \cite{AIK08}.
Concretely, given the set of points $X \cup Y \subseteq ([1, \Phi]^d, \ell_1)$, there exists a distribution over trees $\calT$ satisfying 1--5 below: 
\begin{enumerate}
    \item Each leaf of $\calT$ corresponds to an element of $X \cup Y$.
    \item $\calT$ has depth $\log \Phi$.
    \item Every edge connecting a node at level $\ell$ with one at level $\ell + 1$ has weight $\frac \Phi {2^\ell}$.
    \item $||x_i - y_j|| \leq \Theta(\log n) \cdot \sfd_\calT(x_i, y_j)$, with  probability $1-1/\poly(n)$ (where $\sfd_\calT$ is the distance in the tree). \label{item:whp inequality}
    \item $\Ex[\sfd_\calT(x_i, y_j)] \leq \Theta(\log \Phi) \cdot ||x_i - y_j||$. \label{item:expected inequality}
\end{enumerate}
\begin{framed}\noindent
We remark that probabilistic tree embeddings do \emph{not} satisfy Bi-Lipschitz distortion guarantees. Namely, we \emph{cannot} guarantee that with high probability:
\begin{equation}\label{item:better guarantee}
    \sfd_\calT(x_i, y_j) \leq \Theta(\log \Phi) \cdot ||x_i - y_j||
\end{equation}
This deficit will become relevant shortly.
\end{framed}

For the purposes of approximating $\EMD$ to a $\polylog(n,\Phi)$ factor, however, it turns out that expected $\polylog$ distortion suffices.
Specifically, it is not difficult to show that the $\EMD$ in the tree metric, $\emd_\calT(\mu, \nu)$, enjoys the same distortion guarantees: $\emd_\calT(\mu, \nu) \geq \Theta(1/\log n) \cdot \emd(\mu, \nu)$; $\Ex[\emd_\calT(\mu, \nu)] \leq \Theta(\log \Phi) \cdot \emd(\mu, \nu)$.
The advantage of this embedding is that now the $\EMD$ within the tree metric is relatively simple to compute, as it can be written exactly as:

\begin{equation} \label{eq:emd on tree embeddingss}
\emd_\calT(\mu, \nu) = \sum_{v \in \calT}  \frac{\Phi}{2^\ell} \cdot \bigg|\sum_{x_i \in v} \mu_i -  \sum_{y_i \in v} \nu_i \bigg|,
\end{equation}
where $x \in v$ means that $x$ belong to the subtree rooted at the internal node $v$. The formula (\ref{eq:emd on tree embeddingss}) will be useful in our implementation of $\textsc{Certify}(\lambda)$ below. Specifically, we remark that while alternative methods may exist for designing $\polylog$ ``rough'' approximations to $\EMD$, the tree-embedding formulation, specifically (\ref{eq:emd on tree embeddingss}), will be critical in allowing $\emd_\calT(\mu, \nu)$ to be estimated via sampling.

\paragraph{$\textsc{Certify}(\lambda)$ implementation.}
We now describe an \emph{inefficient} implementation of the $\textsc{Certify}(\lambda)$ procedure, which will run in quadratic time; we will show thereafter how this procedure can be approximated by a subquadratic sampling-based algorithm.

To construct  feasible variables $(\balpha, \bbeta)$, we first define the following supply and demand vectors:
\[
\mu_i = t \sum_{j=1}^n \sum_{\sigma \in \{-1,1\}} \dfrac{\lambda_{i,j,\sigma} \cdot \sigma}{C_{ij}} \qquad \text{and}\qquad \nu_j = t \sum_{i=1}^n \sum_{\sigma \in \{-1,1\}} \dfrac{\lambda_{i,j,\sigma} \cdot \sigma}{C_{ij}}.
\]

Observe that $\gamma_{ij} = t (\lambda_{i,j,1} - \lambda_{i,j,-1}) / C_{ij}$ is a feasible flow
for $\EMD(\mu, \nu)$ 
with cost at most $t$. Thus, $\EMD(\mu, \nu) \leq t$.
Moreover, if we are promised that $\Gamma_{(1+\varepsilon) t} \neq \emptyset$, 
then $\emd(\ind, \ind) > (1+\varepsilon) t$. Hence, using a ``triangle inequality'' for $\EMD$:
\[
\Theta(\log n) \cdot \emd_{\calT}(\ind - \mu, \ind - \nu)  \geq \emd(\ind - \mu, \ind - \nu)  \geq \emd(\ind, \ind) - \emd(\mu, \nu) > \varepsilon \cdot t.
\]
By \Cref{eq:emd on tree embeddingss} applied to $(\ind - \mu, \ind - \nu)$, there exists a level $T_\ell \subseteq \calT$ of the tree $\calT$ such that
\begin{align} \label{eq:heavy level}
 \sum_{v \in T_{\ell}} \frac{\Phi}{2^{\ell}} \left| |X_v| - |Y_v| - t \sum_{\substack{i \in [n] \\ x_i \in v}} \sum_{j=1}^n \sum_{\sigma \in \{-1,1\}} \dfrac{\lambda_{i,j,\sigma} \cdot \sigma}{C_{ij}} + t \sum_{\substack{j \in [n] \\ y_j \in v}} \sum_{i=1}^n \sum_{\sigma \in \{-1,1\}} \frac{\lambda_{i,j,\sigma} \cdot \sigma}{C_{ij}} \right| = \tilde \Omega(\varepsilon t).
 \end{align}
 Set $\balpha^*_i = \bbeta^*_j = \frac{\Phi}{2^\ell} \cdot s_v$ for $x_i, y_j \in v$, where
 $s_v \in \{-1,1\}$ denote the sign which realizes the absolute value on the left-hand side of \Cref{eq:heavy level}.
 With some additional work, we can rearrange the terms in \Cref{eq:heavy level} to obtain
 \[
\sum_{i=1}^n \balpha_i^* - \sum_{j=1}^n \bbeta_j^* - t \sum_{i=1}^n \sum_{j=1}^n \sum_{\sigma \in \{-1,1\}} \lambda_{i,j,\sigma} \cdot \sigma \cdot \dfrac{\balpha_i^* - \bbeta_j^*}{C_{ij}} = \tilde \Omega(\varepsilon t),
 \]
so $(\balpha^*, \bbeta^*)$ satisfies invariant (\ref{item:special constraint invariant}).

\paragraph{Challenge: satisfying invariant (\ref{item:invariant linfinity norm}) requires Bi-Lipschitz tree embeddings.}
Notice that by construction $\balpha^*_i - \balpha^*_j \neq 0$ only if $x_i$ and $y_j$ belong to different subtrees at level $\ell$.
If Property \ref{item:better guarantee} held for the tree $\calT$, then $\balpha^*_i - \balpha^*_j \neq 0$ would imply $C_{ij} = \tilde \Omega(\frac{\Phi}{2^\ell})$, and, in turn, $(\balpha^*, \bbeta^*)$ would satisfy invariant (\ref{item:invariant linfinity norm}).
Unfortunately, there exists a set of points that cannot be embedded into a tree satisfying both properties \ref{item:whp inequality} and \ref{item:better guarantee} for all pairs\footnote{For example the set of $n$ evenly spaced points on a circumference.}. Thus, it will not be possible to achieve invariant (\ref{item:invariant linfinity norm}) without a \textit{Bi-Lipschitz} guarantee on the distortion of points. Moreover, satisfying invariant (\ref{item:invariant linfinity norm}) is critical to the success of the MWU procedure.

\paragraph{Obtaining Bi-Lipschitz guarantees via post-hoc perturbation.}
We address this challenge by designing a post-hoc perturbation $X' \cup Y'$ of the points in $X\cup Y$ such that:
\begin{itemize}
    \item On $X' \cup Y'$, $\calT$ satisfies properties 1--5, as well as the stronger property in Equation (\ref{item:better guarantee}).
    \item For any fixed $\mu, \nu$, $\emd_{X', Y'}(\mu, \nu) = (1\pm \varepsilon)\cdot \emd_{X, Y}(\mu, \nu)$.
\end{itemize}

Importantly, the perturbed points $X', Y'$ \emph{depends} on the realization of the random tree $\calT$.
Specifically, for each node $\bv \in \calT$ at level $\ell$ we draw a random $\bz_{\bv} \in [0, \varepsilon \cdot \Phi / 2^\ell ]^{O(\log n)}$.
For each point $\bx \in X \cup Y$ we compute the root-to-leaf path $\bv_1 \dots \bv_h = \bx$ and add the vector $\sum_{i=1}^h \bz_i$ to $\bx$.
Any two points whose least common ancestor is at level $\ell$ will have their distance increased by $\approx \frac{\Phi}{2^\ell}$.
At the same time, the perturbation cannot make $\emd_{X, Y}(\mu, \nu)$ much larger than $\emd_{\calT}(\mu, \nu)$, which, in turn, is at most $\Theta(\log \Phi) \cdot \emd_{X, Y}(\mu, \nu)$.

\subsection{Implementing the Template in Subquadratic Time}\label{sec:subquadratic-intro}
We now demonstrate how the above algorithmic template can be efficiently implemented. In particular, we first demonstrate how the the oracle $\textsc{Certify}(\lambda)$ can be implemented using $\tilde{O}(n)$ samples from $\lambda$, rather than requiring an explicit representation of all $2n^2$ probabilities. Secondly, we reduce the problem of sampling from $\lambda$ to the problem of finding an approximate closest pair.

\paragraph{Implementing $\certify(\lambda)$ with samples.}
First, we recall how the previously described $\textsc{Certify}(\lambda)$ procedure constructs $(\balpha^*, \bbeta^*)$. Let
\begin{gather} \label{eq:estimate the sign}
\calQ_v(\lambda) := \frac{\Phi}{2^\ell} \cdot \left(|X_v| - |Y_v| - t \sum_{\substack{i \in [n] \\ x_i \in v}} \sum_{j=1}^n \sum_{\sigma \in \{-1,1\}} \dfrac{\lambda_{i,j,\sigma} \cdot \sigma}{C_{ij}} + t \sum_{\substack{j \in [n] \\ y_j \in v}} \sum_{i=1}^n \sum_{\sigma \in \{-1,1\}} \frac{\lambda_{i,j,\sigma} \cdot \sigma}{C_{ij}} \right)
\end{gather}
First, we find a level $T_\ell \subseteq \calT$ in the probabilistic tree satisfying \Cref{eq:heavy level}, which can be rewritten as $\sum_{v \in T_\ell} |\calQ_v(\lambda)| = \tilde \Omega(\varepsilon t)$.
Then, define $\balpha^*_i = \bbeta^*_j = \frac{\Phi}{2^\ell} \cdot \text{sign}\left( \calQ_v(\lambda) \right)$ for all $x_i, y_j \in v \in T_\ell$.

To implement $\textsc{Certify}(\lambda)$ with samples, we draw $s = \tilde O(n)$ i.i.d. samples $(i, j, \sigma) \sim \lambda$, define $\hat \lambda$ as the empirical distribution of these samples, and run $\textsc{Certify}(\hat \lambda)$.
We can verify that, if we have concentration of the form $\calQ_v(\hat \lambda) \approx \calQ_v(\lambda)$, then $(\balpha^*, \bbeta^*)$ satisfies invariants \ref{item:special constraint invariant} and \ref{item:invariant linfinity norm}. Thus, the correctness of $\certify(\hat\lambda)$ boils down to bounding the variance of $\calQ_v(\hat \lambda)$ for one sample 
$(\bi_S, \bj_S, \bsigma_S) \sim \lambda$, which we do in the following.

The stochastic part of $\calQ_v(\hat \lambda)$ can be written as
\begin{equation} \label{eq:stochastic term}
t \cdot \sum_{\substack{i,j \in [n] \\ \sigma \in \{-1,1\}}} \sigma \cdot \left(\ind \{ x_{i} \in v \} - \ind\{ y_{j} \in v \} \right) \cdot \ind\{(\bi_S, \bj_S, \bsigma_S) = (i,j,\sigma)\} \cdot\frac{\Phi / 2^\ell}{C_{ij}}
\end{equation}
If $x_i, y_j \in v$ (or $x_i \notin v, y_j \notin v$), the terms in the sum
corresponding to $i,j$  in \Cref{eq:stochastic term} cancel.
Else, if $x_i \in u$ and $y_j \in v$ for two distinct nodes $u, v \in T_\ell$, then by property \ref{item:better guarantee} of $\calT$, we have
\[
\frac{\Phi / 2^\ell}{C_{ij}}  =  \frac{\Theta\left( \sfd_\calT(x_i, y_j)\right)}{C_{ij}} = O(\log \Phi).
\]
Each summand in \Cref{eq:stochastic term} has magnitude at most $O(t \cdot \log \Phi)$. Thus, $s = \tilde O(n / \eps^2)$ samples suffice to shrink $\Varx{\calQ_v}$ to $\frac{\eps^2 t^2}{n} \cdot  \Prx_{(\bf i,\bf j) \sim \lambda}[{\bf i} \in v \text{ or } {\bf j} \in v]$, which suffice for our analysis.

\paragraph{Interlude: sampling seems as hard as closest pair.}
In this interlude we suggest why closest pair seems the right tool for sampling from $\lambda$.
Let $(\balpha^{(1)}, \bbeta^{(1)}) \dots (\balpha^{(r)}, \bbeta^{(r)})$ be the dual solutions returned on the first $r$ rounds of MWU. Let $\lambda$ be the distribution proportional to the weights $w_{i, j, \sigma}$ updated $r$ times as in \Cref{eq:mwu update rule}. Then,
\begin{equation} \label{eq:lambda definition}
\lambda_{i, j, \sigma} \propto \exp\left(\frac{\sigma\cdot \left( \sum_{\ell = 1}^r \balpha^{(\ell)}_i - \bbeta^{(\ell)}_j \right)}{C_{ij}} \right).
\end{equation}
According to invariant (\ref{item:invariant linfinity norm}), we could have $|\balpha^{(\ell)}_i -\bbeta^{(\ell)}_j| \approx C_{ij} \cdot \polylog(n)$.
Suppose that $C_{ij}=1$ for all $i, j \in [n]$ besides $C_{i^*, j^*} = 1 - \varepsilon$ and that, for all $i, j \in [n]$, $\sum_{\ell = 1}^r \balpha^{(\ell)}_i - \bbeta^{(\ell)}_j = K = \polylog (n)$. Then, $\lambda_{i, j , \sigma} \propto \exp(\sigma \cdot K / C_{ij})$ is overwhelmingly concentrated on $(i^*, j^*, +1)$.
Thus, under the invariants proven so far, sampling from $\lambda$ could be as hard as finding the closest pair.
In the following, we prove that in fact sampling from $\lambda$ can be reduced to closest pair.

\subsection{Sampling from $\lambda$ via Closest Pair}
 
\paragraph{Roadmap.}
Sampling from $\lambda$ as defined in \Cref{eq:lambda definition} presents multiple challenges, which we decouple as follows.
To begin, suppose we instead were tasked with solving a restricted version of the problem:
\begin{enumerate}[\underline{Restriction} (1)]
    \item We assume that the term $\sum_{\ell = 1}^r \balpha^{(\ell)}_i - \bbeta^{(\ell)}_j$ in \Cref{eq:lambda definition} does not depend on $(i, j)$. Namely, we sample from $\lambda_{i,j, \sigma} \propto \exp(\sigma \cdot K / C_{ij})$.
    \label{item:restriction coefficient}
    \item We assume that our closest pair algorithm returns all ``close'' pairs, where a pair is close if its distance is one of the smallest $O(n^{1+\phi})$ distances.
    \label{item:restriction all close pairs}
\end{enumerate}

We will show that, with some effort,  both restrictions can ultimately be lifted.

\paragraph{Sampling from an easier distribution.}
First, note that to sample from $\lambda_{i,j, \sigma} \propto \exp(\sigma \cdot K / C_{ij})$, it is sufficient to sample a variable $\xi_{ij} \propto \exp(|K| / C_{ij})$ and thereafter apply a rejection sampling step to sample the sign $\sigma \in \{-1,1\}$, which gives a constant runtime overhead.
Interestingly, for very large $K$, $\xi$ concentrates on the closest pair, thus sampling from $\xi$ for arbitrary $K$ is as hard as \emph{exact}\footnote{
Even if we assume a realistic bound on $K$, e.g., $|K|= \polylog(n)$, sampling from $\xi$ reduces to $(1 / \polylog(n))$-approximate CP, that cannot be solved faster than $n^{2- o(1)}$ \cite{rubinstein2018hardness}.
}
CP, and the latter requires quadratic time \cite{williams2018difference}. Intuitively, sampling form $\xi$ requires to find a needle in a haystack.
However, $(1+\varepsilon)$-approximate $\EMD$ is resilient to small perturbation of distances, which we can leverage to obviate this issue.
Consider a \emph{rounding} $\tilde C_{ij}$ of the distances in $C_{ij}$ to the closest power of $1+\varepsilon$, either up or down.
Clearly, $\emd_{C} \approx \emd_{\tilde C}$, so, to approximate $\EMD$, it suffices to fix a rounding $\tilde C_{ij}$ \emph{upfront} and then run our whole algorithm with respect to $\tilde C_{ij}$.
After rounding $C_{ij}$, in what follows we will see that a $(1+\varepsilon)$-approximate CP algorithm is sufficient to sample from $\xi$.

Define the ``$t$-prefix set'' $\cL_t$ as the set of pairs at distance at most $(1+\varepsilon)^t$. Now suppose we had a $(n^{2-\Omega(\phi)})$-time procedure $\allclosepairs(t)$ that returns $\cL_t$ as long as $|\cL_{t+1}| = O(n^{1+\phi})$.

By sampling uniformly random pairs $(i, j) \in [n]\times [n]$ and checking if $(i, j) \in \cL_r$ (for every $r \geq 0$), we can easily find $t$ such that: (i) $|\cL_{t+1}| = O(n^{1+\phi})$

(ii) $\cL_{t+2} \setminus \cL_t$ is large, i.e., $|\cL_{t+2} \setminus \cL_t | = \tilde \Omega(n^{1+\phi})$.

We then define the distances $\tilde C$ as a function of a set $S \subseteq [n] \times [n]$ (to be defined in the following), by rounding down the distances for all pairs $(i,j)$ in $S$, and rounding up every other pair. So long as we have $|S| = n^{2 -\Omega(\phi)}$ we can afford to compute the distances $\tilde C$ upfront and
store them explicitly.

Our sampling algorithm will partition the pairs $A \times B$ in two sets $\cE$ and $\cI$, and develop different procedures to sample from $\xi|_\cE$ and $\xi|_\cI$.
Then, to sample from $\xi$ we just need to estimate the total weight $\sum_{(i, j) \in Z} \exp(|K| / \tilde C_{ij})$ for $Z = \cE, \cI$.

First, we define $\calE$ as the set of pairs handled explicitly $\calE = \cL_t \cup (S\cap \cL_{t+1})$, and define $\calI$ to be the set of pairs handled implicitly, i.e.  $\calI = [n]^2 \setminus \calE$.
Sampling from $ \xi$ restricted to $\calE$ can be done by explicitly computing individual weights $\exp(|K| / C_{ij})$.
To sample from $ \xi$ restricted to $\calI$, we sample a uniform pair  $(i, j) \in \calI$ and then keep it with probability 
\begin{equation}\label{eqn:rejection-sample-tech-overview}
    \exp(K / \tilde C_{ij}) / \exp(K / (1+\varepsilon)^{t+1}),
\end{equation}
and otherwise we reject it. Since all pairs $(i, j) \in \calI$ satisfy $\tilde C_{ij} \geq (1+\varepsilon)^{t+1}$, the quantity (\ref{eqn:rejection-sample-tech-overview}) is a valid probability, and conditioned on not rejecting a pair it results in the correct distribution. 
Moreover,  for at least
 
$|S \cap \cL_{t+2}| + |\cL_{t+1} \setminus S|$ of these
pairs, namely the points for which $\tilde C_{ij} = (1+\eps)^{t+1}$, the quantity (\ref{eqn:rejection-sample-tech-overview}) is in fact equal to $1$, and therefore these pairs are never rejected. 

Thus if we have
\begin{equation} \label{eq:low rejection overhead}
|S \cap \cL_{t+2}| + |\cL_{t+1} \setminus S| = n^{1 +\Omega(\phi)},
\end{equation}
then the above rejection sampling method successfully generates a sample after $n^{1-\Omega(\phi)}$ attempts. So, we can generate the desired $\tilde{O}(n)$ samples from $\xi$ using $n^{2-\Omega(\phi)}$ time.
Finally, assuming that the set $S$ satisfies $|S| = n^{2-\Omega(\phi)}$ and \Cref{eq:low rejection overhead}, we can implement the $\certify(\lambda)$ oracle in $n^{2-\Omega(\phi)}$ time.

\begin{question} \label{que:consistent rounding}
Can we define $S$ (and, thus, the rounding $\tilde C$) in advance so that it satisfies \Cref{eq:low rejection overhead} every time we call this subroutine?    
\end{question}
To understand our solution to Question \ref{que:consistent rounding}, we must first delve into how we lift Restrictions (\ref{item:restriction coefficient}) and (\ref{item:restriction all close pairs}).

\paragraph{Lifting Restriction (\ref{item:restriction coefficient}): Reducing $\lambda$ to $\propto \exp(\sigma \cdot K / C_{ij})$.}
We now show how to reduce the problem of sampling from $\lambda$ as in \Cref{eq:lambda definition} to that of sampling from a distribution $\propto \exp(\sigma \cdot K / C_{ij})$.

Define $\bar \bbeta:= \sum_{\ell =1}^r \balpha_\ell$,  $\bar \balpha := \sum_{\ell =1}^r \bbeta\ell$.
Notice that the coefficient $\bar \balpha_i - \bar\bbeta_j$ is highly structured. We can leverage this structure to reduce to the case of identical coefficients $\bar \balpha_i - \bar\bbeta_j = K$. 
First, we round $\bar \balpha_i - \bar\bbeta_j$ to the closest power of $(1+ \chi)$, for some small\footnote{
Taking $\chi = \frac{1}{\polylog (n)}$ is sufficient to ensure that we sample from a distribution close enough in total variational (TV) distance to $\lambda$.
} $\chi$. Suppose that $\bar\balpha_i$ and $\bar\bbeta_i$ are sorted in increasing order.

Consider the set $Q_t \subseteq [n] \times [n]$ of pairs $(i, j)$ such that $\bar \balpha_i - \bar\bbeta_j$ is rounded to $(1+\chi)^t$.
$Q_t$ is also highly structured: indeed its boundaries are described by \emph{monotonic} sequences $(1, j_1) \dots (n, j_n)$ where $j_1 \leq \dots \leq j_n$.
As a consequence\footnote{
Essentially, this is the same approximation argument used to show that monotonic functions are Riemann-integrable.
}, we can tile each $Q_t$ with small combinatorial rectangles of size $s \times s$, with $s = \sqrt[4]{n}$, leaving only $n^{2 - \Omega(1)}$ uncovered pairs in total.

On each combinatorial rectangle, sampling from $\lambda$ corresponds to sampling from a distribution proportional to $\exp(\sigma \cdot K  / C_{ij})$. 
As for the uncovered pairs, we can sample from them explicitly since there are at most $n^{2 - \Omega(1)}$ many of them.
Then, it is sufficient to estimate the normalization constant (which can be done by sampling):
\[
\sum_{(i, j, \sigma) \in \calC \times \{\pm 1\}} \exp(\sigma \cdot K  / C_{ij})
\]
for each combinatorial rectangle $\calC$. To sample, we first sample a combinatorial rectangle proportionally to its normalization constant, and then sample a pair within the sampled rectangle.
If sampling from a single $s\times s$ combinatorial rectangle requires time $s^{2-\Omega(\phi)}$, then the overall running time of this reduction is $\frac {n^2}{s^2} \cdot s^{2-\Omega(\phi)} = n^{2-\Omega(\phi)}$.

\paragraph{Lifting Restriction (\ref{item:restriction all close pairs}): Retrieving all close pairs via CP.}
Note that the fastest approximate CP algorithms \cite{V15, ACW16} leverage similar algebraic methods to each other.
Essentially, these methods lump together a batch $Y_i \subset Y$ of $n^{\phi}$ points in into individual vector $v_i$; then they show that a single dot product $\langle x, v_i\rangle$ is sufficient to detect if there exists $y \in Y_i$ such that $||x-y||$ is small. If that is the case, all distances in $\{x\} \times Y_i$ are computed.
Interestingly, this method can be tweaked to return all close pairs, in the sense of Restriction (\ref{item:restriction all close pairs}). However, we would like a generic reduction to CP, and not rely on the specific structure of the aforementioned algorithms. 
\begin{question}
Can we transform any algorithm for approximate CP into an algorithm that returns all (approximately) close pairs?
\end{question}
We answer this question affirmatively. Specifically, we now show how to implement $\allclosepairs$ in time $n^{2-\Omega(\phi)}$ by leveraging a $n^{2-\phi}$-time algorithm for $(1+\varepsilon)$-approximate $\cp$.
Our goal is to retrieve all pairs is $\cL_t$, and we can assume that $|\cL_{t+2}| \approx n^{1+\phi}$.
We subsample $X' \subseteq X$ and $Y' \subseteq Y$ with rate $1 / z$ for $z := \sqrt{n^{1+\phi}}$, then compute $\cp(X', Y')$.
A pair $(x, y) \in \cL_t$ is guaranteed to be returned by $\cp(X', Y')$ as long as $(X' \times Y') \cap \cL_{t+1} = \{(x, y)\}$. Condition on the event $x \in X'$ and $y \in Y'$, which happens with probability $z^{-2}$.
Then, assuming $|\cL_{t+1}| \leq |\cL_{t+3}| \approx z^2$, a union bound over all pairs $(c, d) \in \cL_{t+1} \cap (X\setminus \{x\}) \times (Y \setminus \{y\})$ 

shows that, with constant probability, none of these pairs lies in $X' \times Y'$.
Notice that the bound above is possible because the events $(c, d) \in X' \times Y'$ and $(x, y) \in X' \times Y'$ are independent.
For pairs of the form $(x, d)$ (likewise for $(c, y)$), the probability of $(x, d) \in X' \times Y'$ conditioned on $(x, y) \in X' \times Y'$ is $\deg_{\cL_{t+1}}(x) / z$.

Therefore, we need to consider two cases:
\begin{enumerate}
    \item If $\deg_{\cL_{t+1}}(x) < z/3$, then the probability of any colliding pair $(x, d) \in \cL_{t+1} \cap (\{x\} \times Y \setminus \{y\})$, conditioned on $(x, y) \in X' \times Y'$ is at most $1 - \Omega(1)$. 
    Thus, the probability of $\cp(X', Y')$ returning $(x, y)$ is $\Omega(z^{-2})$ and $\tilde O(z^2)$ repetitions suffice to find $(x, y)$.
    \label{item:small deg}
    \item  If $\deg_{\cL_{t+1}}(x) \geq z/3$, then $\cp(X', Y')$ returns $x$ often, so, we shall see that can \emph{detect} $x$ using $\tilde O(z)$ repetitions and checking what fraction of the times a pair $(x, \cdot)$ is returned.
    \label{item:large def}
\end{enumerate}
Either way, $\tilde O(z^2) = \tilde O(n^{1+\phi})$ calls to $\cp(X', Y')$ suffice. By standard concentration $|X'|, |Y'| \approx \frac{n}{z}$, thus $\cp(X', Y')$ runs in time $(\frac n z)^{2-\Omega(\phi)}$, which gives an overall running time of $n^{2-\Omega(\phi)}$. 

To detect $x$ in point \ref{item:large def}, we need to ensure that, if $\deg_{\cL_{t+1}}(x) \geq z/3$, then a pair $(x, \cdot)$ is returned often enough (i.e., with probability $\Omega(z^{-1})$).
The probability that the $\cL_{t+1}$-neighborhood of $x$ intersects $Y'$ is a large constant.
On the other hand, the probability that any not-incident-to-$x$ pair in $\cL_{t+2}$ lies in $X' \times Y'$ is, by union bound over all $\cL_{t+2}$ such pairs, small.
Thus, with constant probability, the first event happens and the second event does not.
These events are independent of $x \in X'$, which happens with probability $1/z$.
Thus, with probability $\Omega(z^{-1})$ both these events happen and $x$ is the only point in $X'$ incident to some edge in $\cL_{t+2} \cap X' \times Y'$, so, $\cp(X', Y')$ returns $(x, \cdot)$.
Then, for each $x$ detected in point (2), we find the $\cL_t$-neighborhood of $x$ brute-force by computing $||x-y||_1$ for all $y \in Y$. This uses at most $n \cdot \frac{2|\cL_{t+1}|}{z / 3} = n^{2-\Omega(\phi)}$ time.
\\

Now that we have lifted both Restriction (\ref{item:restriction coefficient}) and Restriction (\ref{item:restriction all close pairs}), all that remains is to answer \Cref{que:consistent rounding}. Namely, to define a set $S\subset [n] \times [n]$, so as to fix the rounding $\tilde{C}$, so that it satisfies \Cref{eq:low rejection overhead} every time we attempt to sample from a distribution $\lambda$. 

\paragraph{Solving \Cref{que:consistent rounding} via Fixed Randomized Rounding.}
The core challenge, posed in \Cref{que:consistent rounding}, is to define a rounding scheme $\tilde{C}$ (by specifying the set $S$ of pairs to be rounded down) that guarantees \Cref{eq:low rejection overhead} holds every time the sampling scheme is called, so that the rejection sample step is efficient. 
Recall that \Cref{eq:low rejection overhead} requires that for the relevant distance levels $\mathcal{L}_{t+1}, \mathcal{L}_{t+2}$, the set $S$ has a sufficiently large intersection with $\mathcal{L}_{t+2}$. Further recall that $t$ is defined as the level for which $\mathcal{L}_{t+2} \setminus \cL_{t}$ is ``large'' and $\mathcal{L}_{t+1}$ is not too ``large''. 
As discussed earlier in Section \ref{sec:high-level approach}, the specific sets $\mathcal{L}_{t+1}, \mathcal{L}_{t+2}$ depend on the subproblem (i.e., the combinatorial rectangle $X_k \times Y_k$) being considered, which changes throughout the Multiplicative Weights Update (MWU) iterations. Importantly, 
these subproblems are a function of the output of the MWU algorithm on the last iteration, which depend on the prior distribution $\lambda$ and therefore the rounding. Thus the sets $\cL_t, \mathcal{L}_{t+1}, \mathcal{L}_{t+2}$ are not independent of the rounding we choose.

Our solution, introduced conceptually in Section~\ref{sec:high-level approach} and detailed technically in Section \ref{sec:consistent rounding}, is to use a  randomized rounding strategy despite this dependency. We sample a single set $S \subseteq X \times Y$ of size $n^{2-\phi/2}$ uniformly at random at the beginning of the algorithm. Since $\mathcal{L}_{t+2}$ is defined to have size at least $n^{1+\phi}$, we will have $|S \cap \mathcal{L}_{t+2}| = n^{1+\Omega(\phi)}$ with probability $1-\exp(-\Omega(|S|))$ if $S$ were independent from $\mathcal{L}_{t+2}$. Unfortunately, as described above, it is not independent. Our approach to resolve this independence issue is to demonstrate that the combinatorial rectangles on a given step of MWU can be taken to be a deterministic function of the $\tilde{O}(n)$ samples drawn from the distribution $\lambda$ on the last step. Thus, there are at most $2^{\tilde{O}(n)}$ possible rectangles (and thus possible sets $\mathcal{L}_{r}$) on each step. Since we only need to run MWU for $\polylog(n)$ iterations, we can union bound over the set of all $2^{\tilde{O}(n)}$ possible sets $\mathcal{L}_{r}$ that could ever occur as the relevant set in a given subproblem, thereby guaranteeing correctness.


\paragraph{Comparison with Sherman's Framework.}
In an influential paper, Sherman~\cite{sherman2013nearly} presented a framework for efficiently approximating flow problems with high accuracy. Sherman's preconditioning framework is very versatile and has thus been refined to obtain faster parallel algorithm for
shortest path~\cite{L20, andoni2020parallel} (see also the excellent introduction in Li's thesis~\cite{li2021preconditioning}).
In essence, Sherman's framework reduces $(1+\varepsilon)$-approximate uncapacitated minimum cost flow to $\polylog(n)$-approximate $\ell_1$-\emph{oblivious routing}.

Employed off-the-shelf, Sherman's frameworks falls short of providing improvements for $(1+\varepsilon)$-approximate EMD beyond the already discussed approach of building a spanner and solving the corresponding graph problem on the spanner.

Importantly, current implementations of Sherman's framework do not appear to lend themselves to sublinear-time implementations.

To counter this, we design an ad-hoc oblivious embedding (\Cref{sec:oblivious routing}) and give an alternate implementation of Sherman's framework (\Cref{sec:template}), which we show is amenable of sublinear-time implementation in the size of the LP (\Cref{sec:sampling via cp}).
The bulk of our work is to show that such sublinear implementation can be achieved by leveraging the geometry of the space.

\section{Related Work}\label{sec:related}

\paragraph{Other Approaches to Approximating $\EMD$.}

The focus of this paper is on ``high-quality'' $(1+\eps)$ approximations to high-dimensional $\EMD$, where we achieve the best known runtime (Corollary \ref{cor:main}), and for which $\eps$ is taken to be smaller than some constant. For this regime,  there were no known subquadrtic algorithms prior to \cite{AZ23}, which we improve on here. For $C$-approximations, where $C$ is a large constant,  a classic result in metric geometry is that any $n$-point metric space admits a spanner of size $O(n^{1+1/C})$ (which, moreover, can be constructed efficiently in Euclidean space)~\cite{peleg1989graph,HIM12}. Via fast min-cost flow solvers \cite{CKLPPS22}, this yields a $C$-approximation to $\EMD$ in time $n^{1+O(1/C) +o(1)}$.  Additionally, it is known that $\EMD$ over $(\R^d, \ell_1)$ can be embedded into a tree metric with $O(\log n)$ distortion in $\tilde{O}(n)$ time \cite{CJLW22}, resulting in an algorithm for $\EMD$ with the same runtime and approximation. We note that $\EMD$ cannot be computed exactly in high-dimensions in truly-subquadratic time (conditioned on the Orthogonal Vectors Conjecture or SETH), \cite{rohatgi2019conditional}, hence the emphasize on approximations.

Note that for low-dimensional spaces, where $d$ is taken to be a constant, 
it is known that $\EMD$ can be approximated to $(1+\epsilon)$ in near-linear time \cite{sharathkumar2012near,ACRX22}, although incurring exponential factors in the dimension (i.e. $(1/\eps)^d)$.

The closest pair problem itself is central to computational geometry. It is tightly related to nearest neighbor search \cite{IM98,HIM12,AI08,A09}, and indeed computing approximate nearest neighbors (ANN) for all points is sufficient to solve approximate CP. Given that the best $(1+\eps)$-approximate nearest neighbor search algorithms use $n^{2-\Theta(\varepsilon)}$ preproceesing time and $n^{1+\Theta(\eps)}$ query time \cite{AR15}, this yields a CP algorithm with runtime $n^{2-\Theta(\varepsilon)}$. In a suprising breakthrough, Valiant demonstrating that the dependency on single-query ANN can be beat using the polynomial methodm resulting in a $n^{2-\Omega(\sqrt{\eps})}$ algorithm. This was subsequently improved by \cite{ACW16} to the current best known $n^{2-\tilde{\Omega}(\eps^{1/3})}$ runtime.

\paragraph{Previous reductions of geometric problems to ANN and CP.}
Several other works have utilized approximate nearest neighbor (ANN) or closest pair as a subroutine to solve another geometric task. For instance, in \cite{HIM12} they show how one can solve the Minimum Spanning Tree (MST) problem  via calls to a dynamic ANN algorithm (which they implement via LSH).
For low-dimensional space, \cite{agarwal1990euclidean} reduces MST to closest pair, though they incur  incurring an exponential dependence on the dimension $d$. For high-dimensional space, \cite{ACW16} reduces MST to CP, achieving runtimes comparable to their CP algorithm. 
For large constant approximations of $\EMD$, \cite{agarwal2014approximation} leverage a $(1+\varepsilon)$-ANN subroutine with running time $\tau(n, \varepsilon)$ to obtain a $O(1 / \delta^{1+\varepsilon})$-approximation of $\EMD$ in time $\tilde O(n^{1 + \delta} \tau(n, \varepsilon))$. However, the latter approach is now dominated by the spanner with fast min-cost flow solver approach described above.

\section{Preliminaries}

\paragraph{Notation.}
For any integer $n \ge 1$, we write $[n] = \{1, 2, \dots, n\}$, and for two integers $a, b \in \Z$, write $[a..b] = \{a, a+1, \dots, b\}$. For $a, b \in \R$ and $\epsilon \in (0, 1)$, we use the notation $a = (1 \pm \epsilon)b$ to denote the containment of $a \in [(1 - \epsilon)b, (1 + \epsilon)b]$. We will use boldface symbols to represent random variables and functions, and non-boldface symbols for fixed values (potentially realizations of these random variables) for instance $\boldsymbol{f}$ vs.\ $f$. We use $\tilde{O}(\cdot),\tilde{\Omega}(\cdot)$ notation to hide polylog factors in $n,d,\eps$.

\paragraph{Choice of $\ell_p$ norm, and magnitude of the coordinates.}
For the remainder of the paper, we assume our point sets $X,Y$ lie in $(\R^d, \ell_1)$. This is without loss of generality, since for any $p \in [1,2]$ there exists an embedding of $\ell_p^d$ to $\ell_1^{d'}$ with approximately preserves all distances to a factor of $(1+\eps)$, where $d' = O(d \log(1/\eps) / \eps^2)$ \cite{johnson1982embedding}. Naturally, our results equally apply to any underlying metric which can be similarly embedded into $\ell_1$. 

Without loss of generality, we assume our pointsets $X,Y$ are contained in $[1, \Phi]^d$ for some integer $\Phi \geq 0$. For simplicity and to avoid additional notation, we also allow $\Phi$ to be a bound on the aspect ratio of $A,B$, namely, $1 \leq \|x-y\|_1 \leq \Phi$ for any $(x,y) \in X \times Y$ (note that, a priori, the aspect ratio would be at most $\Phi \cdot d$. Note that we can easily remove and match duplicate points in $X,Y$ in $O(nd)$ time in pre-processing, and may therefore assume that $X \cap Y = \emptyset$. In Appendix \ref{sec:poly-aspect}, we will show a reduction to $\Phi = \poly(n,d)$ bounded aspect ratio. Further, we assume withlot loss of generality that $\Phi = \omega(1)$.

\paragraph{Approximate Closest Pair.}   We now formally define the approximate closest pair problem, which is the key problem that we will reduce to. Note that we allow our closest pair oracle to be randomized and succeed with probability at least $2/3$. Of course, this may be boosted to an arbitrary $1-1/\poly(n)$ success probability via independent repetition. 

\begin{definition}[Approximate Closest Pair]\label{def:CP} Fix any $\eps \in (0,1)$. Then the $(1+\varepsilon)$-approximate closest pair (CP) problem as the following. Given $A, B \subseteq (\bbR^d, \ell_1)$ with $|A|, |B| \leq n$ and a parameter $\varepsilon > 0$, return $(a, b) \in A\times B$ such that $||a-b||_1 \leq (1+\varepsilon) \cdot \min_{x \in A, y\in B} ||x-y||_1$. We allow a CP algorithm to be randomized, so long as it succeeds with probability at least $2/3$.

    \end{definition}
\section{Bi-Lipschitz Quadtree Embedding}
\label{sec:oblivious routing}

In this section, we give an approximate tree embedding embedding of our dataset with stronger guarantees than what is possible via standard  probabilistic tree embeddings. Our construction is a post-hoc modification to the probabilistic tree embeddings from \cite{AIK08}.

Specifically,~\cite{AIK08} shows how to construct a probabilistic tree embedding for $(\R^d, \ell_1)$, such that each pairwise distance on the tree is lower bounded with high probability, and each pairwise distance on the tree is upper bounded in \emph{expectation}, by the original distance. While these guarantees are asymmetric in nature, they are sufficient for approximating $\EMD$ up to a poly-logarithmic factor. However, for our purpose we will need a stronger bi-Lipschitz guarantee---namely, all pairwise distances should be preserved up to a poly-logarithmic factor with a tree metric. Unfortunately, for a general point set $X \subset (\R^d, \ell_1)$, a bi-Lipschitz tree embeddings with poly-logarithmic distortion does not exist;\footnote{For example, for large $n$, the $n$-cycle can be approximated by a large circle in $\R^2$, and $n$-cycle metrics are known to require $\Omega(n)$ distortion when mapped into trees~\cite{RR98}.} however, we show how to slightly perturb points in $X$ so that such tree embeddings exist while simultaneously ensuring that the Earth Mover's distance on the perturbed points remains the same up to a $(1+\epsilon)$-factor. Importantly, this perturbation is performed \textit{post-hoc}, after the tree embedding itself has been generated, as it depends on the tree embedding itself.

We consider the following modified notation for expressing $\EMD$ for arbitrary supply and demand vectors over differing metrics.

\begin{definition}
    Let $X = ([n], \sfd)$ be a metric, and let $V \subset \R^n$ be the subspace of vectors $b \in \R^n$ which satisfy $\langle b, \one_n\rangle = 0$. For any $b \in V$, the value $\EMD_X(b) \in \R_{\geq 0}$ is defined as follows: 
    \begin{itemize}
        \item Let $b_{+}, b_{-} \in \R_{\geq 0}^n$ divide positive and negative parts of $b$, so $b_{+} - b_{-} = b$.
        \item We consider the complete bipartite graph $[n] \times [n]$, where the cost of the edge $(i,j) \in [n] \times [n]$ is $\sfd(i,j)$.
        \item A feasible flow $\gamma \in \R^{n\times n}_{\geq 0}$ is one which satisfies $\sum_{j=1}^n \gamma_{ij} = (b_{+})_i$ for all $i \in [n]$ and $\sum_{i=1}^n \gamma_{ij} = (b_{-})_j$ for all $j \in [n]$.
    \end{itemize}
    We let 
    \[ \EMD_{X}(b) = \min_{\substack{\gamma \in \R^{n\times n}_{\geq 0} \\ \text{feasible flow}}} \sum_{i=1}^n \sum_{j=1}^n \gamma_{ij} \cdot \sfd(i, j), \]
    and will often refer to the flow $\gamma$ realizing $\EMD_X(b)$ as the minimizing feasible flow.
\end{definition}

One useful property of $\EMD_{X}(b)$ as taken above is that it forms a norm over the vector space $V$ of vectors $b$ whose coordinates sum to zero (i.e. demand is equal to supply).  
\begin{fact}\label{fact:emd-norm}
     Let $X = ([n], \sfd)$ be a metric space, and let $V = \{b \in \R^d \; | \; \langle b, \one_d \rangle = 0 \}$. Then $\EMD_X(b)$ is a norm over $V$.
\end{fact}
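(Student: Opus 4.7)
My plan is to verify the three defining axioms of a norm on $V$: positive-definiteness, absolute homogeneity, and the triangle inequality. The first two follow by direct inspection of the definition; the triangle inequality is the main step and requires a reformulation of the underlying transport LP.

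For positive-definiteness, nonnegativity is immediate since $\gamma \geq 0$ and $\sfd \geq 0$, and $b=0$ is matched by the zero flow. Conversely, if $b \in V \setminus \{0\}$, then $\sum_i b_i = 0$ forces $\operatorname{supp}(b_+)$ and $\operatorname{supp}(b_-)$ to be nonempty and, by construction, disjoint; any feasible flow must then carry positive mass on at least one edge $(i,j)$ with $i \in \operatorname{supp}(b_+)$, $j \in \operatorname{supp}(b_-)$ and $i \neq j$, contributing $\sfd(i,j) > 0$ to the cost. For absolute homogeneity, when $\alpha \geq 0$ the identities $(\alpha b)_\pm = \alpha b_\pm$ and the map $\gamma \mapsto \alpha \gamma$ give a cost-scaling bijection on feasible sets; when $\alpha < 0$, we have $(\alpha b)_+ = |\alpha| b_-$ and $(\alpha b)_- = |\alpha| b_+$, and $\gamma \mapsto |\alpha| \gamma^{\top}$ is a cost-preserving bijection by symmetry of $\sfd$.

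The main step, and the only real obstacle, is the triangle inequality. The difficulty is that $(b+b')_\pm \neq b_\pm + b'_\pm$ in general, so optimal bipartite flows for $b$ and $b'$ cannot simply be summed to produce a feasible bipartite flow for $b+b'$. My plan is to pass to the equivalent \emph{general-flow} formulation
\[ \EMD_X(b) \;=\; \min\!\left\{ \sum_{i,j} \gamma_{ij}\, \sfd(i,j) \,:\, \gamma \in \R_{\geq 0}^{n\times n},\; \sum_j \gamma_{ij} - \sum_j \gamma_{ji} = b_i \ \text{for all } i \right\}, \]
in which $\gamma$ ranges over all nonnegative matrices whose net outflow at each vertex equals $b_i$. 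Equivalence with the bipartite definition is a standard consequence of flow path-decomposition: every general flow decomposes into directed paths from $\operatorname{supp}(b_+)$ to $\operatorname{supp}(b_-)$ plus cycles, which can be discarded without increasing cost; by the triangle inequality on $\sfd$, each path can then be replaced by the direct edge between its endpoints without increasing cost, yielding a bipartite flow of no greater total cost. Conversely, every bipartite flow is trivially a general flow (its net outflow constraints are precisely $b_i$ using $\operatorname{supp}(b_+) \cap \operatorname{supp}(b_-) = \emptyset$). Given this equivalence, the triangle inequality is immediate: if $\gamma$ and $\gamma'$ are optimal general flows for $b$ and $b'$, then $\gamma + \gamma'$ is a feasible general flow for $b+b'$ of cost $\EMD_X(b) + \EMD_X(b')$, whence $\EMD_X(b+b') \leq \EMD_X(b) + \EMD_X(b')$.
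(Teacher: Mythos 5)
The paper states this as a \texttt{Fact} with no accompanying proof, so there is no paper argument to compare against; I can only assess your proof on its own terms, and it is correct.

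Your handling of positive-definiteness and absolute homogeneity is accurate, and you correctly isolate the real content: the triangle inequality does not follow by adding optimal bipartite couplings, because $(b+b')_{\pm}$ need not equal $b_{\pm}+b'_{\pm}$ when the supports of $b$ and $b'$ overlap with opposite signs. Passing to the uncapacitated transshipment formulation
\[
\EMD_X(b)=\min\Bigl\{\sum_{i,j}\gamma_{ij}\,\sfd(i,j)\ :\ \gamma\ge 0,\ \sum_j \gamma_{ij}-\sum_j\gamma_{ji}=b_i\ \forall i\Bigr\},
\]
whose feasible set is defined by a \emph{linear} divergence constraint, is the standard and cleanest way to obtain subadditivity: $\gamma+\gamma'$ is feasible for $b+b'$, immediately. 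The equivalence you invoke between this formulation and the paper's bipartite one is correct: cycles can be discarded without affecting divergence and without increasing cost (as $\sfd\ge 0$), and each source-to-sink path can be shortcut to a single edge using the triangle inequality on $\sfd$, yielding a feasible coupling between $\operatorname{supp}(b_+)$ and $\operatorname{supp}(b_-)$ of no greater cost; conversely every bipartite coupling already satisfies the divergence constraints since the two supports are disjoint. The one step you could spell out slightly more is that the flow-decomposition of a nonnegative $\gamma$ with divergence $b$ produces paths whose endpoints really do lie in $\operatorname{supp}(b_+)$ and $\operatorname{supp}(b_-)$ (so the shortcut flow is bipartite in the sense of the paper's definition), but this is routine. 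Overall the argument is complete and is the expected proof of this folklore fact.
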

We often consider a size-$n$ subsets $X = \{ x_1, \dots, x_n \} \subset (\R^d, \ell_1)$. In these cases, we implicitly associate $\{x_1, \dots, x_n \}$ with the set $[n]$, and refer to $X$ as the metric with $\sfd(i,j) = \|x_i - x_j\|_1$. We use the following probabilistic tree embedding from~\cite{AIK08}, which embeds subsets $X \subset (\R^d, \ell_1)$ of bounded aspect ratio into a probabilistic (rooted) tree via a collection of nested and randomly shifted grids. This allows to translate between the Earth mover's distance among points in $X$, $\EMD_{X}$, to the Earth mover's distance over a tree metric. Recall that a tree metric is specified by a tree $T$ with weighted edges, and defines the metric on the vertices by letting the distance $\sfd_{T}(i,j)$ be the length of the path between $i$ and $j$ in the tree. 

\begin{lemma}[\cite{AIK08}]\label{lem:aik-tree}
    Fix $X = \{x_1,\dots, x_n\} \subset (\R^d, \ell_1)$ whose non-zero pairwise distances lie in $[1,\Phi]$. There is a distribution $\calT = {\cal T} (X)$ which satisfies the following guarantees:
    \begin{itemize}
        \item Every tree in $\calT$ is rooted and has $n$ leaves which are associated with $x_1, \dots, x_n$.
        \item The depth of every tree $T$ in the support of $\calT$ is $h = O(\log \Phi)$, with the root at depth $0$ and leaves at depth $h$.
        \item Every edge connecting a vertex at depth $\ell$ to a vertex at depth $\ell+1$ has the same weight $\Phi / 2^{\ell}$.
    \end{itemize}
    Furthermore, for any $\delta > 0$, distances in a randomly sampled tree $\bT \sim \calT$ satisfy:
    \begin{itemize}
        \item For any $i,j \in [n]$ and any $\delta \in (0, 1)$, $\sfd_{\bT}(i, j) \geq \|x_i - x_j\|_1 / O(\log(1/\delta))$ with probability at least $1 - \delta$ over $\bT \sim \calT$.
        \item For any $b \in V$, let $\gamma \in \R^{n\times n}_{\geq 0}$ denote the min-cost flow realizing $\EMD_{X}(b)$. Then, 
        \[ \Ex_{\bT}\left[\sum_{i=1}^n \sum_{j=1}^n \gamma_{ij} \cdot \sfd_{\bT}(i,j) \right] \leq O(\log \Phi) \cdot \EMD_{X}(b).\]
    \end{itemize}
\end{lemma}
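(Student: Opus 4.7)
The plan is to instantiate the classical randomly-shifted nested grids construction (this is exactly what \cite{AIK08} does, and what the lemma is quoting). Let $s_\ell := \Phi / 2^\ell$ and $h = O(\log \Phi)$ chosen slightly larger than $\log_2 \Phi$. I would draw a single random shift $\bs \in [0, \Phi]^d$ with independent uniform coordinates and, for each $\ell \in \{0,\dots,h\}$, let $G_\ell$ be the axis-aligned grid with cells of side length $s_\ell$ shifted by $\bs$. The tree $\bT$ has one node for each non-empty cell of $G_\ell$ at depth $\ell$; the unique cell of $G_0$ containing $X$ is the root, and a depth-$(\ell{+}1)$ cell is a child of the unique depth-$\ell$ cell containing it, with edge weight $s_\ell$. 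Leaves (depth $h$) are identified with $x_1,\dots,x_n$; since duplicates were removed in preprocessing and $\Phi$ is the aspect ratio, taking $h$ to be $\log_2 \Phi$ plus a constant (and breaking ties arbitrarily in the tail of the recursion) ensures each leaf corresponds to a single point. This construction syntactically matches the three deterministic properties listed in the lemma.

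For the expected upper bound, I would argue pair by pair and then appeal to linearity. Consider any $x_i, x_j$: if their lowest common ancestor sits at depth $\ell$, then $\sfd_{\bT}(x_i,x_j) = 2\sum_{k=\ell}^{h-1} s_k \le 4 s_\ell$. Because the grid shifts act independently per coordinate and the $\ell_1$ distance decomposes across coordinates, the probability that $x_i,x_j$ fall in different level-$\ell$ cells is at most $\min\!\bigl(1,\|x_i-x_j\|_1/s_\ell\bigr)$ by a union bound over coordinates. A standard Abel-summation / telescoping computation then gives
\[
\Ex_{\bT}\bigl[\sfd_{\bT}(x_i,x_j)\bigr] \;\leq\; 2\sum_{\ell=0}^{h-1} s_\ell \cdot \min\!\Bigl(1,\tfrac{\|x_i-x_j\|_1}{s_{\ell+1}}\Bigr) \;\leq\; O(\log \Phi)\cdot \|x_i-x_j\|_1 .
\]
Applied to the minimizing feasible flow $\gamma$ for $\EMD_X(b)$, linearity of expectation yields the second guarantee $\Ex_{\bT}\bigl[\sum_{i,j}\gamma_{ij}\sfd_{\bT}(i,j)\bigr] \leq O(\log \Phi)\cdot \EMD_X(b)$.

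For the high-probability lower bound, I would fix a pair $(i,j)$ and choose $\ell_0$ to be the largest level with $s_{\ell_0} \ge \|x_i-x_j\|_1/(c\log(1/\delta))$ for an absolute constant $c$. Using the per-coordinate independence of the shift, the event that $x_i$ and $x_j$ share a level-$\ell_0$ cell has probability exactly $\prod_{r=1}^d \max\!\bigl(0,\,1 - |x_i^{(r)}-x_j^{(r)}|/s_{\ell_0}\bigr) \;\le\; \exp\!\bigl(-\|x_i-x_j\|_1/s_{\ell_0}\bigr) \;\le\; \delta$ by our choice of $\ell_0$ and $c$. On the complementary event, their LCA lies at depth $< \ell_0$, so $\sfd_{\bT}(x_i,x_j) \ge 2 s_{\ell_0} \ge \|x_i-x_j\|_1 / O(\log(1/\delta))$, which is the desired bound.

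The main obstacle is the lower bound: one must correctly use the per-coordinate independence to get a clean \emph{product} (and hence exponential) separation-probability estimate at level $\ell_0$, rather than only the weaker union-bound estimate that sufficed for the expectation. Choosing $\ell_0$ so that $s_{\ell_0}$ is just below $\|x_i-x_j\|_1/\log(1/\delta)$ is what converts this exponential tail into a distortion of $O(\log(1/\delta))$ that holds with probability $1-\delta$, which is strictly stronger than what a naive Markov argument on the expected bound would give. The remaining items (bijection between leaves and points, exact edge weights $\Phi/2^\ell$, depth $O(\log \Phi)$) are routine bookkeeping for the shifted-quadtree construction.
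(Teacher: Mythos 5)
This lemma is quoted from \cite{AIK08}, and the paper does not include a proof of it, so there is no in-paper argument to compare against; the benchmark is the classical randomly-shifted-quadtree analysis, which is exactly what you reconstruct. Your argument is correct in all substantive respects: the single uniform shift in $[0,\Phi]^d$ nests the levels and gives the three structural properties; the per-coordinate union bound plus telescoping over levels gives the $O(\log\Phi)$ expected per-pair distortion, and linearity of expectation then transfers this to the flow-weighted bound; and the per-coordinate product bound $\prod_{r}\bigl(1-|x_i^{(r)}-x_j^{(r)}|/s_{\ell_0}\bigr)\le \exp\bigl(-\|x_i-x_j\|_1/s_{\ell_0}\bigr)$, combined with choosing the \emph{largest} $\ell_0$ with $s_{\ell_0}\ge \|x_i-x_j\|_1/(c\log(1/\delta))$ (so that $\|x_i-x_j\|_1/s_{\ell_0} > (c/2)\log(1/\delta)$, which needs $c\ge 2$), yields the $1-\delta$ lower-distortion guarantee. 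The one place that deserves more care is the claim that $h = \log_2\Phi + O(1)$ already makes the leaves a bijection with $\{x_1,\dots,x_n\}$: two points at $\ell_1$ distance $1$ can agree on every coordinate to within $1/d$, hence share a depth-$h$ cell of side $\Theta(1)$. The standard fix is to take $h = \Theta(\log(\Phi d))$, which remains $O(\log\Phi)$ once the paper's reduction to $\Phi = \poly(nd\varepsilon^{-1})$ is in force, or equivalently to append short deterministic chains below any multiply-occupied cell; neither change affects the two distortion bounds you prove.
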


The main algorithm of this section is in Figure~\ref{fig:embed-and-perturb}, which allows us to use the probabilistic tree to perturb the points. The perturbation ensures that the tree is a bi-Lipschitz embedding of the perturbed point set, and that the Earth mover's distance in the original point set and the Earth mover's distance in the perturbed point set does not change significantly.
\begin{figure}[h]
\begin{framed}
\noindent \textbf{Algorithm} \textsc{Embed-and-Perturb}$(X, \eps)$. 
\begin{itemize}
    \item \textbf{Input}: a set of points $X = \{ x_1, \dots, x_n \} \subset (\R^d, \ell_1)$ whose pairwise distances lie in $\{0\} \cup [1,\Phi]$, as well as an accuracy parameter $\eps \in (0, 1)$.
    \item \textbf{Output}: A tuple $(\bY, \bT)$, where $\bY = \{\by_1, \dots, \by_n\} \subset (\R^{d+d'}, \ell_1)$ with $d' = O(\log n)$, and $\bT$ is a tree metric with $n$ leaves associated with the points in $\bY$.
\end{itemize} 
The algorithm proceeds as follows:
\begin{enumerate}
\item Sample a tree $\bT \sim \calT$ from the distribution specified in Lemma~\ref{lem:aik-tree}.
\item For every node $v \in \bT$ at depth $\ell$, sample $\bz_v \sim [0,\eps \cdot \Phi / 2^{\ell}]^{d'}$.
\item For each $i \in [n]$, let $\bv_0, \dots, \bv_h$ denote the root-to-leaf path in $\bT$, where $\bv_h$ is leaf $i$. Let $\by_i \in \R^{d+d'}$ where the first $d$ coordinates are $x_i$, and the final $d'$ coordinates are set to \smash{$(1/d')\sum_{i=0}^h \bz_{v_i}$}.
\end{enumerate}
\end{framed}
\caption{The algorithm \textsc{Embed-and-Perturb} which receives a set of vectors $X$, and produces a perturbed set of vectors and a bi-Lipschitz tree embedding for the perturbed vectors.}\label{fig:embed-and-perturb}
\end{figure}

\begin{lemma} \label{lem:quadtree existence}
There is a constant $c > 0$ such that for any $X = \{x_1,\dots, x_n\} \subset (\R^d, \ell_1)$ with distances in $[1, \Phi]$, the algorithm \textsc{Embed-and-Perturb}$(X, c\eps / \log \Phi)$ outputs a tuple $(\bT, \bY)$ which, for any $b \in V$, satisfies the following with probability at least $0.9$:
\begin{itemize}
    \item The set $\bY = \{ \by_1, \dots, \by_n \} \subset (\R^{d + O(\log n)}, \ell_1)$ satisfies $\EMD_{X}(b) \leq \EMD_{\bY}(b) \leq (1+\eps) \cdot \EMD_{X}(b)$.
    \item The set $\bY$ is embedded in the tree metric $\bT$ of $n$ leaves at depth $O(\log \Phi)$ which achieves bi-Lipschitz distortion $O(\log n \log \Phi / \eps)$.
\end{itemize}
Furthermore, the algorithm runs in time $O(nd \log n \log \Phi)$.
\end{lemma}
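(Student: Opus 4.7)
The plan is to verify the three parts of the lemma---EMD preservation, bi-Lipschitz distortion, and runtime---by combining \Cref{lem:aik-tree} with a concentration analysis of the random perturbations $\bz_v$. Throughout, set $\eps' = c\eps/\log\Phi$ for the perturbation parameter passed to \textsc{Embed-and-Perturb}, and for a pair $(i,j)$ let $\ell(i,j)$ denote the depth of their LCA in $\bT$, so $\sfd_{\bT}(i,j) = \Theta(\Phi/2^{\ell(i,j)})$. The deterministic observation driving everything is that, since the root-to-leaf paths of $i$ and $j$ agree above depth $\ell(i,j)$, the contributions of all $\bz_v$ above the LCA cancel in $\by_i^{\mathrm{extra}} - \by_j^{\mathrm{extra}}$, leaving only the levels strictly below the LCA.

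For EMD preservation, the lower bound $\EMD_\bY(b) \ge \EMD_X(b)$ is immediate: each $\by_i$ is obtained by appending extra coordinates to $x_i$, so $\|\by_i - \by_j\|_1 \ge \|x_i - x_j\|_1$ and any feasible flow has cost at least as large in $\bY$. For the upper bound, use the min-cost flow $\gamma^*$ realizing $\EMD_X(b)$ as a feasible flow for $\bY$ to obtain
\[
    \EMD_\bY(b) \le \EMD_X(b) + \sum_{i,j} \gamma^*_{ij} \, \|\by_i^{\mathrm{extra}} - \by_j^{\mathrm{extra}}\|_1.
\]
Triangle inequality over the $h - \ell(i,j)$ surviving levels gives the deterministic bound $\|\by_i^{\mathrm{extra}} - \by_j^{\mathrm{extra}}\|_1 \le O(\eps') \sfd_\bT(i,j)$. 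Taking expectations and applying the expected upper bound in \Cref{lem:aik-tree},
\[
    \Ex_{\bT}[\EMD_\bY(b)] \le \EMD_X(b) + O(\eps' \log \Phi)\cdot \EMD_X(b) = (1 + O(\eps)) \EMD_X(b).
\]
Markov's inequality on the nonnegative quantity $\EMD_\bY(b) - \EMD_X(b)$, together with choosing $c$ small enough, then gives $\EMD_\bY(b) \le (1+\eps)\EMD_X(b)$ with probability at least $0.95$.

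For the bi-Lipschitz distortion I handle the two sides separately. The upper side uses the high-probability lower bound in \Cref{lem:aik-tree} with $\delta = 1/n^3$ union-bounded over all $\binom{n}{2}$ pairs, giving $\|x_i - x_j\|_1 \le O(\log n)\,\sfd_\bT(i,j)$ simultaneously for all $i,j$, which combined with the deterministic perturbation bound yields $\|\by_i - \by_j\|_1 \le O(\log n)\,\sfd_\bT(i,j)$. The lower side is the delicate step: lower bound $\|\by_i - \by_j\|_1 \ge \|\by_i^{\mathrm{extra}} - \by_j^{\mathrm{extra}}\|_1 = (1/d')\sum_{c=1}^{d'} |W_c|$, where
\[
    W_c = \sum_{k=\ell(i,j)+1}^{h} \big([\bz_{v_k^{(i)}}]_c - [\bz_{v_k^{(j)}}]_c\big),
\]
and the $W_c$ are i.i.d.\ across coordinates $c$. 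The dominant contribution to each $W_c$ is the level-$(\ell(i,j)+1)$ term, a difference of two independent $\mathrm{Unif}[0, \eps'\Phi/2^{\ell(i,j)+1}]$ variables, giving $W_c$ standard deviation $\Theta(\eps' \Phi/2^{\ell(i,j)})$ and bounded magnitude of the same order. Anti-concentration (e.g., Paley--Zygmund) shows $|W_c| \ge \Omega(\eps' \Phi/2^{\ell(i,j)})$ with some constant probability $p_0 > 0$; a Chernoff bound over the $d' = \Theta(\log n)$ independent coordinates then yields $(1/d')\sum_c |W_c| \ge \Omega(\eps' \Phi/2^{\ell(i,j)})$ with failure probability at most $1/n^3$. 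Union-bounding over pairs gives $\|\by_i - \by_j\|_1 \ge \Omega(\eps')\,\sfd_\bT(i,j) = \Omega(\eps/\log \Phi)\,\sfd_\bT(i,j)$ for all pairs simultaneously, so the overall bi-Lipschitz distortion is $O(\log n) / \Omega(\eps/\log \Phi) = O(\log n \log \Phi/\eps)$.

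The main obstacle is this anti-concentration plus Chernoff argument for the lower Lipschitz bound: one must verify both that the independent perturbations at the two children of the LCA do not cancel (anti-concentration at each coordinate) and that enough coordinates give concurrent lower bounds to boost constant-probability anti-concentration into a $1 - 1/n^3$ bound needed for the union bound over pairs. The choice $d' = \Theta(\log n)$ with sufficiently large hidden constant accomplishes this; everything else follows from \Cref{lem:aik-tree} and Markov. Combining the bi-Lipschitz event (probability $\ge 0.95$) with the EMD-preservation event for the fixed $b$ (probability $\ge 0.95$) by a union bound yields both conclusions with probability $\ge 0.9$. Finally, the runtime bound is routine: \Cref{lem:aik-tree} produces $\bT$ in $O(nd\log\Phi)$ time, and drawing the $\bz_v$'s and forming the $n$ extra-coordinate vectors by walking each leaf's root-to-leaf path once costs $O(nd'\log\Phi) = O(n\log n \log\Phi)$, for a total of $O(nd\log n \log\Phi)$.
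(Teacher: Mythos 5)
Your proposal is correct and follows essentially the same argument as the paper's: the EMD lower bound by coordinate-appending, the EMD upper bound by pushing the optimal $X$-flow through $\bY$ and invoking the expected $O(\log\Phi)$ distortion from Lemma~\ref{lem:aik-tree} plus Markov, the upper Lipschitz side from the high-probability lower bound in Lemma~\ref{lem:aik-tree}, and the lower Lipschitz side from a per-coordinate anti-concentration argument boosted over $d'=\Theta(\log n)$ independent coordinates. The only minor differences are cosmetic: you observe that the perturbation upper bound $\|\by_i^{\mathrm{extra}} - \by_j^{\mathrm{extra}}\|_1 \le O(\eps')\sfd_\bT(i,j)$ holds deterministically (the paper bundles it into the probabilistic event $\calbE_3$), and you use Paley--Zygmund plus Chernoff where the paper uses a Khintchine-type moment bound followed by concentration, but these are interchangeable routes to the same conclusion.
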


\Cref{lem:quadtree existence} is 
proven in Claims~\ref{lem:tree-bilip} and~\ref{lem:perturbed-emd}. First, that the tree $\bT$ is a bi-Lipschitz embedding of the output vectors $\bY$, and then, that for a fixed vector $b \in V$, the Earth mover's distance $\EMD_{X}(b)$ is approximately the same as $\EMD_{\bY}(b)$. In order to do the analysis, we consider the following three events, where each occurs with high probability and imply the desired guarantees: 
\begin{itemize}
    \item $\calbE_1$: This event depends on the draw of $\bT \sim \calT$ from Lemma~\ref{lem:aik-tree}. It occurs whenever every $i, j \in [n]$ satisfies $\|x_i - x_j\|_1 / O(\log n) \leq \sfd_{\bT}(i,j)$. Setting $\delta = 1/n^{10}$ in Lemma~\ref{lem:aik-tree} and by union bound, $\calbE_1$ occurs with probability $1 - o(1)$. 
    \item $\calbE_2$: We consider a fixed vector $b \in V$, and we let $\gamma \in \R^{n\times n}_{\geq 0}$ be the flow which realizes $\EMD_{X}(b)$. This event depends on the draw of $\bT \sim \calT$ from Lemma~\ref{lem:aik-tree}, and occurs whenever
    \[ \sum_{i=1}^n \sum_{j=1}^n \gamma_{ij} \cdot \sfd_{\bT}(i,j) \leq O\left(\log \Phi\right) \cdot \EMD_{X}(b). \]
    From Lemma~\ref{lem:aik-tree} and Markov's inequality, this event occurs with probability at least $0.99$.
    \item $\calbE_3$: Fix a draw of $\bT$, and this event depends on the draws $\bz_{\bv}$ from Line~2 of Figure~\ref{fig:embed-and-perturb}. It occurs whenever the following holds for every $i, j \in [n]$: we let $\bv_0, \dots, \bv_h$ and $\bu_0, \dots ,\bu_h$ be the root-to-leaf paths to leaves $i$ and $j$ in $\bT$, respectively; then, 
    \[ \left\| \frac{1}{d'} \sum_{\ell=0}^h (\bz_{\bv_{\ell}} - \bz_{\bu_{\ell}} ) \right\|_1 = \Theta(\eps) \cdot \sfd_{\bT}(i,j).\]
    We prove that event $\calbE_3$ occurs with high probability below.
\end{itemize}

\begin{claim}
    Fix any tree $\bT$ from Lemma~\ref{lem:aik-tree} and any $i, j \in [n]$. Let $\bv_0, \dots, \bv_h$ and $\bu_0,\dots, \bu_h$ be the root-to-leaf paths to $i$ and $j$ in $\bT$, respectively. Then, 
    \begin{align*}
    \left\| \frac{1}{d'} \sum_{\ell=0}^h (\bz_{\bv_{\ell}} - \bz_{\bu_{\ell}}) \right\|_1 = \Theta(\eps) \cdot \sfd_{\bT}(i,j)
    \end{align*}
    with probability at least $1-1/n^{10}$ over the draw of $\bz_{\bv_{\ell}}, \bz_{\bu_{\ell}} \sim [0, \eps \cdot \Phi / 2^{\ell}]^{d'}$ for $\ell \in \{0, \dots, h\}$.
\end{claim}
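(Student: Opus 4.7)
The plan is to separate the common ancestors from the divergent part of the two root-to-leaf paths, reduce to a per-coordinate concentration question, and then aggregate over the $d'$ coordinates via Hoeffding. Let $L\in\{0,\dots,h\}$ be the depth of the least common ancestor of leaves $i$ and $j$ in $\bT$. For $\ell\leq L$ we have $\bv_\ell=\bu_\ell$, so those terms cancel from the sum; for $\ell>L$ the nodes $\bv_\ell,\bu_\ell$ are distinct, hence $\bz_{\bv_\ell}$ and $\bz_{\bu_\ell}$ are independent. From the edge weights in Lemma~\ref{lem:aik-tree}, $\sfd_{\bT}(i,j)=2\sum_{\ell=L}^{h-1}\Phi/2^\ell=\Theta(\Phi/2^L)$.

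Write $\bW:=\sum_{\ell=L+1}^{h}(\bz_{\bv_\ell}-\bz_{\bu_\ell})\in\R^{d'}$, so the goal becomes $\|\bW\|_1/d'=\Theta(\eps)\cdot\sfd_{\bT}(i,j)$ with probability at least $1-1/n^{10}$. Fix a coordinate $k\in[d']$ and let $\bW_k$ denote its value; then $\bW_k$ is a sum of independent, symmetric, mean-zero random variables supported in $[-a_\ell,a_\ell]$ with $a_\ell:=\eps\Phi/2^\ell$. Using $\Var(\mathrm{Unif}[0,a]-\mathrm{Unif}[0,a])=a^2/6$ and the geometric sum $\sum_{\ell>L}a_\ell^2=\Theta(a_{L+1}^2)$, one computes $\Var(\bW_k)=:\sigma^2$ with $\sigma=\Theta(\eps\Phi/2^L)=\Theta(\eps)\cdot\sfd_{\bT}(i,j)$, and deterministically $|\bW_k|\leq\sum_{\ell>L}a_\ell=O(\sigma)$.

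To pin down $\mu:=\Ex[|\bW_k|]$ I would invoke a fourth-moment Paley--Zygmund argument. Writing $\bW_k=\sum_\ell \bY_\ell$ with $\bY_\ell:=\bz_{\bv_\ell,k}-\bz_{\bu_\ell,k}$ and using $\Ex[\bY_\ell^4]=O(a_\ell^4)$, the expansion of $(\sum_\ell \bY_\ell)^4$ combined with the geometric decay of $a_\ell$ yields $\Ex[\bW_k^4]=O(\sigma^4)$. Paley--Zygmund then gives $\Pr[\bW_k^2\geq\sigma^2/4]=\Omega(1)$, whence $\mu=\Omega(\sigma)$; together with $\mu\leq\sigma$ from Cauchy--Schwarz we conclude $\mu=\Theta(\sigma)=\Theta(\eps)\cdot\sfd_{\bT}(i,j)$. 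Finally, $\|\bW\|_1=\sum_{k=1}^{d'}|\bW_k|$ is a sum of $d'=\Theta(\log n)$ i.i.d.\ variables bounded by $O(\sigma)$ with common mean $\mu$, and Hoeffding gives $\Pr[|\|\bW\|_1-d'\mu|\geq d'\mu/2]\leq 2\exp(-\Omega(d'))\leq 1/n^{10}$ once the constant hidden in $d'=O(\log n)$ is taken sufficiently large. Dividing through by $d'$ yields the claim.

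The main obstacle will be the lower bound $\mu=\Omega(\sigma)$: a direct Hoeffding tail applied to $\bW_k$ itself controls only the upper tail and does not preclude anomalously small values of $|\bW_k|$, so the fourth-moment computation is essential. Crucially, the bound $\Ex[\bW_k^4]=O(\sigma^4)$ relies on the geometric decay of the $a_\ell$'s --- without that decay, $\sum_\ell a_\ell^4$ would dominate $\sigma^4$ and Paley--Zygmund would not suffice; this is exactly where the depth-dependent scaling of the perturbation in \textsc{Embed-and-Perturb} is being used.
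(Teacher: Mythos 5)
Your proof is correct and reaches the same conclusion as the paper's, namely that each coordinate $\bW_k$ satisfies $\Ex[|\bW_k|] = \Theta(\sigma)$ with $\sigma = \Theta(\eps)\cdot\sfd_\bT(i,j)$, after which a Chernoff/Hoeffding bound over the $d' = \Theta(\log n)$ i.i.d.\ coordinates finishes. The one difference is in how the lower bound $\Ex[|\bW_k|] = \Omega(\sigma)$ is obtained. The paper invokes a Khintchine-type inequality for sums of independent symmetric random variables, which reduces to Khintchine for Rademacher sums via the symmetrization $X_\ell \stackrel{d}{=} \eps_\ell |X_\ell|$ and then needs the dominant (level $L{+}1$) term to give $\Ex[(\sum_\ell X_\ell^2)^{1/2}] = \Omega(\sigma)$. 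You instead bound the fourth moment, $\Ex[\bW_k^4] = O(\sigma^4)$, and apply Paley--Zygmund; this is an equally standard and slightly more self-contained route. Both arguments are fine, and yours has the virtue of being explicit where the paper's one-line ``Khintchine'' citation is a bit terse.

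One small correction to your closing aside: it is not true that geometric decay is what saves the fourth-moment bound. For independent mean-zero terms, $\Ex[\bW_k^4] = \sum_\ell \Ex[\bY_\ell^4] + 3\sum_{\ell\neq\ell'}\Ex[\bY_\ell^2]\Ex[\bY_{\ell'}^2]$, and since $\sum_\ell a_\ell^4 \leq (\sum_\ell a_\ell^2)^2$ always, the bound $\Ex[\bW_k^4] = O(\sigma^4)$ holds for \emph{any} choice of scales, not just geometrically decaying ones; Paley--Zygmund would go through even with equal scales. What geometric decay \emph{is} used for, in both your argument and the paper's, is the deterministic bound $|\bW_k| \leq \sum_{\ell>L} a_\ell = O(\sigma)$, which is what makes the final Hoeffding bound over the $d'$ coordinates yield exponentially small failure probability. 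This does not affect the correctness of your proof, only the explanatory remark.
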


\begin{proof}
    For any $k \in [d']$, we let $\bS_k \in \R$ denote the $k$-th coordinate of the vector $(1/d')\sum_{\ell=0}^h (\bz_{\bv_{\ell}} - \bz_{\bu_{\ell}})$. We may equivalently write the random variable
    \begin{align*}
        \bS_k = \frac{1}{d'}\sum_{\ell=0}^h (\bz_{\bv_{\ell}} - \bz_{\bu_{\ell}})_k = \frac{1}{d'} \sum_{\ell=0}^h \ind\{\bv_{\ell} \neq \bu_{\ell} \} \cdot \frac{\eps \cdot \Phi}{2^{\ell}}\left(\ba_k - \bb_k  \right),
    \end{align*}
    where $\ba_k, \bb_k \sim [0, 1]$. Since the random variable $\ba_k - \bb_k$ is symmetric about the origin, the Khintchine inequality implies that 
    \begin{align*}
        \Ex\left[ |\bS_k| \right] = \Theta(1/d') \cdot \left( \sum_{\ell=0}^h \ind\{ \bv_{\ell} \neq \bu_{\ell} \} \cdot \frac{\eps^2 \cdot \Phi^2}{2^{2\ell}}\right)^{1/2} = \Theta(\eps/d') \cdot \sfd_{\bT}(i,j),
    \end{align*}
    where the last equality comes from the draw of $\bT$ in Lemma~\ref{lem:aik-tree}, since $\sfd_{\bT}(i, j)$ is exactly $2 \sum_{\ell=0}^h \ind\{ \bv_{\ell} \neq \bu_{\ell} \} \cdot \Phi / 2^{\ell}$. Similarly, $|\bS_k|$ is bounded by $\Theta(\eps/d') \cdot \sfd_{\bT}(i,j)$ as well, the event follows from standard concentration inequalities on sums of $d' = \Omega(\log n)$ independent and bounded random variables.  
\end{proof}

\begin{claim}\label{lem:tree-bilip}
Execute \textsc{Embed-and-Perturb}$(X, \eps)$ and assume events $\calbE_1$ and $\calbE_3$ hold. Then, every $i, j \in [n]$ satisfies
\[ \eps \cdot \|\by_i - \by_j \|_1 \leq \sfd_{\bT}(i,j) \leq O\left(\log n\right) \cdot \|\by_i - \by_j\|_1. \]
\end{claim}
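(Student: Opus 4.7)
The plan is to exploit the fact that the perturbation coordinates are appended to each point in a block of $d'$ coordinates disjoint from the original $d$ coordinates, so that the $\ell_1$ distance splits additively as
\[ \|\by_i - \by_j\|_1 = \|x_i - x_j\|_1 + \left\| \tfrac{1}{d'} \sum_{\ell=0}^h (\bz_{\bv_\ell} - \bz_{\bu_\ell}) \right\|_1, \]
where $\bv_0, \dots, \bv_h$ and $\bu_0, \dots, \bu_h$ are the root-to-leaf paths in $\bT$ to the leaves $i$ and $j$ respectively. Under the assumed events, $\calbE_1$ bounds the first summand by $O(\log n) \cdot \sfd_{\bT}(i,j)$, and $\calbE_3$ evaluates the second summand as $\Theta(\eps) \cdot \sfd_{\bT}(i,j)$. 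Both directions of the bi-Lipschitz inequality then reduce to one-line rearrangements.

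For the upper inequality $\sfd_{\bT}(i,j) \leq O(\log n) \cdot \|\by_i - \by_j\|_1$, I would discard the nonnegative ambient contribution and invoke $\calbE_3$ to deduce $\|\by_i - \by_j\|_1 \geq \Omega(\eps) \cdot \sfd_{\bT}(i,j)$, then rearrange (noting that the resulting $O(1/\eps)$ factor matches the $O(\log n/\eps)$ distortion promised by the parent \Cref{lem:quadtree existence}). For the lower inequality $\eps \cdot \|\by_i - \by_j\|_1 \leq \sfd_{\bT}(i,j)$, I would plug both hypotheses into the decomposition to obtain
\[ \|\by_i - \by_j\|_1 \leq \bigl( O(\log n) + \Theta(\eps) \bigr) \sfd_{\bT}(i,j) = O(\log n) \cdot \sfd_{\bT}(i,j), \]
using $\eps \leq 1$ to absorb the second term, and then rearrange. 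This yields $\sfd_{\bT}(i,j) \geq \Omega(1/\log n) \cdot \|\by_i - \by_j\|_1$, which dominates $\eps \cdot \|\by_i - \by_j\|_1$ in the operating regime where \textsc{Embed-and-Perturb} is invoked (namely with parameter $c\eps/\log \Phi$, which is $o(1/\log n)$ in the applications of interest).

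I do not foresee a genuine obstacle: both directions are essentially bookkeeping once the additive decomposition of $\|\by_i - \by_j\|_1$ is written down. The only thing to double-check is that the $\Theta(\eps)$ perturbation term from $\calbE_3$ never swamps the $O(\log n)$ contribution from $\calbE_1$, which is immediate from $\eps \leq 1$.
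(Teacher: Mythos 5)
Your proposal follows the paper's proof exactly: the same additive decomposition of $\|\by_i - \by_j\|_1$ into the ambient $\|x_i - x_j\|_1$ part and the perturbation part, with $\calbE_3$ controlling the latter and $\calbE_1$ the former. Both you and the paper therefore prove
\[
\Omega(\eps)\cdot \sfd_{\bT}(i,j) \;\leq\; \|\by_i - \by_j\|_1 \;\leq\; O(\log n)\cdot \sfd_{\bT}(i,j),
\]
which, rearranged, reads $\Omega(1/\log n)\cdot\|\by_i - \by_j\|_1 \leq \sfd_\bT(i,j) \leq O(1/\eps)\cdot\|\by_i - \by_j\|_1$.

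You noticed that this is not literally the same as the displayed claim, and tried to reconcile the two by invoking the operating regime $\eps = c\eps'/\log\Phi = o(1/\log n)$. That reconciliation works for the lower inequality (there, $\Omega(1/\log n)$ dominates $\eps$), but it fails for the upper one: in the same regime, $1/\eps \gtrsim \log n$, so the derived bound $\sfd_\bT(i,j) \leq O(1/\eps)\|\by_i - \by_j\|_1$ is strictly weaker than the claimed $\sfd_\bT(i,j) \leq O(\log n)\|\by_i - \by_j\|_1$, and there is no way to upgrade it. (Indeed $\sfd_\bT$ can exceed $\|x_i - x_j\|_1$ by a $\poly(n)$ factor in the unperturbed instance — that is the whole reason for the perturbation — so the only handle on $\sfd_\bT$ from above is $\calbE_3$, which gives exactly the $O(1/\eps)$ factor.) The claim as displayed therefore appears to have a typo: the $\eps$ and $\log n$ factors (equivalently, the roles of $\sfd_\bT$ and $\|\by_i - \by_j\|_1$) are transposed relative to what the proof actually shows. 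The fix is to state the conclusion in the form given above; this does not affect \Cref{lem:quadtree existence}, which only quotes the bi-Lipschitz distortion $O(\log n \log\Phi/\eps)$, i.e.\ the product of the two constants, which is the same either way. So: your argument matches the paper's and you correctly spotted the inconsistency, but the "regime" patch should be dropped in favor of correcting the claim's statement.
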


\begin{proof}
We lower bound
 \begin{align*}
     \| \by_i - \by_j \|_1 \geq \left\| \frac{1}{d'}\sum_{\ell=0}^h (\bz_{\bv_{\ell}} - \bz_{\bu_{\ell}} )\right\|_1 = \Omega(\eps)\cdot \sfd_{\bT}(i, j),
 \end{align*}
 from $\calbE_3$, and furthermore, we can upper bound using both $\calbE_1$ and $\calbE_3$
 \begin{align*}
     \|\by_i - \by_j\|_1 = \|x_i - x_j\|_1 + \left\| \frac{1}{d'}\sum_{\ell=0}^h (\bz_{\bv_{\ell}} - \bz_{\bu_{\ell}} )\right\|_1  \leq O(\log(n)) \cdot \sfd_{\bT}(i,j) + O(\eps) \cdot \sfd_{\bT}(i,j),
 \end{align*}
 which is at most $O(\log n) \cdot \sfd_{\bT}(i, j)$, since $\eps \in (0, 1)$.
\end{proof}

\begin{claim}\label{lem:perturbed-emd}
For any vector $b \in V$, execute $\textsc{Embed-and-Perturb}(X, \eps)$ and assume $\calbE_2$ and $\calbE_3$ hold. Then,
\[ \EMD_{X}(b) \leq \EMD_{\bY}(b) \leq (1+O(\eps \log \Phi)) \cdot \EMD_{X}(b). \]
\end{claim}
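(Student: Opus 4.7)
The plan is to handle the two inequalities separately, exploiting the structure of the perturbation: by construction, each $\by_i$ consists of the original $x_i$ concatenated with an additional $d'$ coordinates encoding the tree-path perturbation, so the $\ell_1$ distance decomposes as
\[ \|\by_i - \by_j\|_1 = \|x_i - x_j\|_1 + \bigg\| \tfrac{1}{d'}\sum_{\ell=0}^h (\bz_{\bv_\ell} - \bz_{\bu_\ell}) \bigg\|_1, \]
where $\bv_0,\dots,\bv_h$ and $\bu_0,\dots,\bu_h$ are the root-to-leaf paths to $i$ and $j$ in $\bT$. This identity is the workhorse for both directions.

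For the lower bound $\EMD_X(b) \leq \EMD_{\bY}(b)$, I would observe that the decomposition above immediately gives $\|\by_i - \by_j\|_1 \geq \|x_i - x_j\|_1$ pointwise. The feasibility of a flow $\gamma \in \R^{n\times n}_{\geq 0}$ depends only on $b$, not on the underlying metric, so any flow feasible for $\EMD_{\bY}(b)$ is also feasible for $\EMD_X(b)$ and has cost at least as large in the perturbed metric. Taking the minimum over such $\gamma$ yields the inequality.

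For the upper bound, I would let $\gamma^*$ be the optimal flow realizing $\EMD_X(b)$ (the same flow referenced in the definition of $\calbE_2$), and use $\gamma^*$ as a feasible flow for $\EMD_{\bY}(b)$. Its cost expands as
\[ \sum_{i,j} \gamma^*_{ij} \cdot \|\by_i - \by_j\|_1 = \EMD_X(b) + \sum_{i,j} \gamma^*_{ij} \cdot \bigg\| \tfrac{1}{d'}\sum_{\ell=0}^h (\bz_{\bv_\ell} - \bz_{\bu_\ell}) \bigg\|_1. \]
Event $\calbE_3$ bounds each perturbation norm by $O(\eps) \cdot \sfd_{\bT}(i,j)$, so the residual is at most $O(\eps) \cdot \sum_{i,j} \gamma^*_{ij} \sfd_{\bT}(i,j)$. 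Event $\calbE_2$ now caps this tree-cost of $\gamma^*$ by $O(\log \Phi) \cdot \EMD_X(b)$, giving a total additive slack of $O(\eps \log \Phi) \cdot \EMD_X(b)$, which is exactly what is needed.

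I do not anticipate a real obstacle here: the claim is almost purely bookkeeping, with the two events $\calbE_2$ and $\calbE_3$ tailored precisely to control the two quantities that appear. The only subtlety worth being explicit about is that $\gamma^*$ is chosen before the perturbation (it depends only on $X$ and $b$), and in particular it is the \emph{same} flow used to define $\calbE_2$ — so applying $\calbE_2$ to bound $\sum_{i,j} \gamma^*_{ij} \sfd_{\bT}(i,j)$ is legitimate. No per-pair union bound over flow edges is required, which is useful because the support of $\gamma^*$ can in principle be as large as $n^2$.
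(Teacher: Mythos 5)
Your proof is correct and follows the same route as the paper: lower bound by the pointwise inequality $\|\by_i - \by_j\|_1 \geq \|x_i - x_j\|_1$, upper bound by routing the optimal flow for $\EMD_X(b)$ in $\bY$ and applying $\calbE_3$ to control the per-pair perturbation and $\calbE_2$ to control the tree-cost of that flow. The remark about $\gamma^*$ being the same flow referenced in $\calbE_2$ is a correct (and useful) reading of the paper's setup, though the paper leaves it implicit.
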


\begin{proof}
    The upper bound, $\EMD_X(b) \leq \EMD_{\bY}(b)$, occurs with probability $1$ because every $i,j \in [n]$ always satisfy $\|\by_i - \by_j\|_1 \geq \|x_i - x_j\|_1$. This means that any feasible flow will have its cost be larger in $\bY$ than in $X$. In order to prove the upper bound, let $\gamma \in \R^{n\times n}_{\geq 0}$ denote the feasible flow which realizes $\EMD_{X}(b)$, so
    \[ \EMD_{X}(b) = \sum_{i=1}^n \sum_{j=1}^n \gamma_{ij} \cdot \|x_i - x_j\|_1. \]
    We can upper bound the cost of this flow by first applying the conditions of $\calbE_3$, followed by the conditions of $\calbE_2$:
    \begin{align*}
        \sum_{i=1}^n \sum_{j=1}^n \gamma_{ij} \cdot \|\by_i - \by_j\|_1 &\leq \sum_{i=1}^n \sum_{j=1}^n \gamma_{ij} \left(\|x_i - x_j\|_1 + \Theta(\eps) \cdot \sfd_{\bT}(i, j) \right) \\
                &= \EMD_{X}(b) + \Theta(\eps)\cdot \sum_{i=1}^n \sum_{j=1}^n \gamma_{ij} \cdot \sfd_{\bT}(i,j) \\
                &\leq (1+O(\eps \log \Phi)) \cdot \EMD_{X}(b).
    \end{align*}    
\end{proof}

\paragraph{Reduction to Polynomial Aspect Ratio.} We show that it suffices to consider a bounded aspect ratio $\Phi = \poly(nd\varepsilon^{-1})$. 

\begin{lemma}\label{lem:aspect-ratio}
    There exists a randomized algorithm which runs in $O(nd + n \log n)$ time, succeeds with probability $0.9$, and has the following guarantees:
    \begin{itemize}
        \item \emph{\textbf{Input}}: A set $X = \{ x_1,\dots, x_n\} \subset (\R^d, \ell_1)$ and an integer vector $b \in V$.
        \item \emph{\textbf{Output}}: Collections of points $\bY_1,\dots, \bY_t \subset [1, \Phi]^{d + O(\log n)}$ and a corresponding set of supply/demand integer vectors $b_1,\dots, b_t \in V$.
    \end{itemize}
     Moreover, $\Phi = \poly(nd\varepsilon^{-1})$, and 
    \[ \left| \EMD_{X}(b) - \sum_{i=1}^t \EMD_{\bY_i}(b_i) \right| \leq \eps \cdot \EMD_X(b). \]
\end{lemma}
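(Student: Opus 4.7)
The plan is to use a probabilistic tree embedding to locate the scales at which $\EMD_X(b)$ is supported, and then to decompose $X$ into subproblems each supported on a bounded-diameter region with coordinates snapped to a fine grid.

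First, I would sample a tree $\bT$ via Lemma~\ref{lem:aik-tree} applied to $X$, and compute $\hat{U} := \EMD_{\bT}(b)$ in $O(n)$ time via the bottom-up sum $\hat{U} = \sum_v w_v \cdot \abs{\sum_{x_i \in v} b_i}$ over internal vertices $v$ of $\bT$, where $w_v$ is the edge-weight above $v$. Combining the high-probability lower bound on tree distances in Lemma~\ref{lem:aik-tree} with Markov's inequality applied over a constant number of independent trials yields, with probability at least $0.99$, an estimate $\hat{U}$ satisfying $\EMD_X(b) \leq O(\log n) \cdot \hat{U}$ and $\hat{U} \leq O(\log n \log \Phi_0) \cdot \EMD_X(b)$, where $\Phi_0$ is the initial (possibly super-polynomial) aspect ratio of $X$.

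Second, using $\hat{U}$, I would fix a coarse scale $R_{\max} := \Theta(\hat{U} \cdot \polylog(n) / \eps)$ and a fine snapping scale $R_{\min} := \Theta(\eps \hat{U} / (n d \polylog(n)))$. At the level $\ell^*$ of $\bT$ whose edge-weight is $\Theta(R_{\max})$, the subtrees partition $X$ into clusters $C_1, \dots, C_t$. Each $\bY_k$ is constructed by snapping the points of $C_k$ to a multiple of $R_{\min}$, translating to $[1,\Phi]^d$ with $\Phi := O(R_{\max}/R_{\min}) = \poly(n, d, 1/\eps)$, and padding with $O(\log n)$ coordinates fixed to $1$. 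Let $\Delta_k := \sum_{i : x_i \in C_k} b_i$; the vector $b_k$ copies $b$ on $C_k$ and adjoins a single dummy entry with value $-\Delta_k$ at a canonical location inside $C_k$, ensuring $\sum b_k = 0$.

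Third, the approximation bound $|\EMD_X(b) - \sum_k \EMD_{\bY_k}(b_k)| \leq \eps \cdot \EMD_X(b)$ decomposes into three error sources: (a) coordinate snapping perturbs each flow unit's cost by at most $O(d R_{\min})$, aggregating to $O(n d R_{\min}) = O(\eps \EMD_X(b)/\polylog(n))$ over all $O(\|b\|_1) = O(n)$ units; (b) the inter-cluster part of the optimal flow carries at most $O(\EMD_X(b) \polylog(n) / R_{\max}) = O(\eps)$ mass, since each inter-cluster unit must traverse at least $\Omega(R_{\max}/\polylog(n))$ in the actual metric by the whp lower bound of Lemma~\ref{lem:aik-tree}; and (c) the dummy rebalancing cost is controlled by charging each dummy's within-cluster transport to the corresponding intra-cluster portion of the optimum's inter-cluster exits, provided the dummy is placed near the centroid of those exits.

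The main obstacle is turning (c) into a concrete, data-oblivious placement rule: a naive placement blows up the dummy cost by a constant or $\polylog$ factor rather than yielding $(1+\eps)$, so one must show that a canonical choice (e.g., the median of $C_k$, the tree-root-representative of $C_k$, or an average over a random shift of the grid) approximates the unknown optimum's exit centroid closely enough to match the $(1+\eps)$ target. A secondary challenge is meeting the $O(nd + n \log n)$ runtime, which is sublinear in $\log \Phi_0$ and therefore forbids building the full quadtree: the algorithm must instead realize only the level-$\ell^*$ clustering by hashing through a single randomly shifted grid of cell-width $R_{\max}$, with $R_{\max}$ calibrated from a preliminary $O(nd)$ diameter estimate.
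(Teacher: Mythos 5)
Your proposal has two genuine gaps, both of which the paper avoids by exploiting a hypothesis you leave unused: the statement says $b$ is an \emph{integer} vector. The paper's Lemma~\ref{lem:partition X preserve emd} uses this crucially. It first gets a crude $\poly(nd)$-approximation $\boldeta \geq \EMD_X(b)$ via a random 1-D projection (Lemma~\ref{lem:bad-approx}), then imposes a randomly shifted grid of side $100\boldeta$. Because $b$ is integral, the optimal flow $\gamma$ can be taken integral, and the expected number of integral flow pairs cut by the grid is $\sum_{ij}\gamma_{ij}\|x_i-x_j\|_1/(100\boldeta) \le \EMD_X(b)/(100\boldeta) \le 1/100$. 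Since this count is a nonnegative integer, it is \emph{zero} with probability $0.99$: no flow crosses a cluster boundary at all. Consequently each $b_k$ automatically sums to zero, no dummy vertices are needed, and the partition preserves $\EMD_X(b)$ exactly. Your item (c) — the dummy placement problem, which you correctly flag as the main obstacle and do not resolve — never arises. Without integrality your approach forces you to route the leftover inter-cluster mass (order $\eps$ units, by your own bound) over a cluster of $\ell_1$-diameter as large as $O(\log n \cdot R_{\max}) = O(\hat U\,\polylog(n)/\eps)$; multiplied out this is $\polylog(n)\cdot\EMD_X(b)$, not $\eps\cdot\EMD_X(b)$, and the centroid/charging fix you sketch would need a real argument to close this polylog gap. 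This is exactly the kind of lossy step the integrality trick eliminates cleanly.

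The second gap is the runtime. Your crude estimate $\hat U = \EMD_{\bT}(b)$ requires evaluating the tree-EMD formula across all $\Theta(\log\Phi_0)$ levels of the quadtree, and the construction in Lemma~\ref{lem:aik-tree} takes $\Theta(nd\log\Phi_0)$ time; when the initial aspect ratio $\Phi_0$ is super-polynomial (which is precisely the case this lemma is designed to handle), this exceeds the required $O(nd + n\log n)$. Your proposed workaround — hash to a single grid at cell width $R_{\max}$ — is circular: you need $\hat U$ to set $R_{\max}$, and you need the tree to compute $\hat U$. The paper instead obtains the crude estimate from one Gaussian projection to $\R$ followed by a sort, which is $O(nd+n\log n)$ with no $\log\Phi_0$ dependence; the poor approximation factor $\tilde O(n^2\sqrt d)$ is harmless because it only dilutes the expected cut count, which is still $\ll 1$. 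After the integrality-based partition controls the maximum pairwise distance, Claim~\ref{clm:min distance} adds small independent noise coordinates to lower-bound the minimum pairwise distance, producing the polynomial aspect ratio — your snapping-to-$R_{\min}$ step plays an analogous role, and that part of your error bookkeeping (item (a)) is fine.

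Summary: your scale-based decomposition and snapping are in the right spirit, but the route through a tree embedding and dummy vertices both overshoots the runtime budget and introduces an error term you cannot bound to $(1+\eps)$. The missing idea is to use integrality of $b$ plus a randomly shifted grid at a crudely estimated scale to get an \emph{exact} (not $(1\pm\eps)$) cost-preserving partition with probability $0.99$.
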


The proof of this lemma is in Appendix~\ref{sec:poly-aspect}, and proceeds along the following lines. First, we show that we may obtain a polynomial approximation to $\EMD_{X}(b)$ in near-linear time. The polynomial approximation is enough for us to randomly partition the instance into parts of small diameter, such that it suffices to independently solve each part. Finally, we add a small amount of noise to each point in each part, which ensures that the minimum pairwise distance is lower bounded, without significantly affecting the value of the Earth Mover's Distance.
\section{Computing EMD via Multiplicative Weights Update}
\label{sec:template}

\newcommand{\sfU}{\mathsf{U}}
\newcommand{\gammagap}{\gamma_{\textsf{gap}}}

In this section, we present a template of our algorithm for computing $\EMD_{\ell_1}(X, Y)$, for input multi-sets $X = \{ x_1, \dots, x_n \}$ and $Y = \{ y_1, \dots, y_n\}$ from $\R^d$. We seek an algorithm which obtains a $(1+\eps)$-approximation for $\eps \in (0, 1)$. Following Section~\ref{sec:oblivious routing}, we may assume the following facts about our input:

\begin{itemize}
\item The points $x_1,\dots, x_n, y_1,\dots, y_n \in \R^{d}$ all have non-zero pairwise distances in $[1, \Phi]$, where we may assume $\Phi$ is a power of $2$ and at most $\poly(nd\varepsilon^{-1})$. 
\item There is a rooted tree $T$ of depth $h = O(\log \Phi)$ with root at depth $0$ and $2n$ leaves (associated with each point of $X$ and $Y$) at depth $h$, and every edge from depth $\ell$ to $\ell+1$ has weight $\Phi / 2^{\ell}$ (which will be an integer). 
\item The tree $T$ gives a bounded-distortion embedding for pairs of points in $X$ and $Y$; i.e., for parameters $\sfD_\ell = \eps / \log \Phi$ and $\sfD_u = O(\log n)$, every pair $i, j \in [n]$ satisfies
\begin{align} 
\sfD_{l} \cdot \| x_i - y_j\|_1 \leq \sfd_{T}(i,j) \leq \sfD_u \cdot \|x_i - y_j\|_1. \label{eq:dist-guarantee}
\end{align}
\end{itemize}
Recall that a bottom-up greedy assignment from $X$ to $Y$ in $T$ gives a matching which realizes $\EMD_{T}(X, Y)$. This bottom-up greedy matching may be computed in $O(n \log \Phi)$ time, and because all pairwise distances satisfy (\ref{eq:dist-guarantee}), we find a number $t_0 \in \R_{\geq 0}$ which satisfies, for $\sfD_T = \sfD_u / \sfD_\ell =  O(\eps^{-1} \cdot \log n \cdot \log \Phi)$,
\[ t_0 \leq \EMD_{\ell_1}(X, Y) \leq \sfD_T \cdot t_0. \]
In this section, we show that there exist a randomized algorithm satisfies the following: if $\EMD_{\ell_1}(X, Y) \geq (1+3 \eps) t$, it can produce a ``certificate'' that $\EMD_{\ell_1}(X, Y)$ is larger than $t$ (which is correct except with a negligible probability); if it cannot find such certificate, it outputs ``fail.''  This guarantee is enough to obtain a $(1+O(\eps))$-approximation to $\EMD_{\ell_1}(X, Y)$ up to an additional $O(\eps^{-1} \cdot \log \sfD_T )$ factor in the running time, by starting at $t = t_0$ and outputting the smallest $t \leq \sfD_T t_0$ where the algorithm outputs ``fail.'' First, we establish the notion of the certificate to prove that $\EMD_{\ell_1}(X, Y) \geq t$. 

\begin{definition}[Consistent Rounding of Distances]\label{def:rounded-distances}
Fix $\eps \in (0, 1)$ and the set of points $X = \{ x_1,\dots, x_n\}$ and $Y = \{y_1, \dots, y_n\}$ with distances in $[1, \Phi]$. Consider the two matrices:
\begin{itemize}
\item \emph{\textbf{Rounded Distances.}} The matrix $\psi \in \{ 0, \dots, \log \Phi / \eps \}^{n\times n}$ is such that $i, j \in [n]$ satisfies $(1+\eps)^{\psi_{ij}} \leq \|x_i - y_j\|_1 < (1+\eps)^{\psi_{ij} + 1}$.
\item \emph{\textbf{Up/Down Rounding}}. The boolean matrix $S \in \{0,1\}^{n \times n}$, to be specified, which will decide how to round the distance of pair $(i, j)$.
\end{itemize}
We define $C = C(S)$ to be the $n \times n$ cost matrix of rounded distances, where ${C_{ij} = (1+\eps)^{\psi_{ij} - 2\cdot S_{ij}}}$, and note that
\[
1 \leq \dfrac{\|x_i - y_j\|_1}{C_{ij}} \leq 1+4\eps.
\]
\end{definition}

\begin{definition}[Cost Certificate]
For an $n\times n$ matrix $C$ (from Definition~\ref{def:rounded-distances}), we let $\alpha, \beta \in \R^n$ be two vectors, and for $t \geq t_0$, we let
\[ \Gamma_t = \left\{ (\alpha, \beta) \in \R^{n} \times \R^n  : \frac{1}{t} \cdot \left( \sum_{i=1}^n \alpha_i - \sum_{j=1}^n \beta_j \right) - \max_{i,j} \frac{|\alpha_i - \beta_j|}{C_{ij}} > 0 \right\}. \]
\end{definition}

\begin{claim}\label{cl:cert}
If $\Gamma_t \neq \emptyset$, then $\EMD_{\ell_1}(X, Y) \geq t$.
\end{claim}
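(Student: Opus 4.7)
The plan is to prove Claim~\ref{cl:cert} by a direct appeal to weak LP duality for the transportation problem in Figure~\ref{fig:LP}, after appropriately rescaling any witness $(\alpha,\beta) \in \Gamma_t$ so that it becomes dual-feasible with respect to the true $\ell_1$ costs $\|x_i - y_j\|_1$.

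First, I would unpack the hypothesis. If $(\alpha,\beta) \in \Gamma_t$, set $M = \max_{i,j} |\alpha_i - \beta_j|/C_{ij}$. The defining inequality of $\Gamma_t$ gives
\[
\frac{1}{t}\left(\sum_i \alpha_i - \sum_j \beta_j\right) > M \geq 0.
\]
A quick sanity check rules out the degenerate case $M=0$: if $M=0$ then $\alpha_i = \beta_j$ for every $i,j$, so all $\alpha_i$ and $\beta_j$ collapse to the same constant and $\sum_i \alpha_i - \sum_j \beta_j = 0$, contradicting the strict inequality above (as $t > 0$). Hence $M > 0$ and I may rescale $\tilde\alpha := \alpha/M$, $\tilde\beta := \beta/M$, which yields
\[
|\tilde\alpha_i - \tilde\beta_j| \leq C_{ij} \quad \text{for all } i,j, \qquad \sum_i \tilde\alpha_i - \sum_j \tilde\beta_j \; > \; t.
\]

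Second, I would use the cost-matrix bound from Definition~\ref{def:rounded-distances}, namely $C_{ij} \leq \|x_i - y_j\|_1$, to upgrade feasibility: in particular $\tilde\alpha_i - \tilde\beta_j \leq C_{ij} \leq \|x_i - y_j\|_1$ for all $i,j$. Thus $(\tilde\alpha,\tilde\beta)$ is feasible for the dual program in Figure~\ref{fig:LP} with cost matrix equal to the actual $\ell_1$ distances.

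Finally, weak LP duality applied to the primal/dual pair of Figure~\ref{fig:LP} gives
\[
\EMD_{\ell_1}(X,Y) \; \geq \; \sum_i \tilde\alpha_i - \sum_j \tilde\beta_j \; > \; t,
\]
which proves the claim. There is no real obstacle here; the only thing to be mildly careful about is the rescaling step (handling $M=0$) and the direction of the rounding — the proof critically uses that $C_{ij}$ is a \emph{lower} bound on $\|x_i-y_j\|_1$, which is precisely what Definition~\ref{def:rounded-distances} guarantees via $\|x_i-y_j\|_1/C_{ij} \in [1, 1+4\varepsilon]$. The fact that $\Gamma_t$ uses the absolute value $|\alpha_i-\beta_j|$ rather than the one-sided constraint $\alpha_i-\beta_j$ is harmless for this direction (it only gives a stronger hypothesis than needed).
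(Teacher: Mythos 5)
Your proof is correct and follows essentially the same route as the paper: rescale the witness $(\alpha,\beta) \in \Gamma_t$ so that it becomes dual-feasible for the LP in Figure~\ref{fig:LP} (using that $C_{ij} \leq \|x_i - y_j\|_1$), then apply weak LP duality. Your choice of normalization constant, $\max_{i,j}|\alpha_i-\beta_j|/C_{ij}$ rather than the paper's $\max_{i,j}|\alpha_i-\beta_j|/\|x_i-y_j\|_1$, is a cosmetic variant that if anything makes the final chain of inequalities slightly cleaner, since it yields $\sum_i \tilde\alpha_i - \sum_j \tilde\beta_j > t$ directly from $\Gamma_t$ membership.
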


\begin{proof}

Suppose $(\alpha, \beta) \in \Gamma_t$ and define 
\[
(\alpha', \beta') := \left(\max_{i, j \in [n]} \frac{|\alpha_i - \beta_j|}{||x_i - y_j||}\right)^{-1} \cdot (\alpha, \beta).
\]
Then, $(\alpha', \beta') \in \Gamma_t$, it is a feasible dual solution, and moreover $\max_{i, j\in [n]} \frac{\alpha'_i - \beta'_j}{||x_i - y_j||_1} = 1$. Thus, $\emd_{\ell_1}(X, Y) \geq \sum_{i\in[n]} \alpha'_i - \beta'_i  > t \cdot \left( \max_{i, j\in [n]} \frac{\alpha'_i - \beta'_j}{C_{i,j}} \right) \geq t$ as desired.
\end{proof}

\paragraph{Algorithmic Plan.} From Claim~\ref{cl:cert}, an algorithm which can find a point $(\alpha, \beta)$ guaranteed to lie in $\Gamma_t$ has a certificate that $\EMD_{\ell_1}(X, Y) \geq t$. Our plan is to: (i) either we find that $\EMD_{\ell_1}(X, Y) \leq (1+3\eps) t$, or (ii) execute a procedure which outputs a pair $(\alpha, \beta)$ such that, except with a small error probability, $(\alpha, \beta) \in \Gamma_t$. A sub-quadratic time algorithm may store a pair $(\alpha, \beta) \in \R^{n} \times \R^{n}$ (as it requires $2n$ numbers to specify). Note, however, that the polytope $\Gamma_t$ is defined by $2n^2$ inequality constraints, which means it is not at all clear how to verify whether $(\alpha, \beta) \in \Gamma_t$ (much less find one). 
We will find such a pair $(\alpha, \beta)$ with a multiplicative weights update rule. The algorithm will be iterative, and each iteration will require us to solve the task in Figure~\ref{fig:cert-task}.

\begin{figure}[H]
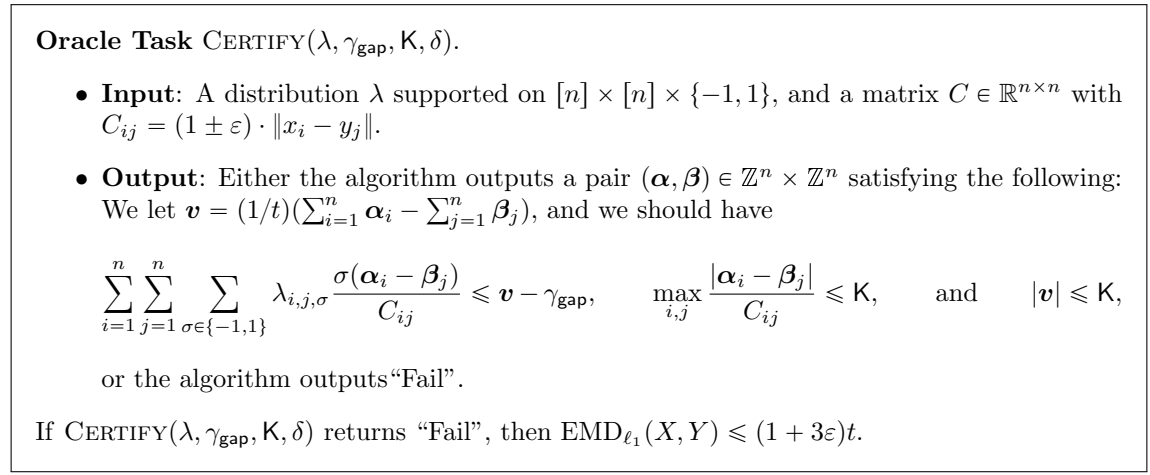

\begin{framed}
\noindent \textbf{Oracle Task} \textsc{Certify}$(\lambda,\gammagap,\sfK, \delta)$. 
\begin{itemize}
    \item \textbf{Input}: A distribution $\lambda$ supported on $[n] \times [n] \times \{-1,1\}$, 
    and a matrix $C \in \bbR^{n\times n}$ with $C_{ij} = (1\pm \varepsilon) \cdot  \|x_i - y_j\|$.
    \item \textbf{Output}: Either the algorithm outputs a pair $(\balpha, \bbeta) \in \Z^n \times \Z^n$ satisfying the following: We let $\bv = (1/t) ( \sum_{i=1}^n \balpha_i - \sum_{j=1}^n \bbeta_j)$, and we should have
  \[ \sum_{i=1}^n \sum_{j=1}^n \sum_{\sigma \in \{-1,1\}} \lambda_{i,j,\sigma} \frac{\sigma(\balpha_i - \bbeta_j)}{C_{ij}} \leq \bv - \gammagap, \qquad \max_{i,j}\dfrac{|\balpha_i - \bbeta_j|}{C_{ij}} \leq \sfK, \qquad \text{and} \qquad |\bv | \leq \sfK, \]
  or the algorithm outputs``Fail''.
\end{itemize}
If \textsc{Certify}$(\lambda,\gammagap,\sfK, \delta)$ returns ``Fail'', then $\emd_{\ell_1}(X, Y) \leq (1+ 3\varepsilon) t$.
\end{framed}
\caption{The certification oracle task for each iteration.}\label{fig:cert-task}
\end{figure}

Lemma \ref{lem:sep} below demonstrates an algorithm that can solve the $ \textsc{Certify}$ task by taking only $\tilde{O}(n/\delta )$ samples. For technical reasons, we will not be able to sample exactly from $\lambda$, but instead will have access to samples from a distribution $\lambda'$ which is very close to $\lambda$ in total variational distance. Namely, we can guarantee $\|\lambda' - \lambda\|_{\tv} \leq 1/\polylog(n)$ (for a sufficiently large polylogarithmic factor). Nevertheless, we show that we can still solve $ \textsc{Certify}$ even given access only to the approximate distribution $\lambda'$.

Note that Lemma \ref{lem:sep} has a small failure probability $\delta$, and requires $\tilde{O}(n/\delta)$ samples to succeed. However, since the procedure will only be called $\polylog(n)$ times, it will suffice to set $\delta = 1/\polylog(n)$, and thus we will require only $\tilde{O}(n)$ samples for every call to \textsc{Certify}. 

\begin{lemma}\label{lem:sep}
For any $\eps, \delta \in (0, 1)$, suppose that 
\[ \gammagap \leq \frac{\eps \cdot \sfD_\ell}{ 2 h} \qquad \text{and}\qquad \sfK \geq \sfD_u \cdot \sfD_T, \]
where $\sfD_T = \sfD_u / \sfD_\ell = O(\varepsilon^{-1} \cdot \log n \cdot \log \Phi)$. Moreover, suppose that we have sample access to a distribution $\lambda'$ such that $\|\lambda' - \lambda\|_{\tv} = o\left( \frac{\varepsilon \delta \sfD_\ell}{h \sfD_u}\right)$. Then there is a randomized algorithm which, with probability $1-\delta$, solves $\textsc{Certify}(\lambda, \gammagap, \sfK, \delta)$ using $\tilde{O}(n/\delta )$ samples from $\lambda'$ and running in time $O(n d \log \Phi)$.

\end{lemma}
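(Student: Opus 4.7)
The plan is to realise the sample-based \textsc{Certify} procedure sketched in \Cref{sec:tech-overview}, using the bi-Lipschitz tree $T$ from \Cref{sec:oblivious routing}. I argue the contrapositive: assuming $\emd_{\ell_1}(X, Y) > (1 + 3\varepsilon) t$, I exhibit $(\balpha, \bbeta) \in \Z^n \times \Z^n$ that satisfies the three inequalities. Following the template, define
\[
\mu_i = t \sum_{j,\sigma} \lambda_{i,j,\sigma} \cdot \sigma / C_{ij}, \qquad \nu_j = t \sum_{i,\sigma} \lambda_{i,j,\sigma} \cdot \sigma / C_{ij},
\]
so that $\gamma_{ij} = t(\lambda_{i,j,1} - \lambda_{i,j,-1})/C_{ij}$ is a feasible flow of cost at most $t$ for $\emd(\mu, \nu)$. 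Since $\emd$ is a norm (\Cref{fact:emd-norm}) and $\sfd_T(i,j) \geq \sfD_\ell \|x_i - y_j\|_1$ by \eqref{eq:dist-guarantee}, one obtains $\emd_T(\mathbf{1} - \mu, \mathbf{1} - \nu) \geq \sfD_\ell \big(\emd(\mathbf{1}, \mathbf{1}) - \emd(\mu, \nu)\big) > 3 \varepsilon t \cdot \sfD_\ell$.

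Applying the level-wise formula \eqref{eq:emd on tree embeddingss} and averaging over the $h = O(\log \Phi)$ levels then yields a level $\ell^*$ at which $\sum_{v \in T_{\ell^*}} |\mathcal{Q}_v(\lambda)| \geq 3 \varepsilon t \sfD_\ell / h \geq 6 \gammagap t$ by the assumption on $\gammagap$. Set $\balpha_i^* = \bbeta_j^* = (\Phi / 2^{\ell^*}) \cdot \text{sign}(\mathcal{Q}_v(\lambda))$ for every $x_i, y_j$ contained in the subtree rooted at $v \in T_{\ell^*}$; these are integer-valued since $\Phi/2^{\ell^*}$ divides $\Phi$. Rearranging \eqref{eq:heavy level} exactly as in the template shows the separation inequality holds with slack roughly $6 \gammagap t$. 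Whenever $\balpha_i^* - \bbeta_j^* \neq 0$, the leaves $i, j$ lie in distinct subtrees at level $\ell^*$, so $|\balpha_i^* - \bbeta_j^*| \leq 2 \Phi / 2^{\ell^*} \leq 2 \sfd_T(i, j)$; the upper bound $\sfd_T(i,j) \leq \sfD_u \|x_i - y_j\|_1$ then yields $|\balpha_i^* - \bbeta_j^*|/C_{ij} = O(\sfD_u) \leq \sfK$, and the bound $|\bv| \leq \sfK$ follows similarly from $|\bv| \leq \sfD_T \cdot \max_{i,j}|\balpha_i^* - \bbeta_j^*|/C_{ij}$.

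For efficiency, I would replace $\mathcal{Q}_v(\lambda)$ with its empirical counterpart $\mathcal{Q}_v(\hat\lambda)$ computed from $s = \tilde O(n/\delta)$ i.i.d.\ samples drawn from $\lambda'$, and choose $\ell^*$ to maximise $\sum_{v} |\mathcal{Q}_v(\hat\lambda)|$. A single sample $(\bi_S, \bj_S, \bsigma_S) \sim \lambda$ contributes to the $v$-th estimator a summand of magnitude at most $O(t \sfD_u)$: terms for pairs with both endpoints inside or both outside $v$ cancel, and otherwise $\sfd_T(\bi_S, \bj_S)/C_{\bi_S \bj_S} \leq \sfD_u$ by \eqref{eq:dist-guarantee}. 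A Bernstein-type tail bound together with a union bound over the $O(n \log \Phi)$ nodes of $T$ guarantees $|\mathcal{Q}_v(\hat\lambda) - \mathcal{Q}_v(\lambda)| \leq \gammagap t$ for every $v$ simultaneously, with probability at least $1 - \delta/2$. To swap samples from $\lambda'$ for samples from $\lambda$, a coupling argument shows that the assumed $\tv$-distance bound permits at most $o(s \gammagap / \sfD_u)$ disagreements, perturbing each $\mathcal{Q}_v$ estimate by $o(\gammagap t)$. With both errors absorbed, the initial slack $6 \gammagap t$ still exceeds $\gammagap t$, giving the first oracle inequality. The running time is dominated by constructing $T$ and placing the samples, hence $O(nd \log \Phi)$.

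\paragraph{Main obstacle.}
The delicate step is the per-node concentration of $\mathcal{Q}_v(\hat\lambda)$ with only $\tilde O(n/\delta)$ samples, since a naive Hoeffding bound would require many more samples (the per-sample range is $O(t \sfD_u)$ while the expectation of each $|\mathcal{Q}_v|$ may be much smaller). The key enabling ingredient is the bi-Lipschitz upper bound supplied by \Cref{lem:quadtree existence}, which uniformly caps the per-sample magnitude at $O(t \sfD_u)$ and lets a Bernstein-type inequality push the sample complexity down to $\tilde O(n/\delta)$. Converting between $\lambda$ and $\lambda'$ is then routine given that the assumed $\tv$-distance bound is tight enough to keep the induced error strictly below $\gammagap t$.
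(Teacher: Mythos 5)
The high-level scaffold of your proof matches the paper: same supply/demand vectors $\mu,\nu$, same flow argument giving a heavy tree level, same sign-based assignment of $(\balpha^*,\bbeta^*)$, and the same bounds on $|\balpha_i - \bbeta_j|/C_{ij}$ from the bi-Lipschitz guarantee. The gap is in the concentration step, which is also the heart of the lemma.

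You claim that a Bernstein bound plus a union bound over the $O(n\log\Phi)$ tree nodes gives $|\calQ_v(\hat\lambda) - \calQ_v(\lambda)| \leq \gammagap t$ for every node simultaneously, and that this implies the special constraint holds with the required slack. But the quantity you actually need to control is the \emph{sum} over nodes $v \in T_{\ell^*}$ of the sign-mismatch losses, i.e.\ essentially $\sum_{v} |\calQ_v(\hat\lambda)-\calQ_v(\lambda)|$: each node where $\sign(\calQ_v(\hat\lambda))\neq \sign(\calQ_v(\lambda))$ contributes up to twice its own error. A per-node bound of $\gammagap t$ therefore only gives a total error of $O(|T_{\ell^*}|\cdot\gammagap t)$, and $|T_{\ell^*}|$ can be $\Theta(n)$ — this is far larger than the available slack $O(\gammagap t)$. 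Conversely, demanding per-node error $\gammagap t/|T_{\ell^*}|$ via Bernstein plus union bound would blow the sample complexity up to $\Omega(n^2\cdot\poly(\sfD_u/\gammagap))$. Note also that your claimed conclusion, $\tilde O(n/\delta)$ samples, is internally inconsistent with the stated per-node target: a range-$O(t\sfD_u)$ Bernstein bound for accuracy $\gammagap t$ needs only $\polylog(n)$ samples, not $\tilde O(n)$ — the $n$ has to come from somewhere, and in your writeup it doesn't.

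The paper sidesteps this by bounding the \emph{first moment} $\Ex[\eta]$ of the total error $\eta = \sum_v \eta_v$ rather than proving per-node high-probability bounds. The enabling observation is structural: each $\lambda_{i,j,\sigma}$ contributes to the ``split mass'' $p_v = \sum_{(i,j,\sigma)\text{ split by }v}\lambda_{i,j,\sigma}$ of at most two nodes of $T_{\ell^*}$, so $\sum_v p_v \leq 2$. Since $\Ex[\eta_v] \lesssim t\sfD_u\sqrt{p_v/s}$, Cauchy--Schwarz gives $\sum_v \sqrt{p_v} \leq \sqrt{|T_{\ell^*}|}\sqrt{\sum_v p_v} = O(\sqrt n)$, whence $\Ex[\eta] = O(t\sfD_u\sqrt{n/s})$ and Markov finishes with $s = \tilde O(n)$. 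Without the Cauchy--Schwarz step and the $\sum_v p_v \leq 2$ fact, the sum over nodes is not under control, and your argument does not close. You would need to add exactly this ingredient (at which point a per-node high-probability bound becomes unnecessary anyway).
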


\begin{proof}
We begin by assuming we have sample access directly to $\lambda$, and later correct for the fact that we only have sample access to $\lambda'$.
Consider the supply and demand vectors $\mu \in \R^n$ and $\nu \in \R^n$ which are given by:
\[ \mu_i = t \sum_{j=1}^n \sum_{\sigma \in \{-1,1\}} \dfrac{\lambda_{i,j,\sigma} \cdot \sigma}{C_{ij}} \qquad \text{and}\qquad \nu_j = t \sum_{i=1}^n \sum_{\sigma \in \{-1,1\}} \dfrac{\lambda_{i,j,\sigma} \cdot \sigma}{C_{ij}}, \]
where we use $\lambda_{i,j,\sigma} \in [0, 1]$ to be the probability that the tuple $(i,j,\sigma)$ is sampled under $\lambda$. Notice that the flow $\gamma_{ij} = t (\lambda_{i,j,1} - \lambda_{i,j,-1}) / C_{ij}$ is by definition a feasible flow for $\EMD_{\ell_1}(\mu, \nu)$ with cost
\[ \sum_{i=1}^n \sum_{j=1}^n |\gamma_{ij}| \cdot \|x_i - y_j\|_1 \leq t \sum_{i=1}^n \sum_{j=1}^n |\lambda_{i,j,1} - \lambda_{i,j,-1}|(1+\eps) \leq (1+\eps) t.  \] 

Here we have two cases: (i) $\EMD_{\ell_1}(\ind - \mu, \ind - \nu) \geq \varepsilon t$; (ii) $\EMD_{\ell_1}(\ind - \mu, \ind - \nu) \leq \varepsilon t$.
If we are in case $(i)$, by the triangle inequality for the Earth Mover's Distance,  
\[
\EMD_{\ell_1}(\ind,\ind) \leq \EMD_{\ell_1}(\mu,\nu) + \emd_{\ell_1}(\ind - \mu, \ind - \nu) < (1+\varepsilon) t + \varepsilon t = (1+2\varepsilon) t.
\]
Since distance in $T$ can be lower bounded by distances in $\ell_1$ (as per (\ref{eq:dist-guarantee})), in case (ii) we have:
\begin{align*}
\EMD_{T}(\ind-\mu, \ind-\nu) \geq \sfD_\ell \cdot \EMD_{\ell_1}(\ind - \mu, \ind-\nu) \geq \sfD_\ell \cdot \eps t, 
    \end{align*}
and hence there exists a depth of the tree $\ell \in [h]$ which satisfies:
\begin{align} \label{eq:val}
 \sum_{v \in T_{\ell}} \frac{\Phi}{2^{\ell}} \left| |X_v| - |Y_v| - t \sum_{\substack{i \in [n] \\ x_i \in v}} \sum_{j=1}^n \sum_{\sigma \in \{-1,1\}} \dfrac{\lambda_{i,j,\sigma} \cdot \sigma}{C_{ij}} + t \sum_{\substack{j \in [n] \\ y_j \in v}} \sum_{i=1}^n \sum_{\sigma \in \{-1,1\}} \frac{\lambda_{i,j,\sigma} \cdot \sigma}{C_{ij}} \right| \geq \frac{\sfD_{l} \cdot \eps t}{h}. 
 \end{align}
Algorithmically, we iterate over tree depths $\ell \in [h]$ and check whether \Cref{eq:val} holds.
If $\Cref{eq:val}$ never holds, then we return ``Fail''. 
As we proved, if $\Cref{eq:val}$ never holds, then it must be that we are in case (i), thus $\emd_{\ell_1}(\ind, \ind) \leq (1+2\varepsilon) t$, ensuring correctness.
Naively, checking whether \Cref{eq:val} holds required knowledge of the values $\lambda_{i, j, \sigma}$.
Later, we will show that this step can be implemented only using samples from $\lambda$.

Suppose, instead, that we find $\ell \in [h]$ satisfying \Cref{eq:val}.
 We now consider the following algorithm to find a setting of $(\alpha, \beta)$.

\begin{enumerate}
\item Let $(\bi_1, \bj_1, \bsigma_1), \dots, (\bi_s, \bj_s, \bsigma_s) \sim \lambda$ denote $s$ independent samples from the distribution.
    \item For each $v \in T_\ell$ and $k \in [s]$, define $\bZ_v^k =  \left(\ind\left\{ x_{\bi_k} \in v \right\} - \ind\left\{ y_{\bj_k} \in v \right\}  \right) \dfrac{\bsigma_k}{C_{\bi_k \bj_k}}$
\item For each $v \in T_{\ell}$, let $\bw_v \in \{-1,1\}$ be given by:
\begin{align*}
\bw_v = \sign \left( |X_v| - |Y_v| - \frac{t}{s} \sum_{k=1}^s \bZ_v^k \right).
\end{align*} 
\item Finally, for every $v \in T_\ell$ and for every $i, j \in [n]$ with $x_i, y_j \in v$, we set
\[ \balpha_i = \bbeta_j = \frac{\Phi}{2^{\ell}} \cdot \bw_v. \]
\end{enumerate}
For each $v \in T_{\ell}$, let $s_v \in \{-1,1\}$ denote the sign which realizes the absolute value in the left-hand side of (\ref{eq:val}) for $v$.
First, suppose 
if we always had $\bw_v = s_v$. In this case, we would obtain the ``optimal'' values $(\balpha^*, \bbeta^*)$, which satisfy that  for every $i \in [n]$ with $x_i \in v$ and $j \in [n]$ with $y_j \in v$, we have
\[ \balpha_i^* = \bbeta_j^* = s_v \cdot \frac{\Phi}{2^{\ell}}, \]
thus, in this case, we may simplify (\ref{eq:val}) as
\begin{align*}
&\sum_{v \in T_{\ell}} \frac{\Phi}{2^{\ell}} \cdot s_v \left( |X_v| - |Y_v| - t \sum_{\substack{i \in [n] }} \sum_{j=1}^n \sum_{\sigma \in \{-1,1\}} \left(\ind\left\{ x_{i} \in v \right\} - \ind\left\{ y_{j} \in v \right\}  \right) \dfrac{\lambda_{i,j,\sigma} \cdot \sigma}{C_{ij}} \right) \\
&\qquad = \sum_{i=1}^n \balpha_i^* - \sum_{j=1}^n \bbeta_j^* - t \sum_{i=1}^n \sum_{j=1}^n \sum_{\sigma \in \{-1,1\}} \lambda_{i,j,\sigma} \cdot \sigma \cdot \dfrac{\balpha_i^* - \bbeta_j^*}{C_{ij}} \geq \frac{\sfD_\ell \cdot \eps t}{h},
\end{align*}
and in particular, 
\begin{align*}
\sum_{i=1}^n \sum_{j=1}^n \sum_{\sigma \in \{-1,1\}} \lambda_{i, j, \sigma} \cdot \sigma \cdot \dfrac{\balpha_{i}^* - \bbeta_{j}^*}{C_{i j}}    &\leq \frac{1}{t} \left( \sum_{i=1}^n \balpha_{i}^* - \sum_{j=1}^n\bbeta_{j}^*\right) -  \frac{\eps \cdot \sfD_{l}}{h}.\\
& = \bv - \gammagap
\end{align*}
So the idealized values $(\balpha^*,\bbeta^*)$ satisfy the first constraint of the \Cref{lem:sep}. 
Thus, in order for the actual values $(\balpha,\bbeta)$'s we compute to satisfy the first constraint, it will suffices to set the values of $\bw_v$ such that 
\begin{equation}\label{eqn:lem4.4Error}
\begin{split}
    &\left| \sum_{v \in T_{\ell}} \frac{\Phi}{2^{\ell}}  \cdot( \bw_v - s_v)\left( |X_v| - |Y_v| - t \sum_{\substack{i,j \in [n] \\ \sigma \in \{-1,1\}}}(\ind \{ x_{i} \in v \} - \ind\{ y_{j} \in v \} ) \frac{\lambda_{i,j,\sigma} \cdot \sigma}{C_{ij}} \right)  \right|\\
& = \eta \leq \frac{\sfD_\ell \cdot \eps t}{2h}
\end{split}
\end{equation}
We proceed to bound the error $\eta$. First, note the following:
\begin{claim}\label{claim:lem44}
    For any $v \in T_\ell$, and $i,j \in [n]$, if $x_i \in v$ and $y_j \notin v$ (or vice-versa), then $C_{ij} \geq \frac{(1-\eps)\Phi}{  2^\ell\sfD_u } $.
\end{claim}
\begin{proof}
    By the guarantees of the tree embedding \ref{eq:dist-guarantee}, and using that $C_{ij} = (1 \pm \eps) \|x_i - y_j\|_1$, we have that
    \[C_{ij} \geq (1-\eps)\|x_i - y_j\|_1 \geq \frac{(1-\eps)}{\sfD_u }\sfd_{T}(i,j) \geq \frac{(1-\eps)\Phi}{2^\ell \sfD_u} \]
    Where the last inequality holds because $x_i,y_j$ split at or above level $\ell$ in the tree. 
\end{proof}
Further, note that any $i,j$ with $x_i \in v$ and $y_j \in v$ (or $x_i \notin v$ and $y_j \notin v)$ contributes zero to the sum $\sum_{\substack{i,j \in [n] \\ \sigma \in \{-1,1\}}}\left(\ind\left\{ x_{i} \in v \right\} - \ind\left\{ y_{j} \in v \right\}  \right) \lambda_{i,j,\sigma} \cdot \sigma/C_{ij}$,  and thus we can restrict the sum of pairs $i,j$ that are split by a vertex $v$.
To proceed, we define
\begin{equation} \label{eq:error eta v}
\eta_v = \frac{t \Phi}{2^\ell} \left|  \sum_{\substack{i,j \in [n] \\ \sigma \in \{-1,1\}}} \left(\ind\left\{ x_{i} \in v \right\} - \ind\left\{ y_{j} \in v \right\}  \right) \frac{\lambda_{i,j,\sigma} \cdot \sigma}{C_{ij}}  - \frac{1}{s} \sum_{k=1}^s \bZ_v^k\right|
\end{equation}
and note that we can bound $\eta$ via $\eta \leq 2\sum_{v \in T_\ell} \eta_v$, where here we are using the observation that for any vectors $a,b,w \in \R^n$ where $w_i = \text{sign}(b_i)$, we have $|\langle w , a \rangle - \|a\|_1| \leq 2\|a-b\|_1$. Since $\ex{\bZ_v^k} =  \sum_{\substack{i,j \in [n] \\ \sigma \in \{-1,1\}}} \left(\ind\left\{ x_{i} \in v \right\} - \ind\left\{ y_{j} \in v \right\}  \right) \frac{\lambda_{i,j,\sigma} \cdot \sigma}{C_{ij}}$, by Jensen's inequality:
\[  \ex{\eta_v} \leq\frac{t \Phi}{2^\ell}  \cdot \sqrt{\Varx{\frac{1}{s }\sum_{i=1}^s \bZ_v^k} }\]
Using Claim \ref{claim:lem44}, we proceed to bound the above via: 
\begin{align*}
    \Varx{ \bZ_v^k} \leq \sum_{\substack{i,j \in [n] \\ \sigma \in \{-1,1\} \\ |\{x_i,x_j\} \cap v| = 1}} 4 \cdot\frac{\lambda_{i,j,\sigma}}{C_{ij}^2} \leq \left(\frac{2^{\ell+2}\sfD_u }{\Phi}\right)^2   \cdot \sum_{\substack{i,j \in [n] \\ \sigma \in \{-1,1\} \\ |\{x_i,x_j\} \cap v| = 1}}  \lambda_{i,j,\sigma} 
\end{align*}
Thus
\[  \ex{\eta_v} \leq \frac{4\sfD_u t }{\sqrt{s} } \cdot\sqrt{ \sum_{\substack{i,j \in [n] \\ \sigma \in \{-1,1\} \\ |\{x_i,x_j\} \cap v| = 1}}  \lambda_{i,j,\sigma} }\]
We have
\begin{align*}
    \ex{\eta} &\leq \sum_{v \in T_\ell}\ex{\eta_v}\leq\frac{4\sfD_u t }{\sqrt{s} } \cdot \sum_{v \in T_\ell} \sqrt{ \sum_{\substack{i,j \in [n] \\ \sigma \in \{-1,1\} \\ |\{x_i,x_j\} \cap v| = 1}}  \lambda_{i,j,\sigma} } \\
    & \leq \sqrt{2n} \frac{4\sfD_u t }{\sqrt{s} } \cdot \left(\sum_{v \in T_\ell}\left|\sum_{\substack{i,j \in [n] \\ \sigma \in \{-1,1\} \\ |\{x_i,x_j\} \cap v| = 1}}  \lambda_{i,j,\sigma} \right| \right)^{1/2} \leq  \sqrt{4 n} \frac{4\sfD_u t }{\sqrt{s} }
\end{align*}

Where we first used the fact that $\sum_{i=1}^m \sqrt{|a_i|} \leq \sqrt{m} (\sum_{i=1}^n |a_i|)^{1/2}$ for any sequence $a_1,\dots,a_m$, which follows from Cauchy-Schwartz, and we next used that each $\lambda_{i,j,\sigma}$ appears in the sum for at most two vertices $v \in T_\ell$, and lastly that $\lambda$ is a distribution. Therefore, setting $s = O\left(\left(\frac{h \sfD_u}{\eps \sfD_\ell}\right)^2 \frac{n}{\delta^2}\right) = \tilde{O}(n/\delta^2 )$, we obtain $ \ex{\eta}  \leq \frac{\eps \delta \sfD_\ell t }{h 100 }$, and the desired bound on $\eta$ holds with probability at least $1-\delta$ by Markov's inequality. 

\paragraph{Checking \Cref{eq:val}.}
The previous analysis shows that the error $\eta_v$ in \Cref{eq:error eta v} satisfies $\eta_v \leq \frac{\sfD_\ell \cdot \varepsilon t}{2h}$.
Thus, given $\ell \in [h]$, we can check if \Cref{eq:val} holds for $\varepsilon' = 2\varepsilon$, which suffices to obtain
\[
\EMD_{\ell_1}(\ind,\ind) \leq \EMD_{\ell_1}(\mu,\nu) + \emd_{\ell_1}(\ind - \mu, \ind - \nu) < (1+\varepsilon) t + \varepsilon' t \leq (1+3\varepsilon) t,
\]
ensuring correctness in the ``Fail'' case.

\paragraph{Sampling from $\lambda'$ instead of $\lambda$.} 
Next, we claim that the same sampling procedure works when our samples $(\bi_1, \bj_1, \bsigma_1),\dots,(\bi_s, \bj_s, \bsigma_s)$ are drawn from $\lambda'$ instead of $\lambda$. This modifies the distribution of the variables $\bZ_v^k$ accordingly. Call the modified variables $\hat{\bZ}_v^k$. To analyze the modified variables, note that we can bound the error of $\eta$ coming from the new variables $\hat{\bZ}_v^k$ via
\begin{equation}\label{eqn:lem54-final}
    \begin{split}
        \ex{\eta} \leq \sum_v \ex{\eta_v} \leq &\frac{t \Phi}{2^\ell} \left|  \sum_{\substack{i,j \in [n] \\ \sigma \in \{-1,1\}}} \left(\ind\left\{ x_{i} \in v \right\} - \ind\left\{ y_{j} \in v \right\}  \right) \frac{\lambda_{i,j,\sigma}' \cdot \sigma}{C_{ij}}  - \frac{1}{s} \sum_{k=1}^s \hat{\bZ}_v^k\right| \\
       & + \frac{t \Phi}{2^\ell} \left| \sum_{\substack{i,j \in [n] \\ \sigma \in \{-1,1\}}} \left(\ind\left\{ x_{i} \in v \right\} - \ind\left\{ y_{j} \in v \right\}  \right) \frac{|\lambda_{i,j,\sigma}'  - \lambda_{i,j,\sigma}|\cdot \sigma}{C_{ij}} \right| \\
       \leq  & \frac{t \Phi}{2^\ell} \sum_v \left|  \sum_{\substack{i,j \in [n] \\ \sigma \in \{-1,1\}}} \left(\ind\left\{ x_{i} \in v \right\} - \ind\left\{ y_{j} \in v \right\}  \right) \frac{\lambda_{i,j,\sigma}' \cdot \sigma}{C_{ij}}  - \frac{1}{s} \sum_{k=1}^s \hat{\bZ}_v^k\right| +  8 \cdot t \cdot \sfD_u \|\lambda-\lambda'\|_{\tv} 
    \end{split}
\end{equation}
Where in the last line, via Claim \ref{eqn:lem4.4Error} we used that $C_{i,j} \geq \frac{(1-\eps)\Phi}{2^\ell \sfD_u}$ for every $(i,j,v)$ where $\ind\left\{ x_{i} \in v \right\} - \ind\left\{ y_{j} \in v \right\}   \neq 0$. We also used that each term $|\lambda_{i,j,\sigma}'  - \lambda_{i,j,\sigma}|$ appears exactly twice in the sum over all vertices $v \in T_\ell$, allowing us to bound the contribution of those terms by $2 \cdot\|\lambda-\lambda'\|_{\tv} $. 
Now note that the first term on the last line of (\ref{eqn:lem54-final}) can be bounded by $\frac{t \Phi}{2^\ell} \sum_v \cdot \sqrt{\Varx{\frac{1}{s }\sum_{i=1}^s \hat{\bZ}_v^k} }$, and thus by $\sqrt{4 n} \frac{4\sfD_u t }{\sqrt{s} }$ via the same sequence of inequalities as above. Using that $\|\lambda' - \lambda\|_{\tv} = o\left( \frac{\varepsilon  \sfD_\ell}{h \sfD_u}\right)$,  we obtain $ \ex{\eta}  \leq \frac{\eps \sfD_\ell t }{h 50 \delta}$ as before, which is only a factor of $2$ larger than the earlier computation using $\lambda$, and the statement again follows by Markov's inequality.

Finally, for the last two constraints, note that $|\balpha_i - \bbeta_j| \leq d_T(i, j) \leq \sfD_u \|x_i - y_j\|_1 \leq \sfD_u (1+\eps) C_{ij}$, and therefore, we have $\sum_{i=1}^n \balpha_i- \sum_{j=1}^n \bbeta_j \leq \sfD_u \cdot \EMD_{\ell_1}(X, Y) \leq \sfD_u \sfD_T t$, where the last inequality holds because we know $t_0 \leq t \leq \sfD_T \cdot t_0$ and moreover $\EMD_{\ell_1}(X,Y) \leq \sfD_T \cdot t_0$. Thus, $(\balpha,\bbeta)$ satisfy the desired constraints.

\end{proof}

\subsection{Multiplicative Weights Update from a Bounded Class of Distributions}

As we explain now, the class of distributions $\lambda$ which we will consider will be parametrized over dual variables $(\alpha, \beta)$, which are the values which get updated as the algorithm proceeds and evolve the distribution $\lambda$. In order to efficiently sample from $\lambda$, it will be important for us to appropriately round to an accuracy $\chi$ and maintain the signs of the dual variables $(\alpha, \beta)$.  
\begin{definition}[Rounded Dual Variables]\label{def:rounded-duals}
Fix $\chi \in (0, 1)$ and a pair $(\alpha, \beta) \in \Z^n \times \Z^n$. 
\begin{itemize}
\item \emph{\textbf{Rounded Duals}}. $D$ is the $n \times n$ matrix with entries in $\{ 0 \} \cup \{ (1+\chi)^h : h \in \Z_{\geq 0} \}$ satisfying
\[ D_{ij} \leq |\alpha_i - \beta_j| \leq (1+\chi) \cdot D_{ij}. \]
\item \emph{\textbf{Sign Pattern}}. $P$ is the $n\times n$ matrix with $P_{ij} = \sign(\alpha_i - \beta_j)$.
\end{itemize}
\end{definition}
We now define a bounded class of distributions $\lambda$ which depend on $C, D$ and $P$. In the remainder of this section, we first define the class of distributions $\lambda$. Then, we show that a multiplicative weights update rule will produce a sequence of distributions $\lambda$ in this class. Furthermore, iteratively solving the certification task on these distributions is enough to find a pair $(\alpha, \beta) \in \Gamma_t$. 

\begin{definition}[Class of Distributions]\label{def:disc-dist}
Fix $\eta \in (0, 1)$ and $S \in \{0, 1\}^{n\times n}$. Define matrices $D, P$ 
as in Definition~\ref{def:rounded-duals} and matrix $C = C(S)$ as in \Cref{def:rounded-distances}.
Define weights $w_{i,j,\sigma} \geq 0$ for $i, j \in [n]$ and $\sigma \in \{-1,1\}$:
\[ w_{i,j,\sigma} =  \exp(\eta \cdot (\sigma \cdot P_{ij}) \cdot D_{ij} / C_{ij}). \]
The distribution $\lambda(\eta, w) = \lambda(\eta,C,D,P)$ is that which samples with probability proportionally to $w$, and $\calD(\eta)$ is the class of distributions specified by $\eta, C, D$ and $P$. 
\end{definition}

In the remainder of the section, we show the multiplicative weights update (MWU) algorithm 
equipped with an oracle for
$\textsc{Certify}(\lambda, \gammagap, \sfK, \delta)$ on distributions $\lambda \in \calD$ can be used to find a certificate $(\alpha, \beta) \in \Gamma_t$. 
In particular, we analyze the iterative procedure in Figure~\ref{fig:mwu}.

Thanks to \Cref{lem:sep}, if we can design an efficient algorithm to sample from distributions in $\calD$, then we can implement the oracle for $\textsc{Certify}(\lambda, \gammagap, \sfK, \delta)$ efficiently, which in turn makes the MWU algorithm efficient.

\begin{figure}[H]
\begin{framed}
\noindent \textbf{Multiplicative Weights Update.} There is a fixed matrix of rounded distances $C$ satisfying $C_{ij} = (1\pm \varepsilon) \cdot \|x_i - y_j\|$.
\begin{enumerate}
    \item Initialize $(\alpha^1, \beta^1) \in \Z^n \times \Z^n$ to all zero. This specifies $v^{1} = 0$, matrix $D^{1}$ to all zero, and $P^1$ to all one, and weights $w_{i,j,1}^{1} = w_{i,j,-1}^1= 1$.
    \item Repeat for rounds $r = 1, \dots, \sfR-1$: 
    \begin{itemize}
    \item Let $\lambda^{r} = \lambda(\eta, w^{r}) = \lambda(\eta, C, D^r, P^r) $ as in \Cref{def:disc-dist}.
    \item If $\textsc{Certify}(\lambda^{r}, \gammagap, \sfK, \delta)$ outputs ``Fail'', return ``Fail''.
    \item Else, let $(\tilde{\alpha}^r, \tilde{\beta}^r)$ 
    be the output of $\textsc{Certify}(\lambda^{r}, \gammagap, \sfK, \delta)$.
    \item Update $(\alpha^{r+1}, \beta^{r+1}) = (\alpha^{r}, \beta^{r}) + (\tilde{\alpha}^r, \tilde{\beta}^r)$. This specifies $v^{r+1} = v^r + \tilde{v}^r$, matrices $D^{r+1}$ and $P^{r+1}$, and weights $w^{r+1}$.
    \end{itemize}
    \item Output $1/ \sfR \cdot (\alpha^{\sfR}, \beta^{\sfR})$ \label{item:last line}
\end{enumerate} 
If this algorithm returns ``Fail'', then $\emd(X, Y) \leq (1+3\varepsilon) t$.
Else, if the algorithm returns $(\alpha, \beta)$, then $(\alpha, \beta) \in \Gamma_t$.
\end{framed}
\caption{The Multiplicative Weights Update Rule.}\label{fig:mwu}
\end{figure}

\begin{lemma}\label{lem:mwu}
Run the multiplicative weights algorithm in Figure~\ref{fig:mwu} with parameters:
\begin{align*}
\eta \leq \frac{\gammagap}{100 \cdot \sfK^2} \qquad \sfR \geq \frac{\ln(2n^2)}{100 \cdot \eta \gammagap} \qquad \text{and}\qquad \chi \leq \frac{\gammagap}{100 \cdot \sfR \sfK}.
\end{align*} 
Suppose it returns $(\alpha, \beta)$ at line \ref{item:last line}.
Then, $(\alpha, \beta) \in \Gamma_t$.
\end{lemma}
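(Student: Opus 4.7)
The plan is to run a standard multiplicative-weights regret argument on an exponential potential function, while carefully tracking the discrepancy between the actual weights maintained by the algorithm (built from the rounded duals $D^r$ and the sign pattern $P^r$) and the ``ideal'' MWU weights $\tilde w^r_{i,j,\sigma} := \exp\!\bigl(\eta\,\sigma(\alpha^r_i-\beta^r_j)/C_{ij}\bigr)$. Since $\sigma P^r_{ij}D^r_{ij}$ agrees with $\sigma(\alpha^r_i-\beta^r_j)$ up to a multiplicative factor $(1\pm\chi)$, and since the cumulative duals obey $|\alpha^r_i-\beta^r_j|/C_{ij}\le \sfR\sfK$ (by the \textsc{Certify} guarantee applied in each previous round), I would first show that the total variation distance between the distribution $\lambda^r$ passed to \textsc{Certify} and the ideal distribution $\tilde\lambda^r$ proportional to $\tilde w^r$ is $O(\eta\chi\sfR\sfK)$, which the hypothesis $\chi\le\gammagap/(100\sfR\sfK)$ makes negligible compared to $\gammagap$.

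Working with the idealized potential $\tilde\Phi^r = \sum_{i,j,\sigma}\tilde w^r_{i,j,\sigma}$, I would then carry out the classical Freund--Schapire analysis. Define $m^r_{i,j,\sigma} := \sigma(\tilde\alpha^r_i-\tilde\beta^r_j)/C_{ij}$, so that a direct telescoping gives $\tilde w^{r+1}_{i,j,\sigma}/\tilde w^r_{i,j,\sigma} = \exp(\eta\,m^r_{i,j,\sigma})$, with $|m^r_{i,j,\sigma}|\le \sfK$ by the second guarantee of \textsc{Certify}. Using $\exp(x)\le 1+x+x^2$ for $|x|\le 1$ (valid since $\eta\sfK\le 1$ by the chosen $\eta$), the per-round potential ratio is bounded by $1+\eta\,\mathbb{E}_{\tilde\lambda^r}[m^r]+\eta^2\sfK^2$. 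Telescoping over $\sfR$ rounds, converting the \textsc{Certify} guarantee $\mathbb{E}_{\lambda^r}[m^r]\le \tilde v^r-\gammagap$ to a bound over $\tilde\lambda^r$ via the TV estimate (which costs only a negligible additive term), and using $\tilde\Phi^1 = 2n^2$, I obtain
\[
\ln\tilde\Phi^R \;\le\; \ln(2n^2) + \eta v^R - \eta\sfR\gammagap + \eta^2\sfR\sfK^2 + o(\eta\sfR\gammagap),
\]
where $v^R=\sum_r\tilde v^r$. On the other hand, the pointwise lower bound $\tilde\Phi^R\ge \tilde w^R_{i,j,\sigma}$ yields $\eta\,\sigma(\alpha^R_i-\beta^R_j)/C_{ij}\le \ln\tilde\Phi^R$ for every fixed $(i,j,\sigma)$.

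Combining the two bounds, dividing by $\eta\sfR$, and writing $(\alpha,\beta)=(\alpha^R,\beta^R)/\sfR$ and $v=v^R/\sfR$, I get for every $(i,j,\sigma)$
\[
\frac{\sigma(\alpha_i-\beta_j)}{C_{ij}} \;\le\; v - \gammagap + \frac{\ln(2n^2)}{\eta\sfR} + \eta\sfK^2 + o(\gammagap).
\]
Each of the first two error terms is at most $\gammagap/100$ by the hypotheses $\eta\le\gammagap/(100\sfK^2)$ and $\sfR\ge\ln(2n^2)/(100\eta\gammagap)$, so the right-hand side is strictly less than $v$. Since this holds for both $\sigma\in\{-1,+1\}$, maximizing over the sign gives $|\alpha_i-\beta_j|/C_{ij}<v$ for every pair $(i,j)$, which (recalling $v=(1/t)\bigl(\sum_i\alpha_i-\sum_j\beta_j\bigr)$) is exactly the defining inequality of $\Gamma_t$.

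The main obstacle I anticipate is the bookkeeping of approximation layers: the oracle's guarantee is stated against the actual (rounded) distribution $\lambda^r$, whereas the potential-function argument is cleanest against the ``ideal'' distribution $\tilde\lambda^r$, so I must verify that the hypothesis $\chi\le\gammagap/(100\sfR\sfK)$ really does make the transfer cost between them an $o(\gammagap)$ additive term in the sum over $r$, and that the combined slack from rounding, regret, and oracle approximation is still safely below $\gammagap$. Beyond this accounting, the argument is a textbook exponential-weights regret analysis.
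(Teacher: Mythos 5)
Your plan is correct and reaches the same conclusion as the paper, but it handles the discretization of the dual variables via a different bookkeeping device. The paper works with the actual rounded weights $w^r_{i,j,\sigma}=\exp(\eta\sigma P^r_{ij}D^r_{ij}/C_{ij})$ throughout, and in its key step (Observation~\ref{obs:main}) propagates the rounding error as a multiplicative drift $\exp(\eta\chi\cdot 2(r+1)\sfK)$ inside the potential recursion; the benefit is that $\lambda^r$ (the distribution the \textsc{Certify} oracle is actually queried with) is exactly proportional to $w^r$, so the oracle's guarantee plugs into the potential bound with no change of measure. You instead introduce the ``ideal'' weights $\tilde w^r_{i,j,\sigma}=\exp(\eta\sigma(\alpha_i^r-\beta_j^r)/C_{ij})$, which makes the telescoping exact, and you push the rounding error into a total-variation bound $\|\lambda^r-\tilde\lambda^r\|_{\tv}=O(\eta\chi\sfR\sfK)$, paying a $\|m^r\|_\infty\cdot\mathrm{TV}\le O(\eta\chi\sfR\sfK^2)$ change-of-measure cost per round. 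Both decompositions are legitimate and the arithmetic works out in either case (you should verify that $\|\lambda^r-\tilde\lambda^r\|_{\tv}$ really is as claimed: since $P^r_{ij}D^r_{ij}$ and $\alpha^r_i-\beta^r_j$ agree to within additive $\chi D^r_{ij}$ and $D^r_{ij}/C_{ij}\le(r-1)\sfK$, the pointwise weight ratio is $e^{\pm\eta\chi\sfR\sfK}$, which does give the TV bound). The net slack in both cases is the same $\ln(2n^2)/(\eta\sfR)+\eta\sfK^2+\chi\sfR\sfK$ up to constants.

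One thing worth noting: your assertion that $\ln(2n^2)/(\eta\sfR)\le\gammagap/100$ would require $\sfR\ge 100\ln(2n^2)/(\eta\gammagap)$, whereas the lemma as stated only gives $\sfR\ge\ln(2n^2)/(100\eta\gammagap)$, which yields the much weaker bound $\le 100\gammagap$. This appears to be a typo in the paper's parameter setting (the factor $100$ belongs in the numerator, not the denominator), and the paper's own concluding line ``implies $\max_{i,j}|\alpha_i-\beta_j|/C_{ij}\le v-\gammagap/10$'' suffers from exactly the same issue. So you are silently reading the parameters as they were surely intended rather than as literally printed; this is not a gap in your argument, but it is worth flagging explicitly that the constant must be corrected for either proof to close.
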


\begin{observation} \label{obs:correctness of mwu}
Assume correctness of \textsc{Certify}.
Thanks to \Cref{lem:sep}, if the algorithm in \Cref{fig:mwu} returns ``Fail'', then $\emd(X, Y) \leq (1+3\varepsilon) t$.
Thanks to \Cref{lem:mwu}, if if the algorithm  \Cref{fig:mwu} returns $(\alpha, \beta)$, then $(\alpha, \beta) \in \Gamma_t$, and $\emd_{\ell_1}(X, Y) \leq t$ by \Cref{cl:cert}.
\end{observation}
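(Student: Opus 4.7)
The plan is to prove the Observation as a routine composition of the three previously-established statements (\Cref{cl:cert}, \Cref{lem:sep}, and \Cref{lem:mwu}), via a simple case analysis on the output of the MWU algorithm in \Cref{fig:mwu}. Before carrying out the case analysis, the first (very brief) step is to verify that the parameters used to invoke the MWU algorithm simultaneously satisfy the hypotheses of \Cref{lem:sep} and \Cref{lem:mwu}. Concretely, one fixes $\sfK := \sfD_u \cdot \sfD_T$ and $\gammagap := \varepsilon \sfD_\ell / (2h)$ so that the hypotheses of \Cref{lem:sep} hold; these values then pin down $\eta$, $\sfR$, and $\chi$ via the inequalities in \Cref{lem:mwu}, which are nested and impose no circular constraints. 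So the parameter schedule is internally consistent and each call to \textsc{Certify} inside \Cref{fig:mwu} is made with parameters covered by \Cref{lem:sep}.

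\textbf{Case 1 (the algorithm returns ``Fail'').} The ``Fail'' branch is reached only when, at some round $r$, the call $\textsc{Certify}(\lambda^{r}, \gammagap, \sfK, \delta)$ itself returns ``Fail''. By the correctness specification of $\textsc{Certify}$ (the last line of \Cref{fig:cert-task}, which is delivered by \Cref{lem:sep}), this event implies directly that $\emd_{\ell_1}(X, Y) \leq (1 + 3\varepsilon) t$, giving the first conclusion of the Observation.

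\textbf{Case 2 (the algorithm returns $(\alpha, \beta)$).} If the algorithm reaches Line~\ref{item:last line}, then in every round $r \in [\sfR-1]$, the call to $\textsc{Certify}$ returned a valid pair $(\tilde\alpha^r, \tilde\beta^r)$ satisfying all three guarantees in \Cref{fig:cert-task}. This is precisely the hypothesis needed to apply \Cref{lem:mwu}, which then yields $(\alpha, \beta) = \frac{1}{\sfR}(\alpha^{\sfR}, \beta^{\sfR}) \in \Gamma_t$. In particular, $\Gamma_t \neq \emptyset$, and so \Cref{cl:cert} gives $\emd_{\ell_1}(X, Y) \geq t$ (we note the statement of the Observation as written has a typo, with ``$\leq$'' where \Cref{cl:cert} actually delivers ``$\geq$''; the intended conclusion is the lower bound). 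Since the argument is a direct composition, there is no real obstacle: the only subtlety worth flagging explicitly is confirming that the parameter settings are simultaneously admissible for both \Cref{lem:sep} and \Cref{lem:mwu}, which is immediate from the order in which $\sfK, \gammagap, \eta, \sfR, \chi$ are chosen.
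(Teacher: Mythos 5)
Your proposal is correct and matches the paper's reasoning exactly: the observation is proved by the same direct composition of \Cref{lem:sep} (for the ``Fail'' branch), \Cref{lem:mwu} (for membership in $\Gamma_t$), and \Cref{cl:cert} (for the bound on $\emd_{\ell_1}(X,Y)$), with no additional argument needed beyond checking the parameter choices are admissible. Your flag of the sign typo is also right --- \Cref{cl:cert} yields $\emd_{\ell_1}(X,Y) \geq t$, which is the direction the paper itself uses later in the proof of \Cref{thm:main theorem formal}.
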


\subsection{Proof of Lemma~\ref{lem:mwu}}

We first make the following observations, which dictate how the weights $w^r$ evolve in the multiplicative weights update procedure. 
The goal will be to upper bound the sum of all weights in the algorithm, so we let $W^r = \sum_{i=1}^n \sum_{j=1}^n \sum_{\sigma \in \{-1,1\}} w^r_{i,j,\sigma}$.

\begin{observation}\label{obs:main}
For any $i, j \in [n]$ and $\sigma \in \{-1,1\}$. Then, 
\begin{align*}
w_{i,j,\sigma}^{r+1} &\leq w_{i,j,\sigma}^{r} \cdot \exp\left( \eta \chi \cdot 2(r+1) \sfK \right)  \cdot \exp\left( \eta \sigma (\tilde{\alpha}_i^r - \tilde{\beta}_j^r) / C_{ij}\right). 
\end{align*}
\end{observation}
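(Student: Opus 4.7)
The plan is to expand both sides of the desired inequality using the definition of the weights from Definition~\ref{def:disc-dist}. Since $w_{i,j,\sigma}^r = \exp(\eta \sigma P_{ij}^r D_{ij}^r / C_{ij})$, after taking logarithms and dividing by $\eta$, the claim reduces to showing
\[
\sigma \bigl( P_{ij}^{r+1} D_{ij}^{r+1} - P_{ij}^r D_{ij}^r - (\tilde\alpha_i^r - \tilde\beta_j^r) \bigr)/C_{ij} \le 2\chi(r+1)\sfK,
\]
which in turn will follow from the absolute value bound
\[
\bigl| P_{ij}^{r+1} D_{ij}^{r+1} - P_{ij}^r D_{ij}^r - (\tilde\alpha_i^r - \tilde\beta_j^r) \bigr|/C_{ij} \le 2\chi(r+1)\sfK.
\]

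The key observation is that, by the rounding rule in Definition~\ref{def:rounded-duals}, the signed quantity $P_{ij}^s D_{ij}^s$ has the same sign as $\alpha_i^s - \beta_j^s$ and agrees with it up to a multiplicative factor of $1+\chi$; in particular, $|P_{ij}^s D_{ij}^s - (\alpha_i^s - \beta_j^s)| \le \chi |\alpha_i^s - \beta_j^s|$ for every round $s$. Using the telescoping identity $\alpha_i^{r+1} - \beta_j^{r+1} = (\alpha_i^r - \beta_j^r) + (\tilde\alpha_i^r - \tilde\beta_j^r)$, the triangle inequality gives
\[
\bigl| P_{ij}^{r+1} D_{ij}^{r+1} - P_{ij}^r D_{ij}^r - (\tilde\alpha_i^r - \tilde\beta_j^r)\bigr| \le \chi \bigl( |\alpha_i^{r+1} - \beta_j^{r+1}| + |\alpha_i^r - \beta_j^r| \bigr).
\]

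The last step is to control $|\alpha_i^s - \beta_j^s|/C_{ij}$ for $s=r$ and $s=r+1$. Since $(\alpha^s,\beta^s)$ is the sum of $s-1$ increments $(\tilde\alpha^{s'},\tilde\beta^{s'})$, each satisfying the \textsc{Certify} invariant $|\tilde\alpha_i^{s'} - \tilde\beta_j^{s'}|/C_{ij} \le \sfK$, a triangle inequality yields $|\alpha_i^s - \beta_j^s|/C_{ij} \le (s-1)\sfK$. Plugging in $s=r$ and $s=r+1$ gives a combined bound of $(2r-1)\sfK \le 2(r+1)\sfK$, and exponentiating closes the argument. I do not foresee a genuine obstacle here: the statement is essentially a bookkeeping lemma, and the only care required is to propagate the $(1+\chi)$ rounding error through a single MWU step while tracking the linear growth of $|\alpha_i^s - \beta_j^s|/C_{ij}$ coming from accumulating certified updates.
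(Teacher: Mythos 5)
Your proof is correct and follows essentially the same route as the paper's: you take logarithms, invoke the rounding guarantee $|P_{ij}^s D_{ij}^s - (\alpha_i^s - \beta_j^s)| \le \chi |\alpha_i^s - \beta_j^s|$ at rounds $r$ and $r+1$, telescope the dual increments, and bound the accumulated magnitude $|\alpha_i^s - \beta_j^s|/C_{ij} \le (s-1)\sfK$ via the \textsc{Certify} invariant. The only cosmetic difference is that you combine the two one-sided applications of the rounding bound into a single two-sided triangle-inequality step, whereas the paper applies them separately inside the exponentials; the content is identical.
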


\begin{proof}[Proof of Observation~\ref{obs:main}]
Here, we expand the definition of $w_{i,j,\sigma}^{r+1}$, and we use the fact $P_{ij}^{r+1} D_{ij}^{r+1} \leq \alpha_i^{r+1} - \beta_j^{r+1} + \chi D_{ij}^{r+1}$.
\begin{align*}
w_{i,j,\sigma}^{r+1} = \exp\left( \eta (\sigma \cdot P_{ij}^{r+1}) D^{r+1}_{ij} / C_{ij}\right) &\leq \exp\left( \eta \sigma (\alpha_i^{r+1} - \beta_j^{r+1}) / C_{ij} \right) \cdot \exp\left(\eta \chi D_{ij}^{r+1} / C_{ij} \right).
\end{align*}
We may now expand the first exponential in the right-hand side, and have:
\begin{align*}
	\exp\left( \eta \sigma (\alpha_i^{r+1} - \beta_j^{r+1}) / C_{ij}\right) &= \exp\left( \eta \sigma (\alpha_{i}^{r} - \beta_j^r) / C_{ij} \right) \cdot \exp\left( \eta \sigma (\tilde{\alpha}^r_i - \tilde{\beta}^r_j) / C_{ij} \right) \\
	&\leq w_{i,j,\sigma}^{r} \cdot \exp\left( \eta \chi D_{ij}^r / C_{ij}\right) \cdot \exp\left( \eta \sigma (\tilde{\alpha}_i^r - \tilde{\beta}_j^r) / C_{ij} \right),
\end{align*}
where the last inequality also uses the fact $P_{ij}^{r} D_{ij}^r$ is at least $\alpha_i^r - \beta_j^r + \chi D_{ij}^{r}$. The final inequality follows from combining both of the above, and noting the fact that both $D_{ij}^{r} / C_{ij}$ and $D_{ij}^{r+1} / C_{ij}$ are at most $(r+1) \sfK$ (since they are at worst $r+1$ sums of terms which are smaller than $\sfK$).
\end{proof}

We may now use the choice of $(\tilde{\alpha}^{r}, \tilde{\beta}^r)$ being the solution to $\textsc{Certify}(\lambda^{r}, \gammagap, \sfK, \delta)$ in the analysis of how the weights evolve. The following lemma follows the traditional analysis of the multiplicative weights update method up to the small error incurred from rounding the dual variables.
\begin{observation}\label{obs:weight-increase}
If $(\tilde{\alpha}^{r}, \tilde{\beta}^r)$ satisfies the guarantees of $\textsc{Certify}(\lambda^{r}, \gammagap, \sfK, \delta)$ (in Figure~\ref{fig:cert-task}) and $\eta \leq 1/(100\sfK)$. Then,
\begin{align*}
W^{r+1} &\leq \exp\left( \eta \chi \cdot r \sfK + \tilde{v}^r - \gammagap + O(\eta^2 \sfK^2) \right) \cdot W^r
\end{align*}
\end{observation}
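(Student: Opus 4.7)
The plan is to start from the per-pair bound in Observation~\ref{obs:main}, sum over all triples $(i,j,\sigma)$, and then carry out the standard multiplicative-weights potential argument: replace each exponential by its second-order Taylor expansion, use the definition $\lambda^r_{i,j,\sigma} = w^r_{i,j,\sigma}/W^r$ (Definition~\ref{def:disc-dist}) to convert the linear term into an expectation under $\lambda^r$, and finally invoke the two guarantees of \textsc{Certify} to bound that expectation and re-exponentiate.

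Concretely, summing the bound of Observation~\ref{obs:main} over $i,j,\sigma$ yields
\[
W^{r+1} \;\leq\; \exp\!\big(2\eta\chi(r+1)\sfK\big)\cdot \sum_{i,j,\sigma} w^r_{i,j,\sigma}\,\exp\!\big(\eta\sigma(\tilde\alpha^r_i - \tilde\beta^r_j)/C_{ij}\big).
\]
The second guarantee of \textsc{Certify} says $|\tilde\alpha^r_i - \tilde\beta^r_j|/C_{ij} \leq \sfK$, so the argument of every inner exponential has absolute value at most $\eta\sfK \leq 1/100$ by the hypothesis. That lets me apply the inequality $e^x \leq 1 + x + x^2$ valid for $|x|\leq 1$, replacing each exponential by $1 + \eta\sigma(\tilde\alpha^r_i-\tilde\beta^r_j)/C_{ij} + O(\eta^2\sfK^2)$.

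Next, I would factor out $W^r$ so that $w^r_{i,j,\sigma}/W^r$ becomes $\lambda^r_{i,j,\sigma}$, and use the first guarantee of \textsc{Certify},
\[
\sum_{i,j,\sigma}\lambda^r_{i,j,\sigma}\,\sigma(\tilde\alpha^r_i - \tilde\beta^r_j)/C_{ij} \;\leq\; \tilde v^r - \gammagap,
\]
to bound the linear term. This produces
\[
W^{r+1} \;\leq\; W^r\cdot\exp\!\big(2\eta\chi(r+1)\sfK\big)\cdot \big(1 + \eta(\tilde v^r - \gammagap) + O(\eta^2\sfK^2)\big).
\]
Applying $1+x \leq e^x$ to the bracketed factor and combining the two exponents then gives a single exponential whose exponent matches, up to the $O(\cdot)$ constants, the bound in the observation.

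The only delicate point is the Taylor step: it relies on $|\eta(\tilde\alpha_i^r-\tilde\beta_j^r)/C_{ij}| \leq \eta\sfK \leq 1/100$, which is precisely why the hypothesis $\eta \leq 1/(100\sfK)$ is imposed and why invariant~(\ref{item:invariant linfinity norm}) of \textsc{Certify} is needed. Everything else is bookkeeping—tracking the $2(r+1)\sfK$ prefactor from Observation~\ref{obs:main}, unpacking $\lambda^r$ via Definition~\ref{def:disc-dist}, and noting that the quadratic remainder contributes $O(\eta^2\sfK^2)\cdot W^r$ because it is bounded uniformly by $\eta^2\sfK^2$ independent of $(i,j,\sigma)$.
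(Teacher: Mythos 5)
Your proposal is correct and follows essentially the same route as the paper's proof. Both proceed by summing the per-triple bound of Observation~\ref{obs:main}, factoring out $W^r$ so that the weights become the distribution $\lambda^r$, Taylor-expanding the exponential to second order (validity supplied by $\eta\sfK \leq 1/100$ together with the second \textsc{Certify} guarantee), invoking the first \textsc{Certify} guarantee to bound the linear term by $\tilde v^r - \gammagap$, and re-exponentiating. The only cosmetic difference is the centering of the Taylor expansion: the paper expands $\exp\big(\eta(\bsigma(\tilde\alpha^r_{\bi}-\tilde\beta^r_{\bj})/C_{\bi\bj} - \tilde v^r + \gammagap)\big)$ and shows its $\lambda^r$-expectation is $\leq \exp(O(\eta^2\sfK^2))$, whereas you expand $\exp(\eta\sigma(\tilde\alpha^r_i-\tilde\beta^r_j)/C_{ij})$ directly and carry the $\eta(\tilde v^r - \gammagap)$ in the linear term. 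Both variants yield the same exponent, so this is the same argument.
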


\begin{proof}
Here, we will use the guarantees of Figure~\ref{fig:cert-task}. In particular, we may take second-degree Taylor expansion of the exponential function, and have
\begin{align*}
& \Ex_{(\bi,\bj,\bsigma) \sim \lambda^r}\left[ \exp\left( \eta \cdot \left( \bsigma (\tilde{\alpha}_{\bi}^r - \tilde{\beta}_{\bj}^r) / C_{\bi \bj} - \tilde{v}^r + \gammagap\right)\right)\right] \\
&\qquad\qquad\qquad \leq 1 + \eta \left(\Ex_{(\bi, \bj,\bsigma)\sim \lambda^r}\left[ \frac{\bsigma (\tilde{\alpha}_{\bi}^r - \tilde{\beta}_{\bj}^r)}{C_{\bi\bj}} \right] - \tilde{v}^r + \gammagap \right) + O(\eta^2 \sfK^2) \\
 &\qquad\qquad\qquad\leq 1 + O(\eta^2 \sfK^2) \leq \exp\left( O(\eta^2 \sfK^2) \right).
\end{align*} 
We apply Observation~\ref{obs:main}
\begin{align*}
W^{r+1} &= \sum_{i=1}^n \sum_{j=1}^n \sum_{\sigma} w_{i,j,\sigma}^{r+1} \leq \sum_{i=1}^n \sum_{j=1}^n \sum_{\sigma} w_{i,j,\sigma}^{r} \cdot \exp\left( \eta \chi \cdot 2(r+1) \sfK \right)  \cdot \exp\left( \eta \sigma (\tilde{\alpha}_i^r - \tilde{\beta}_j^r) / C_{ij}\right) \\
		&\leq W^r \cdot \exp\left( \eta \cdot \left( \chi \cdot 2(r+1) \sfK+ \tilde{v}^r - \gammagap\right)\right) \cdot \Ex_{(\bi,\bj,\bsigma) \sim \lambda^r}\left[ \exp\left( \eta \cdot \left( \bsigma(\tilde{\alpha}_{\bi}^r - \tilde{\beta}_{\bj}^{r}) / C_{\bi\bj} - \tilde{v}^r + \gammagap\right)\right) \right] \\
		&\leq W^r \cdot \exp\left(\eta \cdot \left( \chi \cdot 2(r+1) \sfK + \tilde{v}^r - \gammagap + O(\eta^2 \sfK^2) \right)\right).
\end{align*}
\end{proof}

From here, we may conclude the proof of Lemma~\ref{lem:mwu}. In particular, after all $\sfR$ rounds, we have:
\begin{align*}
W^{\sfR} \leq  2n^2 \cdot \left( \eta \chi \cdot 2\sfR^2 \sfK + \eta \sum_{r=1}^{\sfR-1}\tilde{v}^r - \eta (\sfR - 1) \gammagap + O(\sfR \eta^2 \sfK^2) \right),
\end{align*}
and this implies every $i,j \in [n]$ satisfies
\begin{align*}
\frac{|\alpha_i - \beta_j|}{C_{ij}} = \frac{1}{\sfR} \cdot \frac{|\alpha_i^\sfR - \beta_j^{\sfR}|}{C_{ij}} \leq \frac{1}{\eta \cdot \sfR} \ln\left( w_{i,j,1}^{\sfR} + w_{i,j, -1}^{\sfR} \right) \leq \frac{\ln(2n^2)}{\eta \cdot \sfR} + \chi \sfR \sfK + v - \gammagap/2 + O(\eta \sfK^2).
\end{align*}
The parameter settings of Lemma~\ref{lem:mwu} implies 
\[ \max_{i,j} \frac{|\alpha_i - \beta_j|}{C_{ij}} \leq v - \gammagap /10 \leq v, \]
which means $(\alpha, \beta) \in \Gamma_t$.

\subsection{Proof of the Main Theorem}
We now state our main theorem, which provides a $(1+\eps)$ approximation. Note that, by standard independent repetition and outputting the median of estimates, we can boost the failure probability to $1-1/\poly(n)$ with a $O(\log n)$ factor increase in the runtime. 
\begin{theorem}[Main Theorem, formal] \label{thm:main theorem formal}
Suppose that there exists an algorithm for $(1+\varepsilon)$-approximate closest pair (Definition \ref{def:CP}) on $(\mathbb{R}^d, \ell_1)$ with running time $T_{CP}(n, \varepsilon) = n^{2-\phi(\varepsilon)} + O(nd)$ and success probability at least $2/3$.
Then, there exists an algorithm that computes a $(1+O(\varepsilon))$-approximation to $\EMD(X, Y)$ over $(\mathbb{R}^d, \ell_p)$ for any $p \in [1,2]$ in time $\tilde O(n^{2-\Omega(\phi(\varepsilon))} + nd)$
and success probability at least $2/3$.
\end{theorem}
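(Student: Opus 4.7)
}

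The plan is to chain together the pre-processing reductions with the MWU template and a sub-quadratic implementation of the $\textsc{Certify}$ oracle. First, I would apply the embedding of $(\R^d,\ell_p)$ into $(\R^{d'},\ell_1)$ with $d' = O(d\log(1/\eps)/\eps^2)$ to reduce to the $\ell_1$ case losing only a $(1+\eps)$ factor, and then invoke Lemma~\ref{lem:aspect-ratio} to reduce to $t = \poly(n,d,\eps^{-1})$ sub-instances, each with aspect ratio $\Phi = \poly(n,d,\eps^{-1})$. For each such instance I would call \textsc{Embed-and-Perturb} (Lemma~\ref{lem:quadtree existence}) to obtain a point set $\bY$ and a tree $\bT$ that simultaneously (i) approximates $\EMD$ up to $(1+\eps)$ and (ii) is $O(\log n \log \Phi / \eps)$ bi-Lipschitz with $\bY$. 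These three reductions contribute only an $\tilde O(nd)$ additive cost and multiply the approximation by $(1+O(\eps))$.

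Next, working on the perturbed point set, I would do a binary search for $\EMD$ over powers of $(1+\eps)$ in the range $[t_0, \sfD_T t_0]$ (only $\tilde O(1)$ values of $t$), and for each candidate value $t$ run the MWU procedure in Figure~\ref{fig:mwu} for $\sfR = \polylog(n)$ rounds with parameters $\eta, \gammagap, \chi$ as in Lemma~\ref{lem:mwu}. By Observation~\ref{obs:correctness of mwu}, whenever MWU does not return ``Fail'' its output is a certificate in $\Gamma_t$, so Claim~\ref{cl:cert} gives $\EMD \geq t$; conversely a ``Fail'' certifies $\EMD \leq (1+3\eps)t$. The output will be the smallest $t$ for which MWU does not ``Fail''. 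This reduces the theorem to implementing each call to $\textsc{Certify}(\lambda^r,\gammagap,\sfK,\delta)$ in sub-quadratic time. By Lemma~\ref{lem:sep} this in turn reduces to producing $\tilde O(n)$ approximate samples from the distribution $\lambda^r \in \calD$ of Definition~\ref{def:disc-dist}, within TV error $1/\polylog(n)$.

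The heart of the proof is therefore a sub-quadratic sampler for $\lambda \in \calD$ using approximate CP, which is what Section~\ref{sec:sampling via cp} (not yet shown in the excerpt) must establish. I would build it in three layers, following the roadmap in Section~\ref{sec:subquadratic-intro}. First, use the monotone structure of the rounded dual gaps $D_{ij}$ to tile $[n]\times[n]$ with $O(n^2/s^2)$ combinatorial rectangles of side $s = n^{1/4}$ (leaving $n^{2-\Omega(1)}$ uncovered pairs handled explicitly), so that inside each rectangle the distribution reduces to $\lambda_{i,j,\sigma}\propto \exp(\sigma K/C_{ij})$ with a single $K$. Second, upgrade any $(1+\eps)$-approximate CP into an $\allclosepairs$ routine by sub-sampling $X',Y'$ at rate $1/\sqrt{n^{1+\phi}}$ and running CP $\tilde O(n^{1+\phi})$ times, splitting into the small-degree and large-degree cases as described, to retrieve all pairs in the prefix $\cL_t$ in time $(n/z)^{2-\phi}\cdot \tilde O(z^2) = n^{2-\Omega(\phi)}$. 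Third, for each rectangle sample $\xi_{ij}\propto \exp(K/\tilde C_{ij})$ by explicitly tabulating pairs in $\cL_t \cup (S \cap \cL_{t+1})$ and performing rejection sampling on the complement, where $\tilde C$ is produced from a \emph{fixed} random rounding set $S \subset [n]\times[n]$ of size $n^{2-\phi/2}$ chosen once at the start of the algorithm.

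The main obstacle will be rigorously justifying the fixed rounding step in Question~\ref{que:consistent rounding}. The rejection sampler requires inequality $|S\cap\cL_{t+2}|+|\cL_{t+1}\setminus S|=n^{1+\Omega(\phi)}$ to hold for the level sets $\cL_t$ arising in \emph{every} rectangle across \emph{every} MWU round, yet these sets depend on $\lambda$ which is itself a function of $S$. The resolution, as sketched in Section~\ref{sec:high-level approach}, is to observe that the entire trajectory of rectangles is determined by the $\tilde O(n)$ samples drawn in each of the $\polylog(n)$ rounds; so there are at most $2^{\tilde O(n)}$ possible histories of sets $\cL_{t+1},\cL_{t+2}$ the algorithm could ever encounter. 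Since $S$ is a uniform random set of size $n^{2-\phi/2}$, Chernoff gives $1-\exp(-\Omega(n^{1+\Omega(\phi)}))$ probability of good intersection with any fixed set of size $\geq n^{1+\phi}$, strong enough to union-bound over all $2^{\tilde O(n)}$ histories. Combining these three layers yields a sampler in amortized time $n^{2-\Omega(\phi)}/\tilde O(n)$ per sample; over all $\polylog(n)$ MWU rounds, all $\tilde O(1)$ values of $t$, and all sub-instances from Lemma~\ref{lem:aspect-ratio}, the total running time is $\tilde O(n^{2-\Omega(\phi(\eps))} + nd)$ and the approximation is $(1+O(\eps))$, which, after rescaling $\eps$, proves the theorem.
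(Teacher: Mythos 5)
Your proposal follows essentially the same route as the paper: reduce to $\ell_1$ and polynomial aspect ratio via Lemma~\ref{lem:aspect-ratio}, run the MWU template of Figure~\ref{fig:mwu} together with a $(1+\eps)$-geometric search over $t$, implement $\textsc{Certify}$ by sampling from $\lambda$ via the three-layer reduction (rectangle tiling to fix $K$, lifting CP to $\allclosepairs$, rejection sampling with a fixed random rounding set $S$), and close the dependence loop with a union bound over the $2^{\tilde O(n)}$ possible sample histories. One small slip: you should output the smallest $t$ for which MWU \emph{does} return ``Fail'' (equivalently the largest $t$ for which it succeeds), since ``Fail'' is the branch that certifies the upper bound $\EMD \leq (1+3\eps)t$, and the preceding non-failing $t/(1+\eps)$ then supplies the matching lower bound.
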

\begin{proof}
By \Cref{lem:aspect-ratio}, we can assume that our input is two size-$n$ sets $X, Y \subseteq ([1, \Phi]^d, \ell_1)$ with $\Phi = \poly(nd\varepsilon^{-1})$.
\Cref{obs:correctness of mwu} guarantees that: if the MWU algorithm in \Cref{fig:mwu} returns $(\alpha, \beta) \in \Gamma_t$, then $\emd(X, Y) \geq t$ by \Cref{cl:cert};
else, if MWU returns ``Fail'', then 
$\emd(X, Y) \leq (1+3\varepsilon)t$.

Moreover, \Cref{thm:main algorithm works} in \Cref{sec:sampling via cp} ensures that the MWU algorithm in \Cref{fig:mwu} can be implemented in time $\tilde O(n^{2-\Omega(\phi)} + nd)$ and with success probability $1- 1 / \poly(n, \log\Phi)$.
Thus, we can run an exponential search for $\emd(X, Y)$ with parameter $t_k = (1 +\varepsilon)^k$ and return the smallest $t_k$ such that the MWU algorithm in \Cref{fig:mwu} returns ``Fail''.
Then $\polylog(nd\eps^{-1})$ steps of binary search suffice.
\end{proof}
\section{Sampling from the Distribution $\lambda$ via Closest Pair}
\label{sec:sampling via cp}

The main result of this section is the following.

\begin{theorem}[Subquadratic implementation of MWU] \label{thm:main algorithm works}
Fix $\varepsilon, \phi > 0$.
Suppose that there exists a randomized algorithm for $(1+\varepsilon)$-approximate closest pair on $(\mathbb{R}^d, \ell_1)$ with running time $T_{CP}(n, \varepsilon) = n^{2-\phi}$ and success probability at least $2/3$.
Then, it is possible to implement the multiplicative weights update algorithm in \Cref{fig:mwu}
 
in time $\tilde O(n^{2-\Omega(\phi)} + nd)$ 
with success probability $1 - 1 /\poly(n, \log \Phi)$.
\end{theorem}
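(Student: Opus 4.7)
\medskip

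\noindent \textbf{Proof proposal.} The plan is to show that each of the $\sfR = \polylog(n)$ rounds of the MWU procedure in Figure~\ref{fig:mwu} can be implemented in time $\tilde O(n^{2-\Omega(\phi)}+nd)$. By Lemma~\ref{lem:sep}, the oracle $\textsc{Certify}(\lambda^r,\gammagap,\sfK,\delta)$ needs only $\tilde O(n)$ samples from a distribution $\lambda''$ satisfying $\|\lambda'' - \lambda^r\|_{\tv} = 1/\polylog(n)$, so the bottleneck is to produce these samples. Since $\lambda^r \in \calD(\eta)$ has the explicit form $\lambda^r_{i,j,\sigma} \propto \exp(\eta \cdot \sigma P_{ij}^r D_{ij}^r / C_{ij})$, with $P^r,D^r$ derived from the $O(n)$-size state $(\alpha^r,\beta^r)$, the whole task reduces to producing $\tilde O(n)$ samples from a distribution of this form in amortized time $n^{1-\Omega(\phi)}$ per sample.

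First I would fix, once at the start of the algorithm, the rounding matrix $S \in \{0,1\}^{n \times n}$ by including each coordinate independently with probability $n^{-\phi/2}$, so that $|S| = n^{2-\phi/2}$ with high probability. This fixes the cost matrix $C = C(S)$ of Definition~\ref{def:rounded-distances} for every round, which is essential because the MWU analysis (Lemma~\ref{lem:mwu}) only allows $C$ to be fixed across rounds. Then, in each round, I would implement sampling from $\lambda^r$ in three nested reductions, mirroring the technical overview in Section~\ref{sec:subquadratic-intro}. Reduction (A), lifting Restriction (\ref{item:restriction coefficient}): after sorting $\alpha^r, \beta^r$ and rounding $\alpha_i^r - \beta_j^r$ to the nearest power of $(1+\chi)$, the level sets $Q_t = \{(i,j) : \alpha_i^r - \beta_j^r \approx (1+\chi)^t\}$ have monotone boundaries, so they can be tiled with $O(n^2/s^2)$ combinatorial rectangles of side $s = n^{1/4}$ plus $O(n^{2-\Omega(1)})$ uncovered pairs (handled explicitly). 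On each rectangle $\calC$, $\lambda^r|_\calC \propto \exp(\sigma K/C_{ij})$ for a single scalar $K$, so it suffices to produce the normalizing constant per rectangle (by a Monte Carlo estimate that itself relies on the per-rectangle sampler) and then sample the rectangle proportional to its weight. Reduction (B), sampling from $\xi_{i,j} \propto \exp(|K|/C_{ij})$ on a combinatorial rectangle of size $m \times m$: use the $\allclosepairs$ subroutine to explicitly enumerate the set $\cL_t$ of closest pairs up to the level $t$ where $|\cL_{t+1}| \leq m^{1+\phi}$; handle $\cE = \cL_t \cup (S \cap \cL_{t+1})$ explicitly; and handle $\cI = [m]^2 \setminus \cE$ by rejection sampling at the threshold $\exp(|K|/(1+\varepsilon)^{t+1})$. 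Reduction (C), implementing $\allclosepairs$: repeatedly subsample $X', Y'$ with rate $1/z$ for $z=\sqrt{m^{1+\phi}}$, call CP on $(X',Y')$ and classify each vertex as ``low degree'' (recovered by a single lucky isolation event of probability $\Omega(1/z^2)$) or ``high degree'' (detected because $\cp(X',Y')$ returns a pair incident to it with probability $\Omega(1/z)$, then brute-forced in $O(nm/z)$ time).

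The main technical obstacle is that the fixed random set $S$ is drawn once, whereas the ``relevant'' set $\cL_t$ appearing in Reduction (B) depends on the combinatorial rectangle, which in turn depends on $(\alpha^r,\beta^r)$, which depends on all previous outputs of \textsc{Certify}, which implicitly depend on $S$. Thus $S$ is correlated with $\cL_t$. I would break this correlation with the union bound sketched at the end of Section~\ref{sec:high-level approach}: the combinatorial rectangles used in round $r$ are a deterministic function only of the $\tilde O(n)$ samples drawn from $\lambda^1,\dots,\lambda^{r-1}$ on previous rounds (together with the fixed input and $S$-independent randomness of the CP and $\allclosepairs$ routines), so over all $\sfR = \polylog(n)$ rounds, the number of possible realizations of the family of rectangles (and hence of the family of sets $\cL_t$ that the algorithm could ever query) is at most $2^{\sfR \cdot \tilde O(n)} = 2^{\tilde O(n)}$. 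A Chernoff bound on $|S \cap L|$ for a fixed set $L$ of size $\geq m^{1+\phi}$ gives a failure probability of $\exp(-\Omega(m^{1-\phi/2}))$, which beats the union-bound factor, so with probability $1 - 1/\poly(n)$ the bound \eqref{eq:low rejection overhead} holds for every set $\cL_{t+2}$ the algorithm actually encounters. This is the step I expect to require the most care, since it couples the correctness of the sampler across rounds with the high-probability guarantees of the CP oracle and $\allclosepairs$.

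Putting the pieces together, one sample from a single $s \times s$ rectangle costs $s^{2-\Omega(\phi)}$ (via Reductions (B) and (C) and boosting the CP oracle from $2/3$ to $1-1/\poly(n)$ success by $O(\log n)$ repetitions), summing over $O(n^2/s^2)$ rectangles gives $n^{2-\Omega(\phi)}$ per sample, and producing $\tilde O(n)$ samples per round times $\sfR = \polylog(n)$ rounds gives overall running time $\tilde O(n^{2-\Omega(\phi)})$; the $nd$ term comes from the preprocessing of Section~\ref{sec:oblivious routing} and from handling the $nd$-sized input. The TV-closeness of the sampled distribution $\lambda''$ to $\lambda^r$ follows from: (i) the $\chi$-rounding of the dual differences in Reduction (A), (ii) the approximate rejection-sampling threshold in Reduction (B), and (iii) the sampling error in the per-rectangle normalizing constants, each of which can be driven to $1/\polylog(n)$ by choosing parameters polylogarithmically. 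Combining with \Cref{lem:sep} and \Cref{lem:mwu}, we conclude that the MWU algorithm of \Cref{fig:mwu} is implemented in time $\tilde O(n^{2-\Omega(\phi)} + nd)$ with success probability $1 - 1/\poly(n,\log\Phi)$.
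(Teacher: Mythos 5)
Your proposal faithfully reproduces the high-level structure of the paper's proof: fix the random rounding set $S$ upfront; reduce arbitrary $D$ to constant $D$ by tiling $[n]\times[n]$ into $n^{1/4}\times n^{1/4}$ combinatorial rectangles with monotone boundaries (as in Lemma~\ref{lem:from constant D to arbitrary D}); on each rectangle use $\allclosepairs$ together with rejection sampling (Lemma~\ref{lem:from all close pairs retrieval to sampling from lambda}); implement $\allclosepairs$ via subsampled CP calls with a low/high-degree vertex split (Lemma~\ref{lem:from CP to all close pair retrieval}); and break the correlation between $S$ and the evolving rectangles by a union bound over possible sample histories (Lemma~\ref{lem:consistency lemma}, Corollary~\ref{cor:consistency lemma}). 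At that level every ingredient matches the paper.

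There is, however, a concrete parameter error: $|S| = n^{2-\phi/2}$ is too small. The rectangles have side $m = n^{1/4}$, so the restriction $S' = S\cap(X_i\times Y_i)$ has expected size $m^2\cdot n^{-\phi/2} = m^{2-2\phi}$, i.e.\ $\rho = 2$ in the notation of Lemma~\ref{lem:from all close pairs retrieval to sampling from lambda}, far outside the allowed range $\rho\in(0,1/2)$. In particular, the expected intersection $|S'\cap L_{t+3}|$ with a level set of size $\tilde\Omega(m^{1+\phi})$ is $\approx m^{1+\phi}\cdot n^{-\phi/2} = m^{1-\phi}$, which falls \emph{below} the $m^{1+\phi/2}$ threshold needed for the rejection-sampler to succeed in $m^{1-\Omega(\phi)}$ attempts per sample. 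The paper's exponent $n^{2-\phi/8}$ is calibrated precisely to the rectangle side $n^{1/4}$ so that the local density is $m^{-\Omega(\phi)}$ rather than $n^{-\Omega(\phi)}$. A second, related, issue is your union-bound arithmetic: you claim that a Chernoff failure probability of $\exp(-\Omega(m^{1-\phi/2})) = \exp(-\Omega(n^{1/4-\phi/8}))$ ``beats'' the $2^{\tilde O(n)}$ factor, but $n^{1/4-\phi/8}\ll n$, so it does not. Closing the union bound requires both the larger $S$ and a tighter count of the events being unioned over (e.g., observing that each level set $L_t(X_i,Y_i)$ is determined by a pair of size-$n^{1/4}$ index sets together with the index $t$, which caps the number of relevant events at $2^{\tilde O(n^{1/4})}$ rather than $2^{\tilde O(n)}$); your sketch does not contain this step, and with $|S|=n^{2-\phi/2}$ no reasonable accounting would make it close.
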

\begin{proof}

Notice that we can boost the success probability of the CP algorithm to $1 - 1 / \poly(n)$ by re-running it $O(\log n)$ times and returning the closest pair among those runs. Since we are going invoke the CP algorithm at most $\poly(n)$ times, we can condition on the event that the CP algorithm never fails.

Recall that $\sfD_\ell = \eps / \log \Phi$, $\sfD_u = O(\log n)$ and $h = O(\log \Phi)$, as defined at the beginning of \Cref{sec:template}. 
According to \Cref{lem:sep}, we can choose
\[ 
\gammagap = \frac{\eps \cdot \sfD_\ell}{ 2 h} \qquad \text{and}\qquad \sfK = \sfD_u \cdot \sfD_T, \qquad \text{ where } \qquad \sfD_T = O(\eps^{-1} \cdot \log n \cdot \log \Phi). 
\]
Likewise, according \Cref{lem:mwu}, we can choose 
\begin{align*}
\eta = \frac{\gammagap}{100 \cdot \sfK^2} \qquad \sfR = \frac{\ln(2n^2)}{100 \cdot \eta \gammagap} \qquad \text{and}\qquad \chi = \frac{\gammagap}{100 \cdot \sfR \sfK}.
\end{align*} 
Thus, $\sfK, \sfR, \eta^{-1}, \chi^{-1}, \gammagap^{-1} \leq \poly(\log(n \Phi), \varepsilon^{-1})$.
Finally, define $\tau := n^{1+\phi /2}$.

\paragraph{Algorithm.}
Upfront, we sample a set $S \subseteq X \times Y$ of size $n^{2-\phi/8}$ uniformly at random.
By \Cref{lem:mwu}, to implement the MWU algorithm in \Cref{fig:mwu}, we just need to correctly implement $\textsc{Certify}(\lambda^{r},\gammagap,\sfK, \delta)$ for $r = 1 \dots \sfR$.
By \Cref{lem:sep}, we can implement $\textsc{Certify}(\lambda^r,\gammagap,\sfK, \delta)$ as long as we can sample from a distribution $\lambda'$ such that $\|\lambda' - \lambda^r\|_{\tv} = o\left( \frac{\varepsilon \delta \sfD_\ell}{h \sfD_u}\right) = o\left(\frac{\varepsilon^2}{\log^2 n \cdot \log^2 \Phi}\right)$. In what follows, we set $\delta < 1/(100R) = 1/\polylog(n)$ so that we can guarantee correctness of the \textsc{Certify} procedure over all $R$ rounds (recall that $\delta$ is the failure probability of the \textsc{Certify} procedure). 
Recall that $\textsc{Certify}(\lambda^r,\gammagap,\sfK, \delta)$ returns $(\balpha, \bbeta)$ satisfying
\[
|\balpha_i - \bbeta_j| \leq C_{ij} \cdot \sfK \leq \poly(n, \Phi),
\]
hence throughout the execution of MWU we have $||\balpha||_\infty, ||\balpha||_\infty \leq \poly(n, \Phi)$.
Recall that $\lambda^r = \lambda^r(\eta, C, D^r, P^r)$ belongs to the class of distributions $\calD(\eta)$ defined in \Cref{def:disc-dist}. In the following, we show how to sample from $\lambda^r$ by leveraging Lemmas \ref{lem:from CP to all close pair retrieval}, \ref{lem:from all close pairs retrieval to sampling from lambda} and \ref{lem:from constant D to arbitrary D}.

\begin{definition}\label{def:level-set}
Fix any two sets $X,Y \subset \R^d$, such that $\|x-y\|_1 \in [1,\Phi]$ for all $x \in X, y \in Y$. Then for any $t \geq 0$, define the level set.
          \[
      L_t(X,Y) = \{(x, y) \in X \times Y \,|\, (1+\varepsilon)^{t-1} \leq ||a - b|| < (1+\varepsilon)^t\}
      \]
      for $t \in [\eps^{-1} \cdot \log \Phi]$.
\end{definition}

\begin{definition} \label{def:shattering function}
Given a partition $A = A_1 \cup \dots \cup A_k$, we say that $D \subseteq A$ $\tau$-\emph{shatters} the partition, for some $\tau \geq 1$, if for all $i$ such that $|A_i| \geq \tau$, we have
\[
0.9 \cdot \frac{|A_i|}{|A|} \le \frac{|D \cap A_i|}{|D|} \le 1.1 \cdot \frac{|A_i|}{|A|}.
\]
\end{definition}

Combining \Cref{lem:from CP to all close pair retrieval} and \Cref{lem:from all close pairs retrieval to sampling from lambda} we obtain the following algorithm.

\begin{framed}
\noindent $\constsampler$
\begin{itemize}
    \item \textbf{Input}: Sets $X', Y' \subseteq (\bbR^d, \ell_1)$ of size $m$, 
    a set $S \subseteq X' \times Y'$ of size $m^{2- \rho \phi}$ for some constant $\rho \in (0, 1/2)$, a matrix $P \in \{\pm 1\}^{m\times m}$, and a matrix $D \in \bbR^{m\times m}$ with all equal entries.
    \item \textbf{Output}: $n$ samples from $\lambda = \lambda(\eta, C(S), D, P)$, where $C(S)$ is defined as in \Cref{def:rounded-distances}.
\end{itemize} 
$\constsampler$ is correct with high probability, as long as $S$ $\tau$-shatters the collection $\{L_t(X', Y')\}_{t \in [\eps^{-1} \cdot \log \Phi]}$.
     $\constsampler$ runs in time $\tilde O(m^{2-\Omega(\phi)})$.
\end{framed}
Notice that the matrices $C, D, P$ fed as input to $\constsampler$ are represented implicitly according to Definitions \ref{def:rounded-distances} and \ref{def:rounded-duals}, allowing for subquadratic running time. 

Then, we invoke \Cref{lem:from constant D to arbitrary D}, which yields the following algorithm.

\begin{framed}
\noindent $\arbitsample$
\begin{itemize}
    \item \textbf{Input}: Sets $X, Y \subseteq (\bbR^d, \ell_1)$ of size $n$, a set $S \subseteq X \times Y$ of size $n^{2-\phi / 8}$, 
    dual solutions $(\balpha, \bbeta) \in \bbZ^{2n}$,
    a matrix $P \in \{\pm 1\}^{n\times n}$, and a matrix $D \in \bbR^{n\times n}$ defined as in \Cref{def:rounded-duals}. 
    
    \item \textbf{Output}: $n \cdot \log^z(n)$ samples from $\lambda'$ such that $|\lambda' - \lambda|_{TV} = o\left(\frac{\varepsilon^2 \delta^2}{\log^2 n \cdot \log^2 \Phi}\right)$, where $\lambda = \lambda(\eta, C(S), D, P)$, $C(S)$ is defined as in \Cref{def:rounded-distances},
    and $z > 0$ is any constant. 
\end{itemize} 
$\arbitsample$ calls 
\[
\constsampler(X_i, Y_i, S \cap (X_i \times Y_i), P|_{X_i\times Y_i}, D|_{X_i\times Y_i})
\]
for $i \in [k]$, where $\{X_i \times Y_i\}_{i \in  [k]}$ is a collection of disjoint combinatorial rectangles in $X \times Y$ of size $\sqrt[4]{n} \times \sqrt[4]{n}$ such that $D|_{X_i \times Y_j}$ has all equal entries.
$\arbitsample$ succeeds with high probability, as long as $\constsampler$ is correct on all these calls. 
Finally, $\arbitsample$ runs in time $\tilde O(n^{2-\Omega(\phi)})$.
\end{framed}
Denote with $\{X^r_i \times Y^r_i\}_{i\in [k]}$ the combinatorial rectangles defined by $\arbitsample$ at round $r$.
We would like to prove that
$S$ $\tau$-shatters the collection 
\begin{equation} \label{eq:collection to shatter}
\set{L_t(X^r_i, Y^r_i) \,:\, r\in [\sfR]\,, i \in [k] \text{ and } t = 1 \dots \varepsilon^{-1}\cdot \log \Phi)}.
\end{equation}
If that is the case, then: (i) the set $S \cap (X^r_i \times Y^r_i)$ fed to $\constsampler$ has the correct size; (ii) $\constsampler$ is correct, with high probability.
To this end, we invoke \Cref{cor:consistency lemma}.

\paragraph{Tracking dependencies.}
In order to verify the hypotheses of \Cref{cor:consistency lemma}, we need to track how the objects defined in our algorithm depend on the random choice of $S$.
When talking about dependencies, we consider all random variables besides $S$ fixed.

Using the notation of \Cref{cor:consistency lemma}, we have $A = [n] \times [n]$ and $p^r = \lambda^r$.
The state $\sigma_r$ corresponds to the variables $(\balpha^r, \bbeta^r)$, which, in turn, define the matrices $P^r$ and $D^r$.
The partition function 
\[
f_r: [n]\times [n] \to \set{L_t(X^r_i, Y^r_i) \,:\, i \in [k] \text{ and } t = 1 \dots \varepsilon^{-1}\cdot \log \Phi)} \cup \{\bot\}
\]

corresponds to the assignment of pairs to combinatorial rectangles (or, if $f(i, j) = \bot$, the fact that the pair $(i,j)$ is not assigned to any rectangle).
Notice that $f_r$ depends on the state $\sigma^r$ (through $D^r$).
 The state $\sigma_{r+1} = (\balpha^{r+1}, \bbeta^{r+1})$ depends on $\sigma_r = (\balpha^{r}, \bbeta^{r})$ as well as the the $\tilde O(n)$ samples returned by $\arbitsample$ at the previous step.
Since the value of $\lambda^r(i, j, \sigma)$ only depends on $P^r_{ij}$ and $D^r_{ij}$, the distribution $\lambda^r$ only depends on $\sigma^r$.

\paragraph{Analysis.}
Above, we verified that the collection in \Cref{eq:collection to shatter} is shattered by the set $S$, which implies that $\constsampler$ is correct by \Cref{lem:from all close pairs retrieval to sampling from lambda}.
Therefore, the correctness of $\arbitsample$ is guaranteed by \Cref{lem:from constant D to arbitrary D}.
By running \textsc{Certify} with a sufficiently small parameter $\delta = 1 / \polylog(n\Phi)$ we ensure correctness across all $\sfR$ rounds.
If \textsc{Certify} is correct, then by \Cref{obs:correctness of mwu}, the MWU algorithm is also correct.

The running time of the algorithm is $\tilde O(n^{2-\Omega(\phi)} + nd)$.
In fact, the algorithm runs $\sfR \leq \poly(\log(n\Phi), \varepsilon^{-1})$ rounds, and in each round it solves \textsc{Certify} using $\tilde O(n)$ samples and $O(nd \Phi)$ time (\Cref{lem:sep}). Finally, to compute the $\tilde O(n)$ samples, the algorithm $\arbitsample$ uses time $\tilde O(n^{2-\phi})$.

\end{proof}

\subsection{From Closest Pair to All Close Pairs Retrieval} \label{sec:from CP to all close pair retrieval}

Recall that $L_t(X,Y)$ is define as in \Cref{def:level-set}.
When the sets $X,Y$ are fixed from context, we will write $L_t = L_t(X,Y)$. Further, we define the \emph{prefix set} $\mathcal L_t = \mathcal L_t(X,Y) = \bigcup_{j \leq t} L_j(X,Y)$.

\begin{lemma} \label{lem:from CP to all close pair retrieval}
      Suppose that there exists an algorithm that solves $(1+\varepsilon)$-approximate closest pair in time $T(n, \varepsilon) = n^{2-\phi}$.
      Let $X, Y \subseteq ( \bbR^n, \ell_1)$ be size-$n$ sets such that $\|x - y\| \in [1, \Phi]$ for each $(x, y) \in X\times Y$. 
      Then, the algorithm $\findclosepairs$ in \Cref{fig:findclosepairs} returns and integer $t \geq 0$ along with the prefix set $\mathcal L_t$ such that $|\mathcal L_t| = \tilde O(n^{1 + \phi})$ and the level set $L_{t+3}$ satisfies $|L_{t+3}| = \tilde \Omega( n^{1+\phi})$. $\findclosepairs$ runs in time 
      $\tilde O(n^{2 - \Omega(\phi)})$.
\end{lemma}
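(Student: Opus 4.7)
The plan is to implement $\findclosepairs$ in two phases: a sampling-based search for a suitable threshold $t$, followed by a randomized retrieval of $\mathcal{L}_t$ via repeated calls to the $(1+\varepsilon)$-approximate closest-pair oracle on random subsamples of $X$ and $Y$.

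\textbf{Phase 1: Finding the threshold $t$.} For every level $r \in \{0, 1, \ldots, \varepsilon^{-1}\log\Phi\}$, I estimate $|L_r|/n^2$ by sampling $\tilde O(n^{1-\phi})$ uniformly random pairs $(i,j) \in [n]\times[n]$ and computing their $\ell_1$ distances in $O(d)$ time each, totaling $\tilde O(nd)$ time. I then let $r^\star$ be the smallest level with $|L_{r^\star}| \geq n^{1+\phi}$ and output $t = r^\star - 3$. Such $r^\star$ exists by pigeonhole: since $\sum_r |L_r| = n^2$ and there are $R = \polylog(n,\Phi)$ levels, some $|L_r|$ is $\geq n^2/R \geq n^{1+\phi}$ for $n$ large. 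By definition $|L_{t+3}| = |L_{r^\star}| \geq n^{1+\phi} = \tilde \Omega(n^{1+\phi})$, and since every earlier level satisfies $|L_r| < n^{1+\phi}$, the prefix $|\mathcal{L}_{t+2}| = |\mathcal{L}_{r^\star - 1}| < R \cdot n^{1+\phi} = \tilde O(n^{1+\phi})$, which in particular yields $|\mathcal{L}_t| = \tilde O(n^{1+\phi})$ as required by the lemma, and also provides the auxiliary bound $|\mathcal{L}_{t+2}| = \tilde O(n^{1+\phi})$ that Phase~2 will use.

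\textbf{Phase 2: Retrieving $\mathcal{L}_t$.} Set $z = \sqrt{n^{1+\phi}}$, and for each of $\tilde O(z^2)$ independent trials draw $X' \subseteq X$, $Y' \subseteq Y$ by including each point independently with probability $1/z$ and invoke $\cp(X', Y')$. For a fixed pair $(x,y) \in \mathcal{L}_t$, whenever $(X' \times Y') \cap \mathcal{L}_{t+1} = \{(x,y)\}$ the $(1+\varepsilon)$-approximation guarantee forces the oracle to return $(x,y)$. I partition pairs in $\mathcal{L}_t$ by the $\mathcal{L}_{t+1}$-degree of the endpoint $x$ with threshold $z/3$: in the low-degree case $\deg_{\mathcal{L}_{t+1}}(x) < z/3$, a union bound over non-incident pairs in $\mathcal{L}_{t+1}$ (using $|\mathcal{L}_{t+1}| \leq |\mathcal{L}_{t+2}| = \tilde O(n^{1+\phi})$) combined with an incidence bound from the degree control gives uniqueness of $(x,y)$ with probability $\tilde \Omega(1)$ conditional on $(x,y) \in X' \times Y'$. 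Thus $(x,y)$ is returned with probability $\tilde \Omega(z^{-2})$, and the $\tilde O(z^2)$ trials collect all low-degree pairs with high probability.

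\textbf{Main obstacle: the high-degree case.} The hardest part of the argument is handling vertices $x$ with $\deg_{\mathcal{L}_{t+1}}(x) \geq z/3$, for which no specific pair $(x, y)$ achieves $\tilde\Omega(z^{-2})$ return probability because many $\mathcal{L}_{t+1}$-neighbors of $x$ compete in $X' \times Y'$. Here the strategy shifts from finding specific pairs to \emph{detecting} the vertex $x$ itself by counting, across the $\tilde O(z^2)$ trials, the fraction of CP outputs of the form $(x, \cdot)$: conditional on $x \in X'$ (probability $1/z$), the event that at least one $\mathcal{L}_{t+1}$-neighbor of $x$ lies in $Y'$ while no non-incident pair of $\mathcal{L}_{t+2}$ simultaneously lies in $X' \times Y'$ occurs with constant probability (using $|\mathcal{L}_{t+2}| = \tilde O(n^{1+\phi})$), so the unconditional detection rate is $\Omega(z^{-1})$ for every high-degree $x$. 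Chernoff bounds then isolate all high-degree vertices from empirical counts, and each detected $x$ has its $\mathcal{L}_t$-neighborhood computed by brute force in $O(nd)$ time; since there are at most $O(|\mathcal{L}_{t+1}|/z)$ high-degree vertices, the total brute-force cost is $\tilde O(n^{(3+\phi)/2} d) = \tilde O(n^{2-\Omega(1)} d)$. Together with $\tilde O(z^2) \cdot (n/z)^{2-\phi}$ time for the CP calls, substituting $z = n^{(1+\phi)/2}$ yields overall running time $\tilde O(n^{2-\phi/2 + \phi^2/2}) = \tilde O(n^{2-\Omega(\phi)})$.
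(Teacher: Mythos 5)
Your Phase 2 essentially mirrors the paper's (subsample at rate $1/z$, call CP, separate light/heavy edges by $\mathcal{L}_{t+1}$-degree, detect heavy vertices by frequency counts, brute-force their neighborhoods), but your Phase 1 — how you select the threshold $t$ — is genuinely different from the paper, and the version you propose contains a real gap.

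You set $r^\star$ to be the smallest level with $|L_{r^\star}| \geq n^{1+\phi}$, and argue that the prefix satisfies $|\mathcal{L}_{t+2}| < R\cdot n^{1+\phi}$ simply because each of the $\leq R = \polylog(n,\Phi)$ preceding levels is individually below $n^{1+\phi}$. That polylog factor is fatal to Phase 2: with $z = \sqrt{n^{1+\phi}}$, the expected number of $\mathcal{L}_{t+1}$-pairs landing in $X'\times Y'$ is $|\mathcal{L}_{t+1}|\cdot z^{-2}$, which under your bound is up to $R \gg 1$. The probability that the target $(x,y)$ is the \emph{unique} $\mathcal{L}_{t+1}$-pair in $X'\times Y'$ (conditioned on $(x,y)\in X'\times Y'$) is then roughly $e^{-\Theta(R)}$, i.e.\ quasi-polynomially small, not $\tilde\Omega(1)$ as you claim — a union bound that only shows "bad probability $\leq R$" gives no information once $R>1$. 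The same issue corrupts the heavy-vertex detection rate, where you need the event "no non-incident $\mathcal{L}_{t+2}$-pair lies in $X'\times Y'$" to have constant probability.

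The paper's $\llp$ routine avoids this by controlling the \emph{prefix} directly rather than individual levels: it draws $\Theta\bigl(\tfrac{n^2}{z^2}\log\Phi\bigr)$ uniform pairs and sets $t+3$ to the smallest level actually containing a sample, which simultaneously forces $|L_{t+3}| = \tilde\Omega(z^2)$ (levels lighter than $z^2/\polylog$ are unlikely to be hit at all) \emph{and} $|\mathcal{L}_{t+2}| \leq 0.1 z^2$ (if any prefix exceeded $0.1 z^2$, a sample would already have hit it, forcing $t+3$ to be no larger than that level). The constant $0.1$ is exactly what makes the Phase 2 union bound yield a genuine probability bound. Your version is salvageable — e.g.\ increase $z$ by a $\Theta(\sqrt{R})$ factor so the union bound lands below $1/2$, which only costs polylog in the trial count and preserves the $\tilde O(n^{2-\Omega(\phi)})$ runtime — but as written the key step "uniqueness with probability $\tilde\Omega(1)$" is unjustified and in fact false.

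A secondary imprecision: you define the light/heavy dichotomy only via $\deg_{\mathcal{L}_{t+1}}(x)$, but for the union bound over incident pairs $(a, y)$ you also need $\deg_{\mathcal{L}_{t+1}}(y)$ to be small — the paper declares an edge light only when \emph{both} endpoints have small degree. This is easily patched (and matches your intent), but should be stated.
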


In the remainder of this section, we prove \Cref{lem:from CP to all close pair retrieval}.

\begin{figure}[h]
\begin{framed}
\noindent \textbf{Algorithm} $\sscp(X, Y, \eta)$. 
\begin{enumerate}
\item Let $X'$ (resp. $Y'$) be the set obtained by subsampling $X$ (resp. $Y$) with rate $\eta$.
\item Return a $(1+\varepsilon)$-approximate closest pair in $X' \times Y'$, with success probability $1- n^{-3}$. \label{fig: closest pair in sscp}
\end{enumerate}
\end{framed}
\caption{Implementation of $\sscp$. \label{fig:sscp}}
\end{figure}

\begin{lemma} \label{lem:sscp running time}
If $|X| = |Y| = n$, $\sscp$ takes time $ \tilde O( (\eta \cdot n)^{2-\Omega(\phi)})$, with high probability.
\end{lemma}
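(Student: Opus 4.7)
The plan is to show that the subsampled sets $X'$ and $Y'$ are not much larger than their expected sizes with high probability, and then invoke the assumed CP algorithm on point sets of this bounded size.

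First, I would apply a standard Chernoff bound to the random sets $|X'|$ and $|Y'|$, each of which is a sum of $n$ independent Bernoulli random variables with parameter $\eta$. In the regime where $\eta n \geq \Omega(\log n)$, multiplicative Chernoff yields $|X'|, |Y'| \leq 2\eta n$ with probability at least $1-n^{-\omega(1)}$. In the complementary regime where $\eta n = o(\log n)$, the expected sizes are so small that a simple Chernoff argument (or a direct union bound) gives $|X'|, |Y'| = O(\log n)$ with probability $1-n^{-\omega(1)}$; in this case, a brute force all-pairs distance computation on $X' \times Y'$ takes time $\tilde{O}(d)$, which is within the claimed $\tilde O((\eta n)^{2-\Omega(\phi)})$ budget after absorbing $d$ into polylog factors (or adding an explicit $O(nd)$ preprocessing term, as is consistent with the rest of the paper). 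In either case, the effective instance size $m := \max(|X'|, |Y'|)$ passed to the CP subroutine on Line~\ref{fig: closest pair in sscp} satisfies $m = \tilde O(\eta n)$ with probability at least $1-n^{-\omega(1)}$.

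Next, I would invoke the hypothesized $(1+\varepsilon)$-approximate CP algorithm, which runs in time $T(m,\varepsilon) = m^{2-\phi}$ with success probability at least $2/3$. Plugging in $m = \tilde O(\eta n)$ yields a per-call running time of $\tilde O((\eta n)^{2-\phi})$. To boost the success probability from $2/3$ to $1-n^{-3}$ as required by Line~\ref{fig: closest pair in sscp}, I would run the CP algorithm $O(\log n)$ independent times and return the closest pair among the outputs; standard amplification and a union bound give the claimed success probability, while the running time only grows by a $\polylog(n)$ factor. Combining this with the size bound, the overall running time of $\sscp$ is $\tilde O((\eta n)^{2-\phi}) \subseteq \tilde O((\eta n)^{2-\Omega(\phi)})$ with high probability.

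The argument is essentially routine, so there is no serious obstacle. The only minor care is ensuring that the concentration bound on $|X'|$ and $|Y'|$ handles the regime where $\eta n$ is very small; there the $\tilde O(\cdot)$ in the statement (and the universal $O(nd)$ preprocessing cost implicit throughout the paper) absorbs the brute-force computation cleanly. All remaining factors (the $O(\log n)$ amplification for the CP subroutine and any polylog distortion in the concentration bound) are swept into the $\tilde O(\cdot)$ and $\Omega(\phi)$ notation, giving the claimed bound.
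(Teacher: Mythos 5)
Your proof is correct and follows essentially the same route as the paper's: concentrate $|X'|, |Y'|$ around $\tilde O(\eta n)$ via Chernoff, then invoke the CP subroutine on the subsampled instance. The extra care you take with the small-$\eta n$ regime and the explicit $O(\log n)$-fold amplification are sound, if slightly beyond what the paper's terse proof spells out (the paper handles amplification once globally in the proof of \Cref{thm:main algorithm works} and notes padding of $X',Y'$ to equal sizes).
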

\begin{proof}
First, observe that subsampling with rate $\eta$ can be implemented in time $\tilde O(\eta \cdot n) = \tilde O( (\eta \cdot n)^{2-\Omega(\phi)})$.
In fact, it is sufficient to sample a binomial random variable $b \sim \text{Bin}(n, \eta)$ and then sample $b$ distinct elements from $X$ (resp. $Y$).
By standard concentration bounds $b = \tilde O(\eta \cdot n)$ with high probability, hence the claimed running time for the subsampling step.

By conditioning on the event $|X'|, |Y'| = \tilde O(\eta \cdot n)$, computing a $(1+\varepsilon)$-approximate closest pair takes time $ \tilde O( (\eta \cdot n)^{2-\phi})$. Indeed, if $|X'| \neq |Y'|$ we can simply pad the smallest set with dummy (far-from-all-points) vectors.
\end{proof}

\begin{figure}[h]
\begin{framed}
\noindent \textbf{Algorithm} $\llp(X, Y, z)$.
\begin{enumerate}
\item Let $S \subseteq X \times Y$ be a set of $\frac{n^2}{z^2} \log \Phi$ i.i.d. uniform samples from $X \times Y$.
\item Let $t + 3:= \min \set{s\geq 0 \,|\, L_s \cap S \neq \emptyset}$.
\item Return $t$.
\end{enumerate}
\end{framed}
\caption{Implementation of $\llp$. \label{fig:llp}}
\end{figure}

\begin{lemma} \label{lem:t is correct}
Fix $z \in [\sqrt{n}, n]$. Let $t = \llp(X, Y, z)$, then $|L_{t+3}| \geq \frac{z^2}{\log^3 \Phi}$ and $|\cL_{t+2}| \leq 0.1 \cdot z^2 $, with probability $1 -o(1)$.
\end{lemma}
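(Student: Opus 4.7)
The plan is to prove the two assertions separately via straightforward sampling calculations and then combine them by a union bound. Throughout, write $m = |S| = (n^2/z^2)\log\Phi$, and note that since all pairwise distances lie in $[1,\Phi]$ and levels correspond to geometric intervals of ratio $(1+\varepsilon)$, there are at most $H = O(\varepsilon^{-1}\log\Phi)$ non-empty levels.

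First I would handle the upper bound $|\cL_{t+2}| \leq 0.1\, z^2$. Let $t^\star$ be the largest index with $|\cL_{t^\star}| \leq 0.1\, z^2$. The bad event $|\cL_{t+2}| > 0.1\, z^2$ is exactly $t+3 > t^\star + 1$, equivalently $S \cap \cL_{t^\star+1} = \emptyset$. By maximality of $t^\star$, $|\cL_{t^\star+1}| > 0.1\, z^2$, so this bad event has probability at most $(1 - 0.1\, z^2/n^2)^{m} \leq \exp(-0.1 \log \Phi) = \Phi^{-0.1} = o(1)$, using the paper's convention $\Phi = \omega(1)$.

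For the lower bound $|L_{t+3}| \geq z^2/\log^3\Phi$, I would classify a level $s$ as \emph{big} if $|L_s| \geq z^2/\log^3\Phi$ and as \emph{small} otherwise. For each small level $s$, the probability $S \cap L_s \neq \emptyset$ is at most $m \cdot |L_s|/n^2 \leq 1/\log^2\Phi$. A union bound over the at most $H$ non-empty small levels then gives $\Pr[S \text{ hits any small level}] \leq H/\log^2\Phi = O(\varepsilon^{-1}/\log\Phi) = o(1)$. On the complementary event, the smallest hit level $t+3$ must be big, which yields $|L_{t+3}| \geq z^2/\log^3\Phi$. A final union bound over the two failure events completes the proof.

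The argument is essentially a balls-in-bins calculation, so the main subtlety lies in tuning $m$: it has to be large enough that any prefix of mass $\geq 0.1\, z^2$ is hit with probability $1-o(1)$ (first bound), yet small enough that the per-level union bound over small levels remains $o(1)$ (second bound). The choice $m = (n^2/z^2)\log\Phi$ balances these, giving a $\Phi^{-\Omega(1)}$ tail for the first bound and an $H/\log^2\Phi$ union bound for the second; both conclusions rely on $\Phi = \omega(1)$. The assumption $z \in [\sqrt{n}, n]$ just ensures $0.1\, z^2 < n^2$ (so $t^\star + 1$ is a valid level index).
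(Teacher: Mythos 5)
Your proof is correct and follows essentially the same approach as the paper's: both bound the two events separately (a union bound over per-level hitting probabilities for the lower bound on $|L_{t+3}|$, and a single exponential tail bound for the upper bound on $|\cL_{t+2}|$) and combine by a union bound. Your reformulation via $t^\star$ is just the complementary description of the paper's $s$ (the smallest index with $|\cL_s| > 0.1\,z^2$), so the arguments are the same.
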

\begin{proof}
For any $s = 1 \dots \varepsilon^{-1} \cdot \log \Phi$, if $|L_s| < \frac{z^2}{\log^3 \Phi}$ we have
\begin{align*}
\Pr[L_s \cap S \neq \emptyset] &\leq \\
\sum_{(x, y) \in L_s} \Pr[(x, y) \in S] &< \\
\frac{z^2}{\log^3 \Phi} \cdot \left(1 - \left(1 - \frac{1}{n^2}\right)^{\frac{n^2}{z^2} \log \Phi} \right) &= \\
\frac{z^2}{\log^2 \Phi} \cdot O\left(\frac{\log \Phi}{z^2}\right) &= 
O\left(\frac{1}{\log^2 \Phi}\right).
\end{align*}
Thus, by taking a union bound over all $\varepsilon^{-1} \cdot \log \Phi$ possible values of $s$ we have that, with probability $1-o(1)$, $|L_s| < \frac{z^3}{\log^3 \Phi}$ implies $L_s \cap S = \emptyset$. Thus, $|L_{t+3}| \geq \frac{z^2}{\log^3 \Phi}$.

Let $s$ be the smallest integer in $[\eps^{-1} \cdot \log \Phi]$ such that $\cL_s > 0.1 \cdot z^2$.
Then, 
\begin{align*}
\Pr[\cL_s \cap S = \emptyset] &\leq
\left(1 - \frac{0.1 \cdot z^2}{n^2}\right)^{\frac{n^2}{z^2} \log \Phi} = \frac{1}{\Phi^{\Omega(1)}}. 
\end{align*}
Thus, with probability\footnote{
If $\Phi = \omega(1)$ does not hold, we can replace $\Phi$ with $\max\{\Phi, n\}$.
} $1- o(1)$, we must have $t+3 \leq s$, which implies $\cL_{t+2} \leq 0.1 \cdot z^2$.

\end{proof}

\begin{figure}[H]
\begin{framed}
\noindent \textbf{Algorithm} $\findclosepairs(X, Y)$. 
\begin{enumerate}
\item Let $z = \sqrt{n^{1+\phi}}$.
\item $t = \llp(X, Y, z)$.
\item Initialize the set $L = \emptyset$.
\item For $i = 1 \dots \tilde O(z^2)$: 
\begin{enumerate}[(a)]
    \item Let $(x, y) = \sscp(X, Y, 1/z)$. \label{fig:collect light edges}
    \item If $(x, y) \in \cL_t$, add $(x, y)$ to $L$.
\end{enumerate}
\item \textbf{Invariant (I):} $L$ contains all light edges in $\cL_t$.
\item Initialize the counter $c: X \cup Y \rightarrow \bbZ$ to all zeros.
\item For $i = 1 \dots T = \tilde O(z)$:
\begin{enumerate}[(a)]
    \item Let $(x, y) = \sscp(X, Y, 1/z)$.  
    \item If $(x, y) \in \cL_{t+1}$, set $c[a] \leftarrow c[a] + 1$ and $c[b] \leftarrow c[b] + 1$.
\end{enumerate}
\item Define $F$ as the set of $x \in X \cup Y$ with $c(x) \geq 0.02 \cdot T$. \label{fig:if c of x}
\item \textbf{Invariant (II):} $\set{0.03\text{-frequent vertices}} \subseteq F \subseteq \set{0.01\text{-frequent vertices}}$.
\item Initialize the set $H = \emptyset$.
\item For $f \in F$: \label{fig:for loop brute force search}
\begin{enumerate}[(a)]
    \item If $f \in X$, then add to $H$ all $(f, b) \in \{f\} \times Y$ such that $(f, b) \in \cL_t$.
    \item If $f \in Y$, then add to $H$ all $(a, f) \in X \times \{f\}$ such that $(a, f) \in \cL_t$.
\end{enumerate}
\item \textbf{Invariant (III):} $H$ contains all heavy edges in $\cL_t$.
\item Return $t, L \cup H$.
\end{enumerate}
\end{framed}
\caption{Implementation of $\findclosepairs$. \label{fig:findclosepairs}}
\end{figure}

Throughout, fix $z = \sqrt{n^{1+\phi}}$ and $t$ as in $\findclosepairs$ and condition on the high-probability event in \Cref{lem:t is correct}.
Consider the bipartite graph $G=(X \cup Y, \cL_{t+1})$. We say that a vertex $x \in X \cup Y$ is \emph{light} if $\deg_{\cL_{t+1}}(x) \leq 0.5 \cdot z$, and \emph{heavy} otherwise.
We say that an edge $(x, y) \in \cL_{t+1}$ is light if both its endpoints are light, and heavy otherwise. 
We say that a vertex $x \in X\cup Y$ is $C$-\emph{frequent} if 
\[
\Pr\left[\sscp(X, Y, 1/z) \text{ returns and edge from } \cL_{t+1} \text{ incident to } x \right] \geq C \cdot z^{-1}.
\]

\begin{lemma} \label{lem:collecting light edges is likely}
Let $(x, y) \in \cL_t$ be a light edge. Then, $\Pr[\sscp(X, Y, 1/z) \text{ returns } (x, y)] = \Omega(z^{-2})$.
\end{lemma}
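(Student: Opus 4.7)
The plan is to condition on the ``lucky'' event that both $x$ and $y$ land in the subsampled sets $X', Y'$, and then show that with constant probability no \emph{other} pair of $\cL_{t+1}$ survives the subsampling. Since $X'$ and $Y'$ are obtained by independent $1/z$-rate subsampling, we have $\Pr[x \in X' \wedge y \in Y'] = 1/z^2$. Conditioned on this event, I aim to show that $(x, y)$ is the unique pair in $\cL_{t+1} \cap (X' \times Y')$ with constant probability. When this happens, the minimum distance in $X' \times Y'$ equals $\|x-y\|_1 < (1+\varepsilon)^t$, so any pair the $(1+\varepsilon)$-approximate CP subroutine may return has distance at most $(1+\varepsilon)^{t+1}$ and hence lies in $\cL_{t+1}$; by uniqueness, it must be $(x,y)$ itself.

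Next, I would decompose the ``bad'' pairs in $\cL_{t+1} \setminus \{(x,y)\}$ into three classes:
\begin{itemize}
    \item $E_1$: some $b \neq y$ with $(x, b) \in \cL_{t+1}$ lies in $Y'$;
    \item $E_2$: some $a \neq x$ with $(a, y) \in \cL_{t+1}$ lies in $X'$;
    \item $E_3$: some $(a, b) \in \cL_{t+1}$ with $a \neq x$ and $b \neq y$ satisfies $a \in X'$ and $b \in Y'$.
\end{itemize}
Since $(x, y)$ is a light edge, both $\deg_{\cL_{t+1}}(x)$ and $\deg_{\cL_{t+1}}(y)$ are at most $0.5 z$, so a straightforward union bound yields $\Pr[E_1], \Pr[E_2] \leq 0.5 z \cdot (1/z) = 0.5$. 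For $E_3$, I would invoke the high-probability conclusion of \Cref{lem:t is correct} that $|\cL_{t+1}| \leq |\cL_{t+2}| \leq 0.1 z^2$, on which we condition throughout; each non-incident pair $(a,b)$ is contained in $X' \times Y'$ with probability exactly $1/z^2$, giving $\Pr[E_3] \leq 0.1$ by a second union bound.

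The key observation is that $E_1$ is determined solely by which $Y$-vertices are subsampled and $E_2$ solely by which $X$-vertices are, so the two are independent and $\Pr[\overline{E_1} \cap \overline{E_2}] \geq 1/4$. To combine with $E_3$, which does share sampling indicators with both $E_1$ and $E_2$, I would sidestep any appeal to FKG/Harris and use the elementary bound
\[
\Pr[\overline{E_1} \cap \overline{E_2} \cap \overline{E_3}] \;\geq\; \Pr[\overline{E_1} \cap \overline{E_2}] - \Pr[E_3] \;\geq\; 0.25 - 0.1 \;=\; 0.15.
\]
Multiplying by the conditioning factor $1/z^2$ and absorbing the $n^{-3}$ failure probability of the internal CP oracle (line~\ref{fig: closest pair in sscp} of \Cref{fig:sscp}) yields $\Pr[\sscp(X, Y, 1/z) = (x,y)] = \Omega(z^{-2})$, as desired. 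The only real subtlety is the dependence between $E_3$ and the other two bad events, which the subtraction trick above resolves cleanly; everything else is a pair of elementary union bounds.
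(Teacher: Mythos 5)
Your proof is correct and takes essentially the same route as the paper: condition on the event $\cF = \{x \in X',\, y \in Y'\}$ (probability $z^{-2}$), then show that with constant conditional probability no other pair of $\cL_{t+1}$ survives the subsampling, so the $(1+\eps)$-approximate CP oracle is forced to return $(x,y)$. The paper uses the identical three-way decomposition of the bad pairs (incident to $x$, incident to $y$, incident to neither), with your $E_3$ being the paper's $\cE_1$ and your $E_1 \cup E_2$ being the paper's $\cE_2$. The only real difference is how the two incident events are handled: the paper bounds each neighborhood-intersection probability via $1 - (1-1/z)^{0.5z} < 0.4$ and then applies a single union bound ($0.1 + 0.8 = 0.9 < 1$), whereas you use the coarser union bound $\Pr[E_1],\Pr[E_2] \le 0.5$ — for which a naive union with $E_3$ would exceed $1$ — and instead recover the constant by exploiting the independence of $E_1$ and $E_2$ (determined by disjoint sets of sampling indicators) to get $\Pr[\overline{E_1}\cap\overline{E_2}] \ge 0.25$, then subtract $\Pr[E_3] \le 0.1$. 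Both arguments yield $\Omega(z^{-2})$; yours trades the slightly sharper binomial estimate for an explicit independence observation, and also notes explicitly that the internal CP failure probability $n^{-3}$ must be absorbed, which the paper leaves implicit.
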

\begin{proof}
The probability that both $x$ and $y$ are subsampled (namely $(x, y) \in X' \times Y'$) is exactly $z^{-2}$.
Let $\cF$ be this event.
Let $\cE$ be the event that some other edge $(x, y) \in \cL_{t+1}$ belongs to $X' \times Y'$. Since $(x, y) \in \cL_t$, if $\cF$ happens and $\cE$ does not happen, then our closest-pair algorithm correctly returns $(x, y)$. Our goal is now to upper bound the probability of $\cE$ conditioned on $\cF$.

Let $\cE_1$ be the event that there exists an edge $(a, b)$ with $a \neq x$ and $b \neq y$ such that $(a, b) \in X' \times Y'$.
 Let $\cE_2$ be the event that there exists an edge of the form $(a, y)$ (with $a \neq x$) or $(x, b)$ (with $b \neq y$) that belongs to $X'\times Y'$.
Clearly, $\cE = \cE_1 \cup \cE_2$, so, it is sufficient to show that $\Pr[\cE_1 \,|\, \cF] + \Pr[\cE_2 \,|\, \cF] \leq 1 - \Omega(1)$.

The event $\cE_1$ is independent of $\cF$, so we can bound
\[
\Pr\left[ \cE_1 \,|\, \cF \right] = \Pr [\cE_1] \leq |\cL_{t+1}| \cdot z^{-2} \leq 0.1,
\]
where the last inequality follows from $\Cref{lem:t is correct}$.

Since $(x, y)$ is light, then the $\cL_{t+1}$-neighborhoods of $x$ and $y$ are small. More precisely, we have $|N_{\cL_{t+1}}(b)|, \allowdisplaybreaks|N_{\cL_{t+1}}(a)| \leq 0.5 \cdot z$. So,
\[
\Pr\left[N_{\cL_{t+1}}(a) \cap Y' \neq \emptyset \right] = 1 - \left(1-\frac 1 z \right)^{|N_{\cL_{t+1}}(a)|} \leq 1 - \left(1-\frac 1 z \right)^{0.5 \cdot  z} = 1 - \frac{1}{e^{0.5}} + o(1) < 0.4.
\]
and likewise for $\Pr\left[N_{\cL_{t+1}}(b) \cap X' \neq \emptyset \right]$.
Therefore,
\[
\Pr\left[ \cE_2 \,|\, \cF \right] \leq \Pr\left[N_{\cL_{t+1}}(a) \cap Y' \neq \emptyset \right] + \Pr\left[N_{\cL_{t+1}}(b) \cap X' \neq \emptyset \right] \leq 0.8.
\]
Finally, $\Pr[\cE \cond \cF] \leq 0.9$, thus
\[
\Pr[\cF \text{ and } \neg\cE] \geq \Pr[\cF] \cdot \Pr[\neg \cE \,|\, \cF] \geq 0.1 \cdot z^{-2}.
\]
\end{proof}

\begin{lemma} \label{lem:all heavy are frequent}
Every heavy vertex is $0.03$-frequent.
\end{lemma}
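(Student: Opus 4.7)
The plan is to fix a heavy $x \in X$ (without loss of generality), recall that heaviness means $|N_{\cL_{t+1}}(x)| > 0.5 z$, and lower-bound the probability that a single invocation of $\sscp(X, Y, 1/z)$ returns an edge of $\cL_{t+1}$ incident to $x$ by decomposing into three events on the random subsamples $X' \subseteq X$ and $Y' \subseteq Y$.

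I would introduce $\cF = \{x \in X'\}$, $\cG = \{N_{\cL_{t+1}}(x) \cap Y' \neq \emptyset\}$, and $\cH = \{\text{no pair } (a,b) \in \cL_{t+2} \text{ with } a, b \neq x \text{ lies in } X' \times Y'\}$. We have $\Pr[\cF] = 1/z$. Conditioning on $\cF$ does not affect the sampling of $Y'$ or of $X' \setminus\{x\}$, so heaviness gives $\Pr[\cG \mid \cF] \geq 1 - (1-1/z)^{0.5z} \geq 1 - e^{-1/2} > 0.39$, and the bound $|\cL_{t+2}| \leq 0.1 z^2$ from \Cref{lem:t is correct}, combined with a union bound over non-incident pairs (each lying in $X' \times Y'$ with probability $1/z^2$), gives $\Pr[\neg \cH \mid \cF] \leq 0.1$. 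Using the subadditive inequality $\Pr[\cG \cap \cH \mid \cF] \geq \Pr[\cG \mid \cF] - \Pr[\neg \cH \mid \cF] \geq 0.29$ (necessary because $\cG$ and $\cH$ share $Y'$-randomness), I obtain $\Pr[\cF \cap \cG \cap \cH] \geq 0.29/z$.

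On the joint event $\cF \cap \cG \cap \cH$, $\cG$ exhibits a pair $(x,y) \in X' \times Y'$ at distance $< (1+\varepsilon)^{t+1}$, so the minimum distance $d^*$ in $X' \times Y'$ satisfies $d^* < (1+\varepsilon)^{t+1}$; the $(1+\varepsilon)$-approximate CP returned by $\sscp$ then has distance at most $(1+\varepsilon) d^* < (1+\varepsilon)^{t+2}$ and thus lies in $\cL_{t+2}$. Event $\cH$ forces this pair to be incident to $x$, since every other pair in $\cL_{t+2} \cap (X' \times Y')$ is excluded. To conclude that the returned pair lies in $\cL_{t+1}$ rather than merely $\cL_{t+2}$, I would invoke $\sscp$'s inner CP oracle with a slightly tightened approximation parameter $\varepsilon' = \Theta(\varepsilon)$ chosen so that $(1+\varepsilon')$ applied to any $d^* < (1+\varepsilon)^{t+1}$ lands below $(1+\varepsilon)^{t+1}$ after a matching rescaling of level thresholds; this only costs a constant factor in the CP exponent. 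Putting the pieces together yields $\Pr[\sscp \text{ returns a pair in } \cL_{t+1} \text{ incident to } x] \geq 0.29/z > 0.03/z$, which is exactly $0.03$-frequency.

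The main obstacle is the one-level index shift between $\cL_{t+1}$ (used both to define heaviness and to test membership in step 7(b) of \Cref{fig:findclosepairs}) and $\cL_{t+2}$ (the coarser level into which the $(1+\varepsilon)$-slack in the approximate CP naturally lands its output): without tightening the inner approximation parameter, the argument would only certify the returned pair in $\cL_{t+2}$, breaking the downstream counter check. A secondary technical point is the partial dependence between $\cG$ and $\cH$ through shared $Y'$-randomness, which prevents a naive product bound and forces the use of the subadditive inequality above.
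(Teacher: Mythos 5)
Your decomposition into events $\cF = \{x \in X'\}$, $\cG = \{N_{\cL_{t+1}}(x) \cap Y' \neq \emptyset\}$, and $\cH = \{\text{no non-incident } \cL_{t+2}\text{-pair in } X'\times Y'\}$ is exactly the paper's decomposition (with the labels $\cG$ and $\cH$ swapped), and your probability bounds — $1/z$, $1-e^{-1/2}$, the union bound over $|\cL_{t+2}| \leq 0.1z^2$, and the subadditivity estimate handling the shared $Y'$-randomness — all match the paper's computations, so the core of your argument is the intended one.

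The subtlety you flag, however, is genuine and the paper elides it: under $\cF \cap \cG \cap \cH$ the minimum distance in $X'\times Y'$ is only $< (1+\eps)^{t+1}$, so a $(1+\eps)$-approximate CP need only return a pair of distance $< (1+\eps)^{t+2}$, i.e.\ in $\cL_{t+2}$, and $\cH$ (the paper's $\cG$) forces that pair to be incident to $x$ but does not push it into $\cL_{t+1}$. The paper's assertion $\cE \supseteq \cF \cap \cG \cap \cH$, where $\cE$ is defined via $\cL_{t+1}$, is therefore not justified as written, so you are right that something must be patched. But your proposed patch — tightening the CP's approximation parameter to some $\eps' = \Theta(\eps)$ — cannot work: no matter how small $\eps' > 0$ is, a minimum distance $d^*$ arbitrarily close to $(1+\eps)^{t+1}$ gives $(1+\eps')d^* > (1+\eps)^{t+1}$, so you cannot force the returned pair into $\cL_{t+1}$, and the "rescaling of level thresholds" you gesture at only shifts the same mismatch by one index. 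The clean repair is to leave the CP oracle alone and instead replace $\cL_{t+1}$ by $\cL_{t+2}$ both in the definition of "$C$-frequent" and in the membership test on line 7(b) of Figure~\ref{fig:findclosepairs}; with this change the returned pair $(x,\cdot) \in \cL_{t+2}$ is correctly counted, Lemma~\ref{lem:frequent vertices are few} goes through verbatim after substituting $\deg_{\cL_{t+2}}$ for $\deg_{\cL_{t+1}}$ (since $|\cL_{t+2}| = O(z^2)$ is still guaranteed by Lemma~\ref{lem:t is correct}), and invariants (II)--(III) are unaffected.
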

\begin{proof}
Consider a heavy vertex $x \in X$ (the case $x \in Y$ is symmetric). Let $\cE$ be the event ``$\sscp(X, Y, 1/z) \text{ returns and edge from } \cL_{t+1} \text{ incident to } x$''.
We need to prove that $\Pr[\cE] \geq 0.03 \cdot z^{-1}$.
Let $\cF$ be the event $x \in X'$. 
Let $\cG$ be the event $\cL_{t+2} \cap (X' \setminus\{x\} \times Y') = \emptyset$
Denote with $N_{\cL_{t+1}}(x)$ is the neighborhood of $x$ with respect to edges in $\cL_{t+1}$.
Let $\cH$ be the event $N_{\cL_{t+1}}(x)  \cap Y' \neq \emptyset$.
Notice that $\cF$ and $\cG \cap \cH$
are independent.
Moreover, $\cE \supseteq \cF \cap \cG \cap \cH$.
So, $\Pr[\cE] \geq \Pr[\cF] \cdot \Pr[\cH\cap\cG]$.

By definition of subsampling we have $\Pr[\cF] = z^{-1}$. 
By union bound over all $|\cL_{t+2}| \leq 0.1 \cdot  z^2$
we have
\[
\Pr[\neg \cG] \leq  \sum_{(x, y) \in \cL_{t+2}} \frac 1 z \leq 0.1.
\]
Since $x$ is heavy we have $|N_{\cL_{t+1}}(x)| > 0.5 \cdot z$, so
\[
1 - \Pr[\cH] = \Pr[\neg \cH] = \left(1 - \frac{1}{z}\right)^{|N_{\cL_{t+1}}(x)|} < \left(1 - \frac{1}{z}\right)^{0.5 \cdot z} < 0.65.
\]
Thus, $\Pr[\cG \cap \cH] \geq \Pr[\cH] - \Pr[\neg \cG] \geq 0.35 - 0.1 = 0.25$ and we have
\[
\Pr[\cE] \geq z^{-1} \cdot 0.25 \geq 0.03 \cdot z^{-1}.
\]
\end{proof}

\begin{lemma} \label{lem:frequent vertices are few}
There are at most $O(z)$ $0.1$-frequent vertices.
\end{lemma}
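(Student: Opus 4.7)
The plan is to bound the sum of frequencies across all vertices by a simple linearity-of-expectation argument, and then use a Markov-style counting argument.

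Specifically, for each vertex $v \in X \cup Y$, let
\[ p_v := \Pr\left[\sscp(X, Y, 1/z) \text{ returns an edge in } \cL_{t+1} \text{ incident to } v\right]. \]
I will show that $\sum_{v \in X \cup Y} p_v \leq 2$. To see this, note that a single invocation of $\sscp(X, Y, 1/z)$ returns at most one pair $(x, y) \in X \times Y$. Consider the random variable $N := \#\{v \in X \cup Y : \sscp \text{ returns an edge in } \cL_{t+1} \text{ incident to } v\}$. If $\sscp$ does not return a pair in $\cL_{t+1}$, then $N = 0$; otherwise, it returns a single pair $(x, y)$ with $x \in X$ and $y \in Y$, so $N \leq 2$ deterministically. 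By linearity of expectation,
\[ \sum_{v \in X \cup Y} p_v = \Ex[N] \leq 2. \]

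Now, by definition, every $0.1$-frequent vertex $v$ satisfies $p_v \geq 0.1/z$. If there were more than $20z$ such vertices, the sum $\sum_{v} p_v$ would exceed $20z \cdot 0.1/z = 2$, contradicting the bound above. Hence the number of $0.1$-frequent vertices is at most $20 z = O(z)$, completing the proof. There is no real obstacle here; the argument is a direct averaging once we observe that each call to $\sscp$ can ``witness'' at most two vertices at once.
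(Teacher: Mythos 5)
Your proof is correct but takes a genuinely different and cleaner route than the paper's. The paper proceeds in two steps: it first shows that any vertex with $\deg_{\cL_{t+1}}(x) < 0.01 z$ has incidence probability below $0.01 z^{-1}$ (hence is not even $0.01$-frequent, let alone $0.1$-frequent), and then counts vertices of degree $\geq 0.01 z$ using the bound $|\cL_{t+1}| = O(z^2)$ from Lemma~\ref{lem:t is correct}. You instead observe directly that a single invocation of $\sscp$ outputs at most one pair and therefore witnesses at most two vertices, so by linearity of expectation $\sum_v p_v = \Ex[N] \leq 2$, after which a standard Markov-type counting step finishes the job. This is more elementary and also more robust: it completely bypasses the degree-to-frequency translation and does not require the bound $|\cL_{t+1}| = O(z^2)$ at all---the statement holds unconditionally from the structure of the oracle alone. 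The paper's argument yields the intermediate structural fact that there are few high-degree vertices in $\cL_{t+1}$, which is of some independent interest in the surrounding analysis, but for the purpose of this lemma your averaging argument is shorter, self-contained, and immune to the conditioning on Lemma~\ref{lem:t is correct}.
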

\begin{proof}
Let $x \in X$ (the case $x \in Y$ is symmetric).
First, we observe that if $\deg_{\cL_{t+1}}(x) < 0.01 \cdot z$, then $x$ is not $0.01$-frequent.
Indeed,
\begin{align*}
\Pr\left[\sscp(X, Y, 1/z) \text{ returns and edge from } \cL_{t+1} \text{ incident to } x \right] &\leq \\
\Pr[x \in X'] \cdot \Pr \left[  Y' \cap N_{\cL_{t+1}}(x) \neq \emptyset \right] =
\Pr[x \in X'] \cdot \left(1 - \left(1-\frac 1 z\right)^{|N_{\cL_{t+1}}(x)|} \right) &\leq \\
\Pr[x \in X'] \cdot \left(1 - \left(1-\frac 1 z\right)^{0.01 \cdot z} \right) &< 0.01 \cdot z^{-1}.
\end{align*}

By \Cref{lem:t is correct}, we have $|\cL_{t+1}| \leq |\cL_{t+2}| =O(z^2)$. Thus, a counting argument shows that there are at most $O(z)$ many $x \in X \cup Y$ with $\deg_{\cL_{t+1}}(x) \geq 0.01 \cdot z$.
\end{proof}

In the end, we prove \Cref{lem:from CP to all close pair retrieval}.
\begin{proof}[Proof of \Cref{lem:from CP to all close pair retrieval}]
Thanks to \Cref{lem:t is correct}, we have that $|\cL_{t}| = \tilde O (z^2) = \tilde O(n^{1+\phi})$ and $|L_{t+3}| = \tilde \Omega(z^2) = \tilde \Omega(n^{1+\phi})$.
Now, using the notation of \Cref{fig:findclosepairs}, we need to prove invariants (I) and (III), namely, that $L$ (resp. $H$) contains all the light (resp. heavy) edges in $\cL_t$.

Let $(x, y) \in \cL_t$ be a light edge.
By \Cref{lem:collecting light edges is likely}, the probability of collecting $(x, y)$ at line \ref{fig:collect light edges} is $\Omega(z^{-2})$, thus $\tilde O(z^2)$ iterations are sufficient to collect $(x, y)$ with high probability. Then, a union bound over all light edges shows that, with high probability, $L$ contains all light edges in $\cL_t$.

Since the $T = \tilde O(z^2)$ executions of line \ref{fig:collect light edges} are independent, standard concentration bounds ensure that for each $0.03$-frequent vertex $x$, $c(x) \geq0.02 \cdot T$ at line \ref{fig:if c of x}, and thus $x \in F$, with high probability. Likewise, with high probability, each vertex $x$ that is not $0.01$-frequent satisfies $c(x) < 0.02 \cdot T$ at line \ref{fig:if c of x} and thus $x \not \in F$. This proves invariant (II).
Condition on invariant (II).
By \Cref{lem:all heavy are frequent}, we have that all heavy vertices are $0.03$-frequent, and thus all heavy edges are incident to $F$. Since $\findclosepairs$ runs an exhaustive search over the the sets $(X \cap F) \times Y$ and $X \times (Y \cap F)$, then invariant (III) must hold.

Finally, we bound the running time of $\findclosepairs$.
We make $\tilde O(z^2)$ calls to $\sscp(X, Y, 1/z)$.
Thanks to \Cref{lem:sscp running time}, each of these calls takes time $\tilde O((n/z)^{2-\Omega(\phi)})$ with high probability. Thus, the combined time spent running $\sscp$ is 
\[
\tilde O(n^{1+\phi}) \cdot \left(n^{\frac{1}{2} - \frac \phi 2}\right)^{2-\Omega(\phi)}= \tilde O(n^{1+\phi}) \cdot n^{1 - (1 + \Omega(1)) \phi + \Theta(\phi^2)} = \tilde O(n^{2-\Omega(\phi)}).
\]
The for loop at line \ref{fig:for loop brute force search} takes time $O(|F| \cdot n)$. Combining \Cref{lem:frequent vertices are few} and invariant (II), we obtain $|F| = O(z)$, thus the for loop at line \ref{fig:for loop brute force search} takes $O(n^{3/2 + \phi / 2}) = \tilde O(n^{2 - \Omega(\phi)})$ time.

\end{proof}

\subsection{From All Close Pairs Retrieval to Sampling from $\lambda$, for Fixed $D$} \label{sec:from all close pairs retrieval to sampling from lambda}

The following theorem, is going to be employed in the remainder of this section.

\begin{theorem}[Theorem 4 in \cite{beretta2024better}] \label{lem:estimating normalization constant}
Let $\mu$ be a distribution supported over a size-$n$ set $X$. Let $w : X \rightarrow \bbR_+$ be a function such that $\Pr_{\bm y \sim \mu}[\bm y = x] \propto w(x)$.
Suppose that we can sample $\bm y \sim \mu$ and observe $w(\bm y)$.
Then, there exists an algorithm that takes $O(\sqrt{n} / \varepsilon)$ samples and estimates $W:= \sum_{x\in X} w(x)$ up to a multiplicative error $1\pm \varepsilon$. 
\end{theorem}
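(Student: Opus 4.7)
The plan rests on the identity
\[
\frac{1}{W} \;=\; \mathbb{E}_{\by_1, \by_2 \sim \mu}\!\left[\frac{\one[\by_1 = \by_2]}{w(\by_1)} \right],
\]
which follows by writing $\sum_x \mu(x)^2/w(x) = \sum_x (w(x)/W)^2 / w(x) = \sum_x w(x)/W^2 = 1/W$. Accordingly, the algorithm would draw $k = \Theta(\sqrt{n}/\varepsilon)$ i.i.d.\ samples $\by_1,\dots,\by_k \sim \mu$, form the $U$-statistic
\[
\hat U \;=\; \binom{k}{2}^{-1} \sum_{1 \le i < j \le k} \frac{\one[\by_i = \by_j]}{w(\by_i)},
\]
and output $\hat W := 1/\hat U$.

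The main technical step is bounding $\Var[\hat U]$. Let $Z_{ij} := \one[\by_i = \by_j]/w(\by_i)$. A direct computation gives $\mathbb{E}[Z_{12}^2] = \sum_x \mu(x)^2/w(x)^2 = n/W^2$, hence $\Var[Z_{12}] = (n-1)/W^2$. The crucial observation---and the heart of the proof---is that the cross-covariances for \emph{overlapping} pairs vanish: since $Z_{12} Z_{13} = \one[\by_1 = \by_2 = \by_3]/w(\by_1)^2$, we have
\[
\mathbb{E}[Z_{12} Z_{13}] \;=\; \sum_x \mu(x)^3 / w(x)^2 \;=\; \sum_x w(x)/W^3 \;=\; 1/W^2 \;=\; \mathbb{E}[Z_{12}]\,\mathbb{E}[Z_{13}],
\]
so $\mathrm{Cov}(Z_{12}, Z_{13}) = 0$; disjoint pairs are of course fully independent. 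Therefore only the diagonal terms contribute to $\Var[\hat U]$, giving the clean bound $\Var[\hat U] = (n-1)/(W^2 \binom{k}{2})$.

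For $k = \Theta(\sqrt{n}/\varepsilon)$ this yields $\Var[\hat U] = O(\varepsilon^2 / W^2)$, so Chebyshev's inequality gives $\hat U \in (1 \pm \varepsilon)/W$ with constant probability. Standard median-of-means amplification over $O(\log(1/\delta))$ independent repetitions boosts the success probability to $1 - \delta$, and inverting the estimate produces the desired $(1 \pm O(\varepsilon))$-approximation of $W$.

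The only real obstacle is the covariance cancellation above: without it, the shared-index pairs would inject an additional factor of $k$ into $\Var[\hat U]$, ruining the $\sqrt{n}$ dependence and giving only an $\Omega(n)$ sample bound. Everything else---drawing the samples, detecting collisions, and the median-of-means step---is routine.
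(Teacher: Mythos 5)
The paper does not prove this statement; it cites it verbatim as Theorem~4 of Beretta and T\v{e}tek, so there is no ``paper's own proof'' to compare against. Your blind reconstruction is correct and self-contained. The identity $1/W = \mathbb{E}_{\by_1,\by_2\sim\mu}[\one[\by_1=\by_2]/w(\by_1)]$ holds as you derive it, $\mathbb{E}[Z_{12}^2]=n/W^2$ follows from each of the $n$ support points contributing exactly $1/W^2$, and the crucial computation $\mathbb{E}[Z_{12}Z_{13}] = \sum_x \mu(x)^3/w(x)^2 = 1/W^2 = \mathbb{E}[Z_{12}]\mathbb{E}[Z_{13}]$ is right, so overlapping pairs contribute no covariance and only the $\binom{k}{2}$ diagonal terms survive, yielding $\Var[\hat U] = (n-1)/(W^2\binom{k}{2})$ and hence $k=\Theta(\sqrt{n}/\varepsilon)$ via Chebyshev. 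This is essentially the collision/U-statistic argument one would expect behind the cited result, and the exact cancellation is indeed what rescues the $\varepsilon^{-1}$ dependence; older analyses of proportional-sampling sum estimators that did not exploit it incur worse $\varepsilon$ dependence.

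Two small points worth tightening. First, $\hat U$ can be zero (or merely far from $1/W$) on the Chebyshev failure event, so the algorithm should be stated to output an arbitrary value in that case; it does not affect the constant-probability guarantee but the inversion step $\hat W = 1/\hat U$ is otherwise ill-defined. Second, the amplification step you call ``median-of-means'' is just taking the median of $O(\log(1/\delta))$ independent copies of $\hat U$ (the means are already built into $\hat U$), and it multiplies the sample complexity by $O(\log(1/\delta))$, which is consistent with the $O(\sqrt n/\varepsilon)$ bound for constant success probability as stated. Your closing remark that a nonzero cross-covariance ``would give only an $\Omega(n)$ sample bound'' is slightly overstated --- the Cauchy--Schwarz worst case $|\mathrm{Cov}(Z_{12},Z_{13})|\le (n-1)/W^2$ would indeed force $k=\Omega(n/\varepsilon)$, but a priori the covariance could be intermediate; still, the exact cancellation you prove makes the point moot.
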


\begin{lemma}[Sampling from $\lambda$ for Fixed $D$]
\label{lem:from all close pairs retrieval to sampling from lambda}
Fix any $\phi, \eta \in (0,1/2)$, $n \geq 1$, and $X, Y \subseteq (\bbR^d, \ell_1)$.
Let $S \subseteq X \times Y$ be a set of size $n^{2- \rho \cdot \phi}$ for some constant $\rho \in (0, 1/2)$.
Let $C = C(S) \in \bbR^{n \times n}$ be a rounded distance matrix defined as in \Cref{def:rounded-distances}.
Fix an arbitrary matrix $P \in \{\pm 1\}^{n\times n}$ and let $D \in \bbR^{n \times n}$ be a matrix of all equal entries. 
Define the distribution $\lambda = \lambda(\eta, C, D, P)$ as in \Cref{def:disc-dist}. 

Suppose that we are given $L_t(X,Y)$ such that $|\mathcal L_t(X,Y)| =  \tilde O(n^{1 + \phi})$ and $|L_{t+3}(X,Y)| = \tilde \Omega (n^{1+\phi})$.
Furthermore, suppose that we are given $S$, and that $S$ $\tau$-shatters the partition $\{L_\ell\}_{\ell\geq 0}$ of $X \times Y$ (\Cref{def:shattering function}), for $\tau = n^{1+\phi/2}$.
Then, there exists an algorithm $\constsampler$ that, with probability at least $1-1/\poly(n)$,
generates $n$ samples from $\lambda$, in time $\tilde{O}(n^{2-\Omega(\phi)})$.
\end{lemma}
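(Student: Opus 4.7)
Since $D \equiv D_0$ is constant, setting $K := \eta D_0$ the weights simplify to $w_{i,j,\sigma} = \exp(\sigma P_{ij} K / C_{ij})$. The plan is to partition $[n] \times [n] = \mathcal{E} \sqcup \mathcal{I}$ into an \emph{explicit} part handled by enumeration and an \emph{implicit} part handled by rejection sampling; a draw from $\lambda$ is then produced in two stages, by first choosing a side proportionally to its total weight and then sampling within. Concretely, I take $\mathcal{E} := \mathcal{L}_t \cup (S \cap \mathcal{L}_{t+2})$ and $\mathcal{I} := ([n] \times [n]) \setminus \mathcal{E}$.

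For the explicit part, I enumerate $\mathcal{E}$ by iterating over $S$ and retaining each $(i,j)$ whose distance $\|x_i - y_j\|_1$ is at most $(1+\varepsilon)^{t+2}$; this costs $\tilde O(|\mathcal{L}_t| + |S| \cdot d)$ and, because $|S| = n^{2 - \rho \phi}$, gives $|\mathcal{E}| = \tilde O(n^{2 - \Omega(\phi)})$. For every pair in $\mathcal{E}$ the exponent $\psi_{ij}$ (hence $C_{ij}$) is known, so I compute each $w_{i,j,\pm 1}$ in $O(1)$ time, sum to obtain $W_{\mathcal{E}}$, and build an alias table for $O(1)$-time sampling from $\lambda|_{\mathcal{E}}$.

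The partition is designed so that every $(i,j) \in \mathcal{I}$ satisfies $C_{ij} \geq (1+\varepsilon)^t$: either $(i,j) \notin S$ with $\psi_{ij} \geq t$, giving $C_{ij} = (1+\varepsilon)^{\psi_{ij}} \geq (1+\varepsilon)^t$, or $(i,j) \in S$ with $\psi_{ij} \geq t + 2$, giving $C_{ij} = (1+\varepsilon)^{\psi_{ij} - 2} \geq (1+\varepsilon)^t$. Setting $M := \exp(K / (1+\varepsilon)^t)$ therefore upper-bounds every weight on $\mathcal{I}$. I sample from $\lambda|_{\mathcal{I}}$ by drawing $(i,j,\sigma)$ uniformly from $\mathcal{I} \times \{\pm 1\}$, computing $C_{ij}$ in $O(d)$ time (via a hash lookup on $S$ and one distance evaluation), and accepting with probability $w_{i,j,\sigma}/M \in [0,1]$.

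The crux of the argument, and the main obstacle, is lower-bounding the acceptance rate $W_{\mathcal{I}}/(2 |\mathcal{I}| M)$. This is where the shattering hypothesis on the guaranteed-large level set $L_{t+3}$ enters: since $|L_{t+3}| \geq \tilde\Omega(n^{1+\phi}) \geq \tau$, Definition~\ref{def:shattering function} gives $|L_{t+3} \cap S| \geq 0.9 \cdot (|S|/n^2) \cdot |L_{t+3}| = \tilde\Omega(n^{1 + \phi(1-\rho)})$. Every pair in $L_{t+3} \cap S$ lies in $\mathcal{I}$ (as $L_{t+3} \cap \mathcal{L}_{t+2} = \emptyset$) and achieves the maximal weight $M$ on its positive-sign coordinate, because $\psi_{ij} = t+2$ combined with $S_{ij} = 1$ yields $C_{ij} = (1+\varepsilon)^t$. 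Hence $W_{\mathcal{I}} \geq \tilde\Omega(n^{1 + \phi(1-\rho)}) \cdot M$ and the acceptance rate is $\tilde\Omega(n^{-1 + \phi(1-\rho)})$; since $\rho < 1/2$ is constant, $\phi(1-\rho) = \Omega(\phi)$, so generating the $n$ samples costs $\tilde O(n^{2-\Omega(\phi)})$ total. Finally, $W_{\mathcal{I}}$ is estimated to a $(1 \pm 1/\mathrm{polylog}(n))$ factor by \Cref{lem:estimating normalization constant} using $\tilde O(n)$ samples from $\lambda|_{\mathcal{I}}$, and each returned draw from $\lambda$ is obtained by flipping a biased coin between $\mathcal{E}$ and $\mathcal{I}$ with probabilities proportional to $W_{\mathcal{E}}$ and the estimated $W_{\mathcal{I}}$. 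The key design decision, which is what makes the whole scheme go through, is setting $\mathcal{E} = \mathcal{L}_t \cup (S \cap \mathcal{L}_{t+2})$: this choice forces the smallest $C_{ij}$ on $\mathcal{I}$ to equal $(1+\varepsilon)^t$ and is attained on $L_{t+3} \cap S$, whose size is the one that the shattering hypothesis actually controls.
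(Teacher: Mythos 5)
Your decomposition and acceptance-rate argument match the paper's proof closely: you use the same set $\mathcal{E} = \mathcal{L}_t \cup (S \cap \mathcal{L}_{t+2})$ (the paper writes $\mathcal{L}_t \cup (S \cap (L_{t+1} \cup L_{t+2}))$, which is identical), the same lower bound $C_{ij} \geq (1+\varepsilon)^t$ on $\mathcal{I}$, the same rejection sampler with cap $M = \exp(K/(1+\varepsilon)^t)$, and the same shattering argument to show $|S \cap L_{t+3}| = \tilde\Omega(n^{1+\Omega(\phi)})$ and that these pairs attain the cap.

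There is, however, a genuine gap at the end. The lemma promises samples drawn \emph{exactly} from $\lambda$; the TV-error budget is reserved for later (in \Cref{lem:from constant D to arbitrary D} and \Cref{lem:sep}), and \Cref{lem:from constant D to arbitrary D} explicitly calls $\constsampler$ and assumes its output is correct, i.e.\ it is an exact sampler. Your scheme estimates $W_{\mathcal{I}}$ to relative error $1/\mathrm{polylog}(n)$ and uses that estimate directly to bias the coin between $\mathcal{E}$ and $\mathcal{I}$; this outputs a distribution at TV distance $\Theta(1/\mathrm{polylog}(n))$ from $\lambda$, not $\lambda$ itself. The paper instead takes a constant-factor estimate $\widehat{w}_{\overline{\mathcal{T}}} = (1\pm 1/2)\,w_{\overline{\mathcal{T}}}$ (cheaper) and then does one more rejection pass: since the algorithm can compute exactly the probability $\hat\lambda_{i,j}$ with which it proposed each pair, it accepts $(i,j)$ with probability $\frac{1}{\hat\lambda_{i,j}} \cdot \frac{w_{i,j}}{4(w_{\mathcal{T}} + \widehat{w}_{\overline{\mathcal{T}}})}$, which is a valid probability by the estimate guarantee and yields acceptance probability exactly proportional to $w_{i,j}$. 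This standard correction step is what turns your approximate sampler into the exact one the lemma requires; without it, or without explicitly reformulating the lemma to allow TV error and then re-tracking that error through the callers, your argument proves a weaker statement than claimed.
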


\begin{proof}
Let $\kappa = D_{i,j}$ be the fixed value of all entries of $D$. Then recall that $\lambda_{i,j,\sigma} \propto w_{i,j,\sigma}$, where 
\[ w_{i,j,\sigma} =  \exp\left (\eta \cdot \kappa \cdot \sigma\cdot\frac{   P_{ij} }{C_{ij}} \right). \]

Where recall that $C_{i,j} = (1+\eps)^{\psi_{i,j} - 2S_{i,j}}$ where $\psi_{i,j}$ is the integer satisfying $(1+\eps)^{\psi_{i,j}} \leq \|x_i - y_j\|_1 \leq (1+\eps)^{\psi_{i,j}+1}$.
To sample $(i,j,\sigma)$ with probability proportional to $w_{i,j,\sigma}$, it suffices to first sample $(i,j)$ with probability proportional to 

\[ w_{i,j} =  \exp\left (\eta \cdot |\kappa| \cdot\frac{ 1}{C_{ij}} \right). \]

And then sample $\sigma \in \{1,-1\}$ with probability exactly $\frac{w_{i,j,\sigma}}{2 \cdot w_{i,j}}$, and reject if neither $\sigma = 1$ or $\sigma = -1$ is sampled. If sampled, it is clear that the resulting $(i,j,\sigma)$ is drawn from $\lambda$. Finally, note that for exactly one value of $\sigma \in \{-1,1\}$ we have $w_{i,j} = w_{i,j,\sigma}$, and therefore the probability that a sample is rejected is at most $1/2$.
In what follows, let $\lambda_{i,j}$ be the distribution where $\pr{(i,j)} \propto w_{i,j}$.

To sample proportionally to $w_{i,j}$, first, the algorithm exactly computes the values $w_{i,j}$ for all $(i,j) \in \mathcal{T} := \calL_t \cup \left(S \cap \left(L_{t+1} \cup L_{t+2} \right) \right)$, and the normalization factor $w_\calT = \sum_{(i,j) \in \calT} w_{i,j} $. For a set $T\subset [n] \times [n]$, let $\overline{T}$ be its complement, and let  $\lambda|_{T}$ denote the distribution of $\lambda$ conditioned on $T$. We show how to sample from both $\lambda|_{\calT}$ and $\lambda|_{\overline{\calT}}$. 

First, for any $(i,j) \in \overline{\calT}$, we know that $C_{i,j} \geq (1+\eps)^t$, since (1) we know $(i,j) \in L_{t'}$ for some $t' > t$ and (2) $(i,j) \notin S$ unless $t' \geq t+3$ by definition. Set $w_{\max} = \exp(\eta \cdot |\kappa| \cdot (1+\eps)^{-t})$. Then $w_{i,j} \leq w_{\max}$ for all $(i,j) \in \overline{\calT}$. Thus, to sample from $\lambda|_{\overline{\calT}}$, we can perform the following steps: \textbf{(1)} sample $(i,j) \sim [n] \times [n]$ uniformly, \textbf{(2)} reject $(i,j)$ if $(i,j) \notin \overline{\calT}$, \textbf{(3)} keep the sample with probability $\frac{w_{i,j}}{w_{\max}}$, otherwise reject the sample. Clearly the distribution is correct, thus it remains to analyze the failure probability. First note that $|\overline{\calT}| \geq n^2 - O(n^{2- \rho \cdot \phi} + n^{1+ \phi}) \geq n^2/2$, thus the probability of rejection in step \textbf{(2)} is at most $1/2$. Next, note that by definition of $\tau$-shattering, noting that $|L_{t+3}| \geq \tau$, we have that 
\[ \frac{|S \cap L_{t+3}|}{|S|} = \left(1 \pm \frac{1}{10}\right) \frac{|L_{t+3}|}{n^2} = \tilde{\Omega}(n^{-1 + \phi} )  \]
Thus $|S \cap L_{t+3}| \geq n^{1+\phi/2}$, and note that $w_{i,j} = w_{\max}$ for all $(i,j) \in S \cap L_{t+3}$. Thus, whenever $(i,j)$ is sampled from $S \cap L_{t+3}$, we do not reject in step \textbf{(3)} in the above sampling procedure. Moreover, we sample such an $(i,j)$ on step \textbf{(1)} with probability at least $\frac{|S \cap L_{t+3}|}{n^2} \geq \frac{1}{n^{1-\phi/2}}$. By Chernoff bounds (applied to the event that we successfully sample a $(i,j)$ or reject it), with probability $1-1/\poly(n)$ it follows that we can compute $n$ samples from $\lambda|_{\overline{\calT}}$ in time $\tilde O(n^{2-\Omega(\phi)})$, as needed.

Finally, note that after $O(|S| + |\cL_t|) = n^{2-\Omega(\phi)}$ pre-processing time to compute $w_{i,j}$ for all $(i,j) \in \calT$ as well as $w_\calT$, we can sample from $\lambda|_{\calT}$ in constant time. It remains to decide with what probability to sample from $\lambda|_{\calT}$ versus $\lambda|_{\overline{\calT}}$. Using Lemma \Cref{lem:estimating normalization constant}, we can compute a value $\hat{w_{\overline{\calT}}}$ such that $\hat{w_{\overline{\calT}}}  = (1 \pm 1/2)w_{\overline{\calT}} $ where  $w_{\overline{\calT}} = \sum_{(i,j) \in \overline{\calT}} w_{i,j}$. To sample from $\lambda$, we first chose to sample from $\calT$ or $\overline{\calT}$ with probability $\frac{w_{\calT}}{w_{\calT} + \hat{w_{\overline{\calT}}}}$ or $\frac{\hat{w_{\overline{\calT}}}}{w_{\calT} + \hat{w_{\overline{\calT}}}}$  respectively. Conditioned on the procedure not rejecting a sample, our procedure samples a pair $(i,j)$ with probability $\hat{\lambda_{i,j}}$ where $\hat{\lambda_{i,j}} = (1 \pm 1/2) \lambda_{i,j}$, since the only source of error is in our approximation $\hat{w_{\overline{\calT}}}$ of $w_{\overline{\calT}}$. 

We now show how to sample exactly from the distribution $\lambda_{i,j}$. To see this, note that whenever we sample a pair $(i,j)$ from $\hat{\lambda}$, we can compute exactly the probability $\hat{\lambda_{i,j}}$ with which we sampled it; this is clear in the case $(i,j) \in \calT$, and otherwise the probability is precisely $\frac{1}{|\overline{\calT}|} \cdot \frac{w_{i,j}}{w_{\max}}$, and $|\overline{\calT}| = n^2 - |\calT|$ can be computed exactly since $\calT$ is computed exactly.
Thus, to sample exactly from $\lambda$, we first sample $(i,j)$ from $\hat{\lambda}$, and reject it with probability $1-\frac{1}{\hat{\lambda_{i,j}}} \cdot \frac{w_{i,j}}{4(w_{\calT} + \widehat 
{w_{\overline{\calT}}})}$. First note that this is a valid probability, because $\frac{w_{i,j}}{4(w_{\calT} + \widehat 
{w_{\overline{\calT}}})} < \lambda_{i,j} / 2 \leq \hat{\lambda_{i,j}}$. Overall, the probability we sample and keep $(i,j)$ is
\[       \hat{\lambda_{i,j}} \cdot  \frac{1}{\hat{\lambda_{i,j}}} \cdot \frac{w_{i,j}}{4( w_{\calT} + \widehat 
{w_{\overline{\calT}}})}  =\frac{w_{i,j}}{4 (w_{\calT} + \widehat 
{w_{\overline{\calT}}})} \]
which is proportional to $w_{i,j}$. 
Thus conditioned on keeping the pair, we sample from precisely the correct distribution. Finally, note that $\frac{w_{i,j}}{4 (w_{\calT} + w_{\overline{\calT}})}> \frac{1}{8}\lambda_{i,j} > \frac{1}{16} \hat{\lambda_{i,j}}$, thus we reject with probability at most $15/16$, which can be corrected for by oversampling by a $O(1)$ factor, completing the proof.

\end{proof}

\subsection{From Fixed $D$ to Arbitrary $D$} \label{sec:from monotone sampling to lambda}

\begin{lemma} \label{lem:from constant D to arbitrary D}
Fix any $\eta, \eps, \phi \in (0, 1/2)$, $\chi \geq 1 / \poly(\varepsilon^{-1}, \log(n\cdot \Phi))$, 
$(\balpha, \bbeta) \in \bbZ^{2n}$ with $||\balpha||_\infty, ||\bbeta||_\infty \leq \poly(n  \Phi)$,
and, $C \in \bbR^{n \times n}$ be a distance matrix.
Define $P \in \{\pm 1\}^{n \times n}$ and $D \in \bbR^{n \times n}$ as in \Cref{def:rounded-duals}.
Define $\lambda \in \calD(\eta, C, D, P)$ as in \Cref{def:disc-dist}.

Suppose that there exists a randomized algorithm $\constsampler$ that returns $n$ samples from $\lambda$ and runs in time $\tilde O(n^{2-\Omega(\phi)})$, assuming that $D$ has all-equal entries.
Then, for each constant $z >0$, there exists an algorithm $\arbitsample$ that returns $n \cdot \log^z(n)$ samples from a distribution $\lambda'$, with 
$|\lambda' - \lambda|_\tv = o\left( \frac{\varepsilon^2}{\log n \cdot \log^2 \Phi}\right)$
and runs in time $\tilde O(n^{2-\Omega(\phi)} \cdot \poly(\eps^{-1}))$, for any $D$ defined in \Cref{def:rounded-duals}. 

Moreover, $\arbitsample$ calls $\constsampler$ on sets $\{X_i \times Y_i\}_{i \in [k]}$, were $k = O(n^{3/2})$, $|X_i| = |Y_i| = \sqrt[4]{n}$ and $\{X_i \times Y_i\}_{i \in [k]}$ are pairwise disjoint. If $\constsampler$ is correct on all such calls, then $\arbitsample$ is correct, with high probability.
\end{lemma}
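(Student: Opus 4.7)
The starting structural observation is that, after permuting indices so that $\balpha_1 \le \cdots \le \balpha_n$ and $\bbeta_1 \le \cdots \le \bbeta_n$, both the sign pattern $P_{ij} = \mathrm{sign}(\balpha_i - \bbeta_j)$ and the rounded magnitude $D_{ij}$ have monotone-staircase level sets on $[n]\times[n]$: for fixed $j$, $\balpha_i - \bbeta_j$ is non-decreasing in $i$, and symmetrically for fixed $i$. Since $\|\balpha\|_\infty, \|\bbeta\|_\infty = \poly(n\Phi)$ and $\chi \ge 1/\poly(\varepsilon^{-1}, \log(n\Phi))$, the matrix $D$ takes at most $R = \poly(\log(n\Phi), \varepsilon^{-1})$ distinct values, so $[n]\times[n]$ partitions into $O(R)$ disjoint ``slabs'' $Q_{h,\sigma} = \{(i,j) : D_{ij} = (1+\chi)^h,\ P_{ij} = \sigma\}$, each sandwiched between at most two monotone staircases. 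Fix $s = n^{1/4}$ and overlay an axis-aligned $s \times s$ grid on $[n]\times[n]$. For each slab, collect the $s\times s$ cells lying entirely inside it as our combinatorial rectangles, and mark the remaining cells (those crossed by a boundary staircase) as ``uncovered.'' A monotone staircase crosses at most $O(n/s)$ cells of the $s \times s$ grid, so the total number of uncovered pairs is $\tilde O(R \cdot (n/s) \cdot s^2) = \tilde O(n^{5/4})$ and the total number of rectangles is $k = O(n^{3/2})$, as claimed. On every such rectangle both $D$ and $P$ are constant, so $\constsampler$ applies as a black box.

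\paragraph{Sampling algorithm.}
For each rectangle $X_i \times Y_i$, invoke \Cref{lem:estimating normalization constant} with $\tilde O(s)$ samples from $\constsampler$ to obtain a $(1\pm\chi')$-estimate $\widetilde W_i$ of the total rectangle weight $W_i = \sum_{(i',j') \in X_i \times Y_i}\sum_\sigma w_{i', j', \sigma}$, where $\chi' = o\bigl(\varepsilon^2/\log^3(n\Phi)\bigr)$ is polylogarithmic. Compute the total weight $W_U$ on the uncovered pairs exactly in $\tilde O(n^{5/4})$ time. To produce the $n \log^z(n)$ target samples, first draw a multinomial over $\{\widetilde W_i\}_{i=1}^k \cup \{W_U\}$ to decide how many samples $k_i$ (resp.\ $k_U$) to draw from each rectangle and from the uncovered set. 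Then call $\constsampler$ on each rectangle enough times to fulfill its allocation (each call returns $s$ samples at cost $\tilde O(s^{2-\Omega(\phi)})$), and draw the uncovered share via direct weighted sampling. Since
\[
\sum_{i=1}^k \left\lceil k_i/s \right\rceil \le \frac{n \log^z(n)}{s} + O(n^{3/2}),
\]
the total running time (including the $\tilde O(n^{3/2})$ preprocessing calls used for the weight estimates) is bounded by
\[
\tilde O\!\left(s^{2-\Omega(\phi)} \cdot \left(n \log^z(n)/s + n^{3/2}\right)\right) = \tilde O\!\left(n^{2-\Omega(\phi)}\right).
\]

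\paragraph{Error analysis and main obstacle.}
Because $D$ and $P$ are constant on each rectangle, samples returned by $\constsampler$ are distributed exactly according to $\lambda$ restricted to that rectangle whenever $\constsampler$ is correct, which is guaranteed by hypothesis. The only source of deviation between $\lambda$ and the realized output distribution $\lambda'$ is the multiplicative $(1\pm\chi')$ approximation of the mixture weights $\widetilde W_i$, which translates into $\|\lambda - \lambda'\|_{\tv} = O(\chi')$; choosing $\chi'$ small enough delivers the required TV bound $o\bigl(\varepsilon^2/(\log n \cdot \log^2\Phi)\bigr)$ at only polylogarithmic cost in runtime. The main technical obstacle is the monotone-tiling analysis: one must verify that each slab $Q_{h,\sigma}$ is genuinely sandwiched between two monotone staircases in the sorted coordinates, and that these staircases cross only $O(n/s)$ cells of the $s \times s$ grid (which follows because a monotone path enters at most one cell per row and one per column of the grid tiling). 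Summed over the $R$ slabs, this simultaneously yields both the $\tilde O(n^{5/4})$ bound on uncovered pairs and the $k = O(n^{3/2})$ bound on rectangles required by the lemma.
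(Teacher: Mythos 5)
Your proposal is correct and follows essentially the same strategy as the paper: sort the dual variables, observe the level sets of $D$ are bounded by monotone staircases, tile the bulk of $[n]\times[n]$ with $n^{1/4}\times n^{1/4}$ combinatorial rectangles on which $D$ is constant (so $\constsampler$ applies), handle the leftover pairs explicitly, estimate each rectangle's normalization constant via the importance-sampling lemma, and then do two-stage sampling; the only source of TV error is the multiplicative approximation of rectangle weights.

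A few small remarks on how your write-up differs from the paper. You impose a fixed $n^{1/4}\times n^{1/4}$ grid up front and discard grid cells crossed by a staircase, giving $\tilde O(n^{5/4})$ uncovered pairs; the paper instead first coarsely covers each slab with $\sqrt{n}$-wide column blocks (computing boundary intervals $a(\cdot),b(\cdot)$ at block endpoints) and only then refines to $n^{1/4}\times n^{1/4}$ rectangles, accumulating $O(n^{7/4})$ leftover pairs. Both are $n^{2-\Omega(1)}$ so both work, but your tiling bound is tighter and your per-cell membership test (checking two opposite corners of a cell against the monotone staircase) is worth stating explicitly. Second, your choice of slabs $Q_{h,\sigma}$ keyed on both the rounded magnitude and the sign $\sigma = P_{ij}$ is actually slightly more careful than the paper's $Q_s$, which is keyed only on $D_{ij}$: since $|\balpha_i-\bbeta_j|$ is V-shaped in $j$ for fixed $i$, the set $Q_s\cap(\{i\}\times[n])$ is a union of at most two intervals rather than one, so the paper's $[a(i),b(i)]$ interval claim really needs the sign split that you build in. This only multiplies the number of slabs by $2$, so it is immaterial for the bounds, but your version is cleaner. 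Finally, note that $\constsampler$ only requires $D$ constant on its input rectangle (it takes $P$ as a full matrix), so grouping by sign is for the interval-structure argument, not for $\constsampler$'s correctness — which you use correctly.
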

\begin{proof}
Our proof strategy is as follows: we partition the triples $i,j,\sigma$ 
into groups where the entries of $D$ are constant, 
then first sample one of the groups and finally resort to
$\constsampler$ to sample a pair within the group.

\Cref{def:rounded-duals} and $||\balpha||_\infty, ||\bbeta||_\infty \leq \poly(n \Phi)$ imply that the entries of $D$ belong to $\{ 0 \} \cup \{ (1+\chi)^s :\, s \in \bbZ \cap [-T, T]\}$ for $T = O(\chi^{-1} \cdot \log (n \Phi))$. 
Define $Q_s := \{(i, j) \in [n]^2 \cond D_{ij} = (1+\chi)^s\}$, $Q_{\circ} := \{(i, j) \in [n]^2 \cond D_{ij} = 0\}$ and notice that $Q_{\circ} \cup \bigcup_{s \in [-T, T]} Q_s$ is a partition of $[n]^2$.
Our strategy is to construct a partition of $[n]\times [n]$ given by $E \cup \bigcup_{\ell \in [L]} I_\ell \times J_{\ell}$ satisfiying the following desiderata:
\begin{enumerate}[(i)]
    \item $|I_\ell| = |J_\ell| = \sqrt[4]{n}$'
    \item $I_\ell \times J_\ell \subseteq Q_s$ for some $s \in \{\circ\} \cup [-T, T]$.
    \item $|E| = O(n^{7/4})$.  
\end{enumerate}
We construct such partition processing one $Q_s$ at a time. Then, our sampling strategy will first sample a set $S \in \{E\} \cup \{I_\ell \times J_\ell\}_{\ell \in [L]}$ and then sample a triplet $(i, j, \sigma) \in S \times \{\pm 1\}$ via $\constsampler$. 

\paragraph{Computing our partition.}
Here we describe the algorithm that constructs the desired partition. Set $\ell \leftarrow 0$ and $E \leftarrow \emptyset$.
We start from $Q_\circ$.
For each $x \in \{\balpha_i\}_{i \in [n]} \cup \{\bbeta_i\}_{i \in [n]}$, let $A_x = \{i \in [n] \,|\, \balpha_i = x\}$ and $B_x = \{i \in [n] \,|\, \bbeta_i = x\}$.
It is straightforward to implicitly compute the decomposition 
\[
Q_\circ = \bigcup_{x \in \{\balpha_i\}_i \cup \{\bbeta_i\}_i} A_x \times B_x
\]
in near-linear time by sorting $\balpha$ and $\bbeta$.

For each $x \in \{\balpha_i\}_1 \cup \{\bbeta_i\}_1$, if $|A_x|, |B_x| \geq \sqrt{n}$, then define $I_\ell \leftarrow A_x$, $J_\ell \leftarrow B_x$, and update $\ell \leftarrow \ell + 1$. 
Else, update $E \leftarrow E \cup A_x \times B_x$.
      
For each $s \in [-T, T]$ we do the following.
Sort $\balpha_i$ so that $i \mapsto \balpha_i$ is increasing, and do the same for $\bbeta_j$.
For all $i \in [n]$, the set $Q_s \cap \{i\} \times [n]$ is an interval and we denote it with $[a(i), b(i)]$.
Let $q_j = j \cdot \sqrt{n}$ for $j \in [\sqrt
n]$. Then, we execute the loop in \Cref{fig:paritioning Qs}. 
      
\begin{figure}[H]
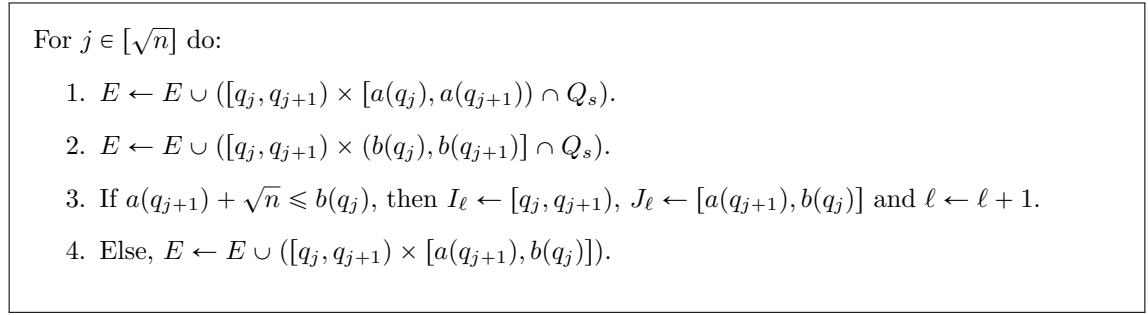

\begin{framed}
 For $j \in [\sqrt n]$ do:
      \begin{enumerate}
          \item $E \leftarrow E \cup \left([q_j, q_{j+1}) \times [a(q_j), a(q_{j+1})) \cap Q_s\right)$.
          \item $E \leftarrow E \cup \left([q_j, q_{j+1}) \times (b(q_j), b(q_{j+1})] \cap Q_s\right)$.
          \item If $a(q_{j+1}) + \sqrt{n} \leq b(q_j)$, then $I_\ell \leftarrow [q_j, q_{j+1})$, $J_\ell \leftarrow [a(q_{j+1}), b(q_j)]$ and $\ell \leftarrow \ell + 1$.
          \item Else, $E \leftarrow E \cup \left([q_j, q_{j+1}) \times [a(q_{j+1}), b(q_{j})]\right)$.
      \end{enumerate}
\end{framed}
\caption{Partitioning $Q_s$ .}\label{fig:paritioning Qs}
\end{figure}

      We start by proving that our algorithm outputs a partition.
      First, we observe that for each $s \in \{\circ\} \cup [-T, T]$ our algorithm partitions the elements of $Q_s$ between $E$ and several sets $I_\ell \times J_\ell$.
      Indeed, for $s = 0$ this is trivial. For $s \in [-T, T]$, we notice that all elements of $Q_s$ are included because $a(q_j) \leq a(q_{j+1}), b(q_j) \leq b(q_{j+1})$ and $Q_s \cap [q_j, q_{j+1}) \times [n] \subseteq [q_j, q_{j+1}) \times [a(q_j), b(q_{j+1})]$. 
      Moreover, no element is included twice because all intervals are disjoint by definition and $[q_j, q_{j+1}) \times [a(q_{j+1}), b(q_j)] \subseteq Q_s$.
      
      Next, we prove that the running time does not exceed $O(n^{3/2} \cdot T)$. For $s = \circ$ the running time is $\tilde O(n)$ as sorting suffices. For generic $s \in [-T, T]$ we prove that our algorithm runs in time $O(n^{3/2})$.
      For each $i$, computing $a(i)$ and $b(i)$ can be done in $\log n$ time through binary search. Computing the intersection $[q_j, q_{j+1}) \times [a(q_j), a(q_{j+1})) \cap Q_s$ can be done in time $\sqrt{n}\cdot |a(q_j) - a(q_{j+1})|$, which summing over $j$ gives $\sqrt{n} \cdot \sum_{j\in [\sqrt{n}]} |a(q_j) - a(q_{j+1})| = n^{3/2}$.
      Likewise, we can bound the total time used to compute the intersection $[q_j, q_{j+1}) \times (b(q_j), b(q_{j+1})) \cap Q_s$.
      
      Now, we prove that our partition satisfies the desiderata. First, we prove that $|E| = O(n^{7/4})$. 
      For $s=\circ$, we have that all sets added to $E$ are of the form $A_x \times B_x$ where either $|A_x| \le \sqrt{n}$ or $|B_x| \le \sqrt{n}$. Since all $A_x$ and $B_x$ are disjoint, we have $2n^{3/2} \ge |E \cap Q_0|$.
      For generic $s \in [-T, T]$, the proof follows the same computation that we performed to bound the running time. In steps (1) and (2) we increase $|E|$ by at most $\sqrt{n}\cdot |a(q_j) - a(q_{j+1})|$ and $\sqrt{n}\cdot |b(q_j) - b(q_{j+1})|$ respectively, and summing over $j$, we increase $|E|$ by at most $O(n^{3/2})$.
      In step (4), we are promised that $b(q_j) - a(q_{j+1}) < \sqrt n$, so we add at most $|q_{j+1} - q_j| \cdot \sqrt{n} = n$ elements to $E$. Summing over all $j \in [\sqrt{n}]$ we obtain $n^{3/2}$.      
      
      Next, we refine the sets $I_\ell$ and $J_\ell$ into smaller sets to make sure that each smaller set of pairs that we create has size $\sqrt[4]{n} \times \sqrt[4]{n}$. Notice that we have $|I_\ell|, |J_\ell| \geq \sqrt{n}$ for each $\ell$.
      Indeed, the above holds trivially for $s=\circ$, and for generic $s \in [-T, T]$ it holds because of the guard condition at step (3).
      Then, if $\sqrt[4]{n}$ divides $|J_\ell|$ it suffices to split $I_\ell$ (resp. $J_\ell$) into size-$\sqrt[4]{n}$ chunks $I_{\ell, 1} \dots I_{\ell, t}$ (resp. $J_{\ell, 1} \dots J_{\ell, t'}$) and consider all pairwise products $I_{\ell, a} \times J_{\ell, b}$ for $a \in [t]$, $b\in [t']$.
      If $\sqrt[4]{n}$ does \emph{not} divide $J_\ell$, then we have one leftover chunk $\tilde J_\ell$ of size at most $\sqrt[4]{n}$, an we add $I_\ell \times \tilde J_\ell$ to $E$.
      Since $|\tilde J_\ell| / |J_\ell| \leq 1 / \sqrt[4]{n}$, the  size of $E$ increases by at most $n^{7/4}$.
      Finally, desiderata (iii) is clearly satisfied since we construct $I_\ell \times J_\ell$ as a refinement of some $Q_s$.
      Thus, we satisfy all desiderata (i)--(iii).

    \paragraph{Two-step sampling.}
        Fix $\delta = \frac{\varepsilon^2}{\log^2 n \cdot \log^2 \Phi}$. Recall how weights are defined in \Cref{def:disc-dist}:
        \[ w_{i,j,\sigma} =  \exp(\eta \cdot (\sigma \cdot P_{ij}) \cdot D_{ij} / C_{ij}), \]
    so, whenever we sample $(i, j, \sigma)$ we can compute its weight, as we have access to $C$ and, thorough $(\balpha, \bbeta)$, also to $P$ and $D$.
      Thus, for each set $S \in \{I_\ell \times J_\ell\}_{\ell \in [L]}$, one can compute an approximation $\tilde{v}(S)$ of the volume $v(S): = \sum_{(i, j) \in S, \sigma \in \{\pm 1\}} w_{i,j,\sigma}$ using
      Theorem~\ref{lem:estimating normalization constant} so that $\tilde{v}(S) = v(S)(1 \pm\delta) $ in time
      $O(\sqrt[4]{n}/\delta) = \tilde O(\sqrt[4]{n} \cdot \poly(\eps^{-1}))$. 
      Since $L \leq n^2 / (\sqrt[4]{n} \cdot \sqrt[4]{n}) = n^{3/2}$, this can be done in total 
      time $\tilde O(n^{7/4} \cdot \poly(\eps^{-1}) )$.
      As for $E$, we can compute $v(E)$ exactly in time $|E| = O(n^{7/4})$.

      Our final sampling procedure first sample a part $S$ of the partition proportionally to $\tilde{v}(S) / \sum_{S'} \tilde{v}(S')$ and then uses $\constsampler$ to sample a $(i, j, \sigma) \in S$.
      The TV distance between this sampling 
      distribution and the distribution 
      $\lambda$ only comes from the 
      approximation of $v(S)$. Observe 
      that, assuming correctness of $\constsampler$, the total variation distance between
      the sampling distribution induced by the $v(S)$ and the distribution induced by
      the sampling with the $\tilde{v}(S)$ is
      at most $O(\delta)$, and we set $\delta = o\left( \frac{\varepsilon^2}{\log n \cdot \log^2 \Phi}\right)$. 
      
     Each call to $\constsampler$ on $I_\ell \times J_\ell$ returns $\sqrt[4]{n}$ samples, so, in order to collect $n \cdot \log^z (n)$ samples, $\arbitsample$ needs to call $\constsampler$ at most $L + n \cdot \log^z (n) / \sqrt[4]{n} = \tilde O(n^{3/2})$ times. 

      The running time of the sampling phase has two terms. The first term comes from approximating the volume of each partition, and it is $\tilde O(n^{7/4} \cdot \poly(\eps^{-1}) )$. The second term comes from running $\constsampler$. As noted above, the number of such calls is bounded by $\tilde O(n^{3/2})$ and each calls uses time $\tilde O((\sqrt[4]{n})^{2-\Omega(\phi)})$, so, the total time spent running $\constsampler$ is $\tilde O (n^{2-\Omega(\phi)} \cdot \poly(\eps^{-1}))$.
\end{proof}

\section{Designing a Consistent Rounding} \label{sec:consistent rounding}

For two sets $A,B,$ let $\mathcal{F}_{A,B} =  \{f: A \to B\}$ be the set of all functions from $A$ to $B$. Also for a set $A$, let $A[t] = \{ S \subseteq A \; | \; |S| \leq t\}$.

\begin{lemma} \label{lem:consistency lemma}

Fix any $\phi \in (0,1)$, $k,n \geq 1$, let $A$ be a set of size $n$, $\Sigma$ be an arbitrary set, $\sfR = \polylog(n)$,  and $t= \tilde{O}(\sqrt{n})$. 
For $i \in [\sfR]$, consider any fixed sequence of functions $(h_i, g_i, w_i)_{i \in [\sfR]}$ where $h_i: \Sigma \times A[t] \to \Sigma$, $g_i: \Sigma \times A[n] \to A[t]$, and $w_i: \Sigma \to \calF_{A, [k]}$. 
Fix any $\sigma_1 \in \Sigma$.
Suppose $S \subseteq A$ is a set of $m = \Theta(n^{1-\phi/2})$ uniformly sampled points 
 and consider the sequences $\sigma_1 \dots \sigma_\sfR$ and $f_1 \dots f_\sfR$ given by 
\begin{align*}
    f_i = w_i(\sigma_i) \quad \text{ and } \quad \sigma_{i+1} = h_i(\sigma_{i}, g_i(\sigma_i, S)).
\end{align*}
Define the partitions of $A$ given by $A^{(j)}_i = f_i^{-1}(j)$ for each $i \in [\sfR]$. 
 Then, with probability $1-\exp(-\Omega(\sqrt{n}))$, for each $i \in [\sfR]$,  the set $S$ $\tau$-shatters the partition $\{A^{(j)}_i\}_{j =1 \dots k}$ for $\tau = n^{1/2 + \phi}$.
\end{lemma}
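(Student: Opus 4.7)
The key obstacle is that the partitions $\{A_i^{(j)}\}_j$ depend on the random sample $S$ through the dynamics $\sigma_{i+1} = h_i(\sigma_i, g_i(\sigma_i, S))$, so we cannot directly apply Chernoff to $|S \cap A_i^{(j)}|$ as if the parts were independent of $S$. The plan is to circumvent this via a union bound over the small ``observable signature'' that the dynamics extract from $S$.

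First, I will observe that the entire trajectory $(\sigma_1, \sigma_2, \dots, \sigma_\sfR)$, and therefore the entire sequence of partitions $(f_1, f_2, \dots, f_\sfR)$, is a deterministic function of the initial state $\sigma_1$ (which is fixed) together with the sample transcript $\mathbf{T} := (T_1, \dots, T_\sfR)$, where $T_i := g_i(\sigma_i, S) \in A[t]$. This is because $w_i, g_i, h_i$ are deterministic, so given $\mathbf{T}$ one can replay the dynamics without ever looking at $S$ directly. Consequently, the collection of partition sequences that can arise from any realization of $S$ is a subset of the fixed family indexed by $\mathbf{T} \in (A[t])^\sfR$, whose cardinality is at most
\[
|(A[t])^\sfR| \leq \binom{n}{\leq t}^\sfR \leq n^{O(t \sfR)} = 2^{\tilde{O}(\sqrt{n})},
\]
using $t = \tilde{O}(\sqrt{n})$ and $\sfR = \polylog(n)$.

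Next, I will fix an arbitrary transcript $\mathbf{T}$, which determines a fixed sequence of partitions $\{A_i^{(j)}(\mathbf{T})\}_{j \in [k]}$ for $i \in [\sfR]$, independent of the randomness in $S$. For any fixed part $A_i^{(j)}(\mathbf{T})$ of size at least $\tau = n^{1/2+\phi}$, the intersection $|S \cap A_i^{(j)}(\mathbf{T})|$ is a sum of negatively correlated indicators with mean $m \cdot |A_i^{(j)}(\mathbf{T})|/n \geq m\tau/n = n^{1/2+\phi/2}$. A standard Chernoff bound for uniform sampling without replacement (equivalently, for hypergeometric random variables) yields
\[
\Pr\!\left[\,\big||S \cap A_i^{(j)}(\mathbf{T})| - m\tfrac{|A_i^{(j)}(\mathbf{T})|}{n}\big| > 0.1 \cdot m\tfrac{|A_i^{(j)}(\mathbf{T})|}{n}\,\right] \leq 2\exp\!\left(-\Omega(n^{1/2+\phi/2})\right).
\]
Only parts of size $\geq \tau$ matter, and there are at most $n/\tau \leq n^{1/2-\phi}$ of these per round, so a union bound within a single transcript loses only a $\sfR \cdot n^{1/2-\phi} = n^{O(1)}$ factor.

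Finally, I will union bound over all transcripts. The bad event that $S$ fails to $\tau$-shatter some partition in the sequence it induces is contained in the union, over all $\mathbf{T}$, of the event that $S$ fails to $\tau$-shatter the fixed partitions associated with $\mathbf{T}$. Combining the two counts,
\[
\Pr[\text{fail}] \leq 2^{\tilde{O}(\sqrt{n})} \cdot n^{O(1)} \cdot \exp\!\left(-\Omega(n^{1/2+\phi/2})\right) = \exp\!\left(-\Omega(n^{1/2+\phi/2})\right),
\]
since the $n^{\phi/2}$ slack in the Chernoff exponent swallows the polylogarithmic overhead in the transcript count. This is much smaller than $\exp(-\Omega(\sqrt{n}))$, yielding the claim. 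The main technical step is the first one — recognizing that the adaptive dynamics communicate only $\tilde O(\sqrt{n})$ bits of information about $S$ per round — after which the argument is a clean decoupling via union bound.
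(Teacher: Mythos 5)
Your proof is correct and follows essentially the same approach as the paper: you identify that the partition sequence $(f_1,\dots,f_\sfR)$ is a deterministic function of the ``transcript'' $(g_1(\sigma_1,S),\dots,g_\sfR(\sigma_\sfR,S)) \in (A[t])^\sfR$, count at most $2^{\tilde O(\sqrt n)}$ possible transcripts, apply Chernoff to each fixed partition, and union bound. This is precisely the paper's argument (there the sets $g_i(\sigma_i,S)$ are called $Z_i$ and the union bound is over the induced set of partition sequences $\cP$, which is in bijection with your transcripts). Two small points where your write-up is slightly more careful than the paper's: you explicitly note that $S$ is a uniformly random \emph{set} (sampling without replacement), so the indicators are negatively correlated and the hypergeometric Chernoff bound is the right tool, whereas the paper implicitly treats the $m$ samples as independent; and you union bound only over the $\le n/\tau$ parts of size $\ge\tau$ within each partition rather than over all $k$ parts, which matters only if $k$ is allowed to be very large. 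Neither changes the substance.
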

\begin{proof}
First note that any function $f \in\mathcal{F}_{A,k}$ gives rise to a natural partition of $A$ via $\{f^{-1}(j)\}_{j \in [k]}$, thus in what follows refer this as the partition corresponding to $f$. 
    The critical observation is that the value of $f_{i+1} = w_i \circ h_i \circ g_i(\sigma_i, S)$ depends only on a set $Z_i = g_i(\sigma_i, S)$ of at most $t$ elements from $A$. Thus, there are only $\binom{n}{t} \leq n^t = 2^{\tilde{O}(\sqrt{n})}$ possible values for the set $Z_i$. 
    It follows that there are at most  $2^{\tilde{O}(\sqrt{n})}$ partitions which can arise from the function $f_{i}$, for any $i \in [\sfR]$, and therefore a total of $n^{tz \cdot \sfR} = 2^{\tilde{O}(\sqrt{n})}$ sequences of partitions which can arise overall by the functions $f_1,\dots,f_\sfR$, which are moreover fixed in advance and independent of $S$. Denote by $\cP$ the set of all such sequences of partitions.

    Now for any fixed partition $A = A_1 \cup \dots \cup A_k$, and fix any $i$ such that $|A_i| \geq \tau = n^{1/2+\phi}$ (if one exists), and let $X_j \in \{0,1\}$ indicate that the $j$-th sample in $S$ is contained in $A_i$. Clearly $\mathbb{E}[X_i] = \frac{{|A_i|}}{|A|}$. Thus, by Chernoff bounds, we have

\begin{equation}
    \begin{split}
           \pr{ \left|\sum_{i =1}^m X_i -  \frac{m|A_i|}{|A|}\right| \geq \frac{1}{20}\frac{m|A_i|}{|A|}}& \leq \exp(-\frac{1}{2000} \frac{m|A_i|}{|A|}) \\
           & \leq \exp(-\frac{1}{2000} n^{1/2 + \phi/2})
    \end{split}
\end{equation}
Conditioned on this event that $\left|\sum_{i =1}^m X_i -  \frac{m|A_i|}{|A|}\right| \leq \frac{1}{20}\frac{m|A_i|}{|A|}$, we have
\[ \frac{|S \cap A_i||}{|A_i|} = \frac{|S|}{|A|} \left(1 \pm \frac{1}{20}\right)\]
Thus, by a union bound over all $i \in [k]$, it follows that the sample $S$ $\tau$-shatters to the partition $A_1,\dots,A_k$ with probability at least $k\exp(-\frac{1}{2000}  n^{1/2 + \phi/2})$. We then have
\begin{equation}
    \begin{split}
        \pr{D \text{ $\tau$-shatters all partitions in all sequences in } \cP} &\geq 1- 2^{\tilde{O}(\sqrt{n})} \cdot  \exp\left(-\frac{1}{2000}  n^{1/2 + \phi/2}\right)        \\
        &  \geq 1-    \exp\left(-\frac{1}{4000}  n^{1/2 + \phi/2} \right)  
    \end{split}
\end{equation}
which completes the proof
\end{proof}

\begin{corollary} \label{cor:consistency lemma}
Fix any $\phi\in (0,1)$ and $k, n \geq 1$.  Let $A$ be a set of size $n^2$ and $\Sigma$ be an arbitrary set. Pick $S \subseteq A$ of size $n^{2-\phi/8}$ uniformly at random and let $\sfR = polylog(n)$.
For $i \in [\sfR]$, let $p_i$ be any probability distribution over $A$.
For $i \in [\sfR]$, let $Z_i \subseteq A$ the the multi-set of $\tilde O(n)$ i.i.d. samples from $p_i$.

Consider sequences of ``state'' variables $\sigma_1 \dots \sigma_\sfR \in \Sigma$ and ``paritioning functions'' $f_1 \dots f_\sfR \in \calF_{A, [k]}$.
Fix $\sigma_1 \in \Sigma$.
Suppose that, for $i \in [\sfR - 1]$: 
(i) $\sigma_{i+1}$ depends on the sample $Z_i$ and on $\sigma_{i}$; 
(ii) $p_{i}$ depends on $\sigma_i$ and $S$; 
(iii) $f_i$ depends on $\sigma_i$.
Define the partitions of $A$ given by $A^{(j)}_i = f_i^{-1}(j)$ for each $i \in [\sfR]$. 
Then, with high probability, for each $i \in [\sfR]$,  $S$ $\tau$-shatters the partition $\{A_i^{(j)}\}_{j\in B}$ with $\tau = n^{1+\phi/2}$.

\end{corollary}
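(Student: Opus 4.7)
My plan is to adapt the proof of Lemma~\ref{lem:consistency lemma} to the new parameter regime, where $|A| = n^2$, $|S| = n^{2-\phi/8}$, each $Z_i$ has $|Z_i| = \tilde{O}(n)$, $\sfR = \polylog(n)$, and $\tau = n^{1+\phi/2}$. The structural facts I would exploit are: (i) $f_i$ is a deterministic function of $\sigma_i$; (ii) $\sigma_{i+1}$ is a deterministic function of $(\sigma_i, Z_i)$; and (iii) $\sigma_1$ is fixed. By induction, once the ``trajectory'' $(Z_1, \ldots, Z_\sfR)$ is specified, the entire sequence of partition functions $(f_1, \ldots, f_\sfR)$ is determined.

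First I would count the possible trajectories. Each $Z_i$ is a multi-set of $\tilde{O}(n)$ elements from $A$, so $Z_i$ takes at most $|A|^{\tilde{O}(n)} = (n^2)^{\tilde{O}(n)} = 2^{\tilde{O}(n)}$ values; over the $\sfR = \polylog(n)$ iterations, the set $\mathfrak{P}$ of partition sequences that could ever arise has size $|\mathfrak{P}| \leq 2^{\tilde{O}(n)}$, and this upper bound is itself independent of $S$. Next, for any fixed part $A_i^{(j)}$ of some $\calP \in \mathfrak{P}$ with $|A_i^{(j)}| \geq \tau$, the expected number of samples of $S$ landing in $A_i^{(j)}$ is at least $|S|\,\tau/|A| = n^{1+3\phi/8}$, so a multiplicative Chernoff bound yields
\[
\Pr_S\!\left[\,\frac{|S \cap A_i^{(j)}|}{|S|} \notin \left(1 \pm \tfrac{1}{10}\right)\cdot \frac{|A_i^{(j)}|}{|A|}\,\right] \leq \exp\!\left(-\Omega(n^{1+3\phi/8})\right).
\]
Union bounding this over the at most $k\cdot \sfR\cdot |\mathfrak{P}| \leq 2^{\tilde{O}(n)}$ large parts across all sequences in $\mathfrak{P}$, the overall failure probability is at most $\exp(\tilde{O}(n) - \Omega(n^{1+3\phi/8})) = o(1)$ for any constant $\phi > 0$.

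The one new subtlety, relative to Lemma~\ref{lem:consistency lemma}, is that $Z_i$ is drawn from $p_i$ which itself depends on $S$, so $Z_i$ carries genuine sampling noise in addition to its dependence on $S$. This does not affect the argument: enumerating over all possible joint realizations of $(Z_1,\ldots,Z_\sfR)$ absorbs both sources of randomness into the enumeration $\mathfrak{P}$, and the Chernoff tail above is taken only over the randomness in $S$, which is independent of the sampling used to draw the $Z_i$'s. The main bookkeeping obstacle, as in the Lemma, is merely verifying that $n^{1+3\phi/8}$ dominates $\log|\mathfrak{P}| = \tilde{O}(n)$; the gap of $n^{3\phi/8}$ is comfortable for any fixed $\phi > 0$ and sufficiently large $n$, so no delicate tuning of parameters is required.
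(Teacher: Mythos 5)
Your proof is correct and takes essentially the same approach as the paper: the paper's own proof of the corollary is a one-line reduction to Lemma~\ref{lem:consistency lemma} (by identifying $g_i$ with the sampling process, $h_i$ with the state update, and $w_i$ with the map to partition functions, and noting that the lemma applies to any fixing of the sampling randomness), whereas you re-derive that lemma's argument directly in the corollary's parameter regime. Your trajectory count $2^{\tilde O(n)}$ and the Chernoff tail $\exp(-\Omega(n^{1+3\phi/8}))$ for parts of size $\geq n^{1+\phi/2}$ match what the lemma's proof gives after substituting $|A| = n^2$, so the union bound goes through as you claim.
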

\begin{proof}
    Follows immediately from Lemma \ref{lem:consistency lemma}, where $g_i$ is the process which samples the set $Z_i$ given the fixed value of $S$ and the current ``state'' $\sigma_i$; $h_i$ is the function which generates $\sigma_{i+1}$ from the samples $Z_i$ and $\sigma_i$; and $w_i$ is the function that defines the partitioning function $f_i$ given the state $\sigma_i$.
    Note that these functions are randomized, but Lemma \ref{lem:consistency lemma} holds for any possible fixing of the randomness, as it holds for any functions $g_i,h_i, w_i$. 

\end{proof}

\section{Acknowledgements}

Erik Waingarten would like to thank the National Science Foundation (NSF) which supported this research under
Grant No. CCF-2337993.

\appendix

\section{Reduction to Polynomial Aspect Ratio}\label{sec:poly-aspect}

\begin{lemma}\label{lem:bad-approx}
    There exists a randomized algorithm which runs in $O(nd + n \log n)$ time and has the following guarantees:
    \begin{itemize}
        \item \emph{\textbf{Input}}: A set $X = \{ x_1,\dots, x_n \} \subset (\R^d, \ell_1)$ and a vector $b \in V$.
        \item \emph{\textbf{Output}}: A random variable $\boldeta \in \R_{\geq 0}$.
    \end{itemize}
    With probability $1-o(1)$ over the randomness of the algorithm,
    \[\EMD_{X}(b) \leq \boldeta \leq \tilde{O}(n^2 \sqrt{d}) \cdot \EMD_X(b). \] 
\end{lemma}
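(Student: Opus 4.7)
The plan is to use a single-pass star-routing estimator. The algorithm scans to find any index $i_0 \in [n]$ with $b_{i_0} \neq 0$ (returning $0$ if $b = 0$), then computes and returns
\[
\boldeta = \sum_{i=1}^n |b_i| \cdot \|x_i - x_{i_0}\|_1
\]
in $O(nd)$ time plus an $O(n)$ scan, fitting within the $O(nd + n \log n)$ budget.

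The upper bound $\EMD_X(b) \leq \boldeta$ follows from routing every supply unit via $x_{i_0}$. Concretely, the product flow $\gamma_{ij} = (b_+)_i (b_-)_j / \|b_+\|_1$ is feasible, and by the triangle inequality together with the marginal constraints on $\gamma$,
\[
\sum_{ij} \gamma_{ij} \|x_i - x_j\|_1 \leq \sum_{ij} \gamma_{ij} \left(\|x_i - x_{i_0}\|_1 + \|x_j - x_{i_0}\|_1\right) = \sum_i |b_i| \|x_i - x_{i_0}\|_1 = \boldeta.
\]

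For the lower bound $\boldeta \leq \tilde{O}(n^2\sqrt{d}) \cdot \EMD_X(b)$, the first observation is $\boldeta \leq \|b\|_1 \cdot D$, where $D = \max_j \|x_j - x_{i_0}\|_1$, and in the calling context of Lemma~\ref{lem:aspect-ratio}, $b$ is integer with $\|b\|_1 = O(n)$, yielding $\boldeta \leq O(n \cdot D)$. I then aim to lower-bound $\EMD_X(b) \geq D / \poly(n,d)$ via Kantorovich--Rubinstein duality applied to the $1$-Lipschitz test functions $f(x) = \|x - x_{i_0}\|_1$ and coordinate projections $f(x) = (x)_k$, combined with the observation that the farthest point $x_{j^\star}$ from $x_{i_0}$ in the support of $b$ carries mass that must cross a distance comparable to $D$.

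The main obstacle is that in adversarial configurations---for instance, two tight clusters of size $n/2$ separated by a long gap with balanced supplies inside each cluster---$\EMD_X(b)$ can be much smaller than $D$, and the Lipschitz lower bounds degenerate. To robustly handle such cases I would randomize by combining star-routing with an AIK-style probabilistic tree embedding (Lemma~\ref{lem:aik-tree}) whose aspect-ratio parameter is set adaptively from the computed $D$, then compute $\EMD_{\bT}(b)$ exactly via the closed-form formula~\eqref{eq:emd on tree embeddingss} and output a carefully scaled combination of the two estimators. Applying Markov's inequality to the expected upper distortion together with a union bound over the whp per-pair lower distortion (items~\ref{item:whp inequality} and~\ref{item:expected inequality} of Lemma~\ref{lem:aik-tree}) gives the desired $\tilde{O}(n^2\sqrt{d})$ approximation with probability $1-o(1)$, where the $\sqrt{d}$ slack accommodates coarse $\ell_1$-vs-$\ell_2$ conversions in the analysis.
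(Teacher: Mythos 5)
Your star-routing upper bound $\EMD_X(b) \le \boldeta$ is correct, and you correctly identify that the lower bound fails badly in the two-cluster example (indeed, $\EMD_X(b)$ can be arbitrarily small relative to the star cost, so no $\poly(n,d)$ factor suffices). The problem is that your proposed rescue via an AIK-style tree embedding does not close the gap. First, Lemma~\ref{lem:aik-tree} assumes pairwise distances lie in $[1,\Phi]$ and gives distortion $O(\log\Phi)$ with a tree of depth $O(\log\Phi)$; without an a priori bound on the aspect ratio, neither the distortion nor the $O(n\log\Phi)$ time to evaluate the closed-form tree EMD is controlled, and ``set the aspect-ratio parameter adaptively from $D$'' does not fix this because it is the \emph{minimum} pairwise distance that is unbounded, not the maximum. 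Second, and more fundamentally, this lemma is precisely the tool used (via Lemma~\ref{lem:partition X preserve emd}) to enable the reduction to polynomial aspect ratio in the first place, so invoking a bounded-aspect-ratio tree embedding here is circular. Third, ``output a carefully scaled combination of the two estimators'' is unspecified and it is not clear what combination would give a one-sided guarantee $\EMD_X(b)\le\boldeta$ with high probability.

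The paper's actual proof takes a different and more elementary route. It projects all points onto a single random Gaussian direction $\bg\sim\mathcal N(0,I_d)$, scales by $C=O(n^2\sqrt d)$, sorts, and computes the one-dimensional $\EMD$ of the projected points exactly in $O(n)$ time with a two-pointer sweep. The bi-Lipschitz guarantee of the projection map $(\R^d,\ell_1)\to(\R,\ell_2)$ comes from: the $\sqrt d$ loss in going from $\ell_1$ to $\ell_2$; Gaussian anti-concentration, $\Pr[\,|\langle\bg,x_i-x_j\rangle|<\epsilon\|x_i-x_j\|_2\,]=O(\epsilon)$, with $\epsilon=1/(n^2\log n)$ union-bounded over all $n^2$ pairs; and a standard Gaussian tail bound for the upper direction. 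This gives a $\tilde O(n^2\sqrt d)$-bi-Lipschitz embedding with probability $1-o(1)$, whence the stated two-sided bound on $\boldeta$. No aspect-ratio assumption is needed, and the running time is $O(nd)$ for the projection plus $O(n\log n)$ for the sort. You may want to note that the one-dimensional exact-EMD step via sorting is the key ingredient that lets the argument avoid any dependence on the geometry beyond the single random direction.
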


\begin{proof}
    The algorithm proceeds by reducing to a one-dimensional problem in the following way:
    \begin{enumerate}
        \item Sample a random vector $\bg \sim \calN(0, I_d)$ and let $\bY = \{ \by_1,\dots, \by_n \} \subset \R$ be given by $\by_i = C\cdot \langle x_i, \bg\rangle$, for a parameter $C = O(n^2 \sqrt{d})$. Sort and re-index points so $\by_1 \leq \dots \leq \by_n$. Note that this transformation takes $O(nd) + O(n \log n)$ to project and sort.
        \item Compute $\EMD_{\bY}(b)$ in $O(n)$ time via a ``two-pointer'' algorithm. This is possible because the optimal coupling in one-dimension proceeds by greedily mapping as much supply/demand in sorted order, so one may proceed from smallest-to-largest by maintaining one pointer to current excess supply, and one to current excess demand. 
    \end{enumerate}
    It remains to prove the approximation guarantees, which follow from the fact that the map $(\R^d, \ell_1) \to (\R^1, \ell_2)$ given by the scaled projection onto $\bg$ is $\tilde{O}(n^2\sqrt{d})$-bi-Lipschitz with probability $1-o(1)$. This follows from the fact the identity $\ell_1^d \to \ell_2^d$ can only contract distances by at most $\sqrt{d}$, and the fact that with probability $1-o(1)$ over $\bg$, $1/(n^2\log n) \cdot \|x_i - x_j\|_2 \leq |\langle \bg, x_i - x_j\rangle | \leq O(\sqrt{\log n}) \|x_i - x_j\|_2$. Since we compute $\EMD_{\bY}(b)$ optimally, we obtain the desired guarantees on $\EMD_{X}(b)$.
\end{proof}

\begin{lemma} \label{lem:partition X preserve emd}
    There is a randomized algorithm with the following guarantees:
    \begin{itemize}
        \item \emph{\textbf{Input}}: A set $X = \{x_1,\dots, x_n\} \subset (\R^d, \ell_1)$, and a vector $b \in V$ with integer coordinates.
        \item \emph{\textbf{Output}}: A partition $X_1, \dots, X_t$ of $X$ which induces a partition of the vector $b$ into $b_1,\dots, b_t$.
    \end{itemize}
    With probability $0.9$, we have that each $b_1,\dots, b_t \in V$, and that
    \begin{align*}
        \EMD_X(b) = \sum_{i=1}^t \EMD_{X_i}(b_i).
    \end{align*}
    Furthermore, the maximum pairwise distance in $X_i$ is at most $\poly(nd) \cdot \EMD_X(b)$.
\end{lemma}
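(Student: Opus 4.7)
The plan is to combine Lemma~\ref{lem:bad-approx} with a randomly shifted axis-aligned grid. First I would invoke the algorithm of Lemma~\ref{lem:bad-approx} to obtain a value $\boldeta$ satisfying $\EMD_X(b) \leq \boldeta \leq \tilde O(n^2 \sqrt d)\,\EMD_X(b)$ with probability $1-o(1)$. Then I would pick a side length $L = C n \cdot \boldeta$ for a sufficiently large absolute constant $C$, sample a uniform shift $\bm s \in [0,L)^d$, and take $X_1,\dots,X_t$ to be the non-empty cells of the $L$-grid on $\R^d$ shifted by $\bm s$. Since every cell is an axis-aligned box of side $L$, its $\ell_1$-diameter is exactly $Ld = Cnd\,\boldeta$, which on the good event of Lemma~\ref{lem:bad-approx} is at most $\tilde O(n^3 d^{3/2})\,\EMD_X(b) = \poly(nd) \cdot \EMD_X(b)$, giving the diameter bound.

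For the EMD equality, I would fix an integer optimal basic feasible flow $\gamma^*$ of the transportation LP defining $\EMD_X(b)$. Because $b$ has integer coordinates and the transportation constraint matrix is totally unimodular, such a $\gamma^*$ exists; as a BFS it has at most $2n-1$ non-zero entries, and for every $(i,j)$ with $\gamma^*_{ij}>0$ integrality forces $\gamma^*_{ij}\geq 1$, hence $\|x_i-x_j\|_1 \leq \gamma^*_{ij}\|x_i-x_j\|_1 \leq \EMD_X(b) \leq \boldeta$. For any such edge, a coordinate-wise union bound gives that the randomly shifted grid separates $x_i$ and $x_j$ with probability at most $\sum_{k=1}^d |x_{ik}-x_{jk}|/L = \|x_i-x_j\|_1/L \leq 1/(Cn)$. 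A further union bound over the $\leq 2n-1$ support edges of $\gamma^*$ shows that, with probability at least $1-2/C$, every support edge of $\gamma^*$ lies inside a single cell of the partition.

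On the joint good event (the draw of Lemma~\ref{lem:bad-approx} is accurate, and no support edge of $\gamma^*$ is cut), the flow $\gamma^*$ decomposes as $\gamma^* = \sum_i \gamma^*|_{X_i\times X_i}$. Since each $\gamma^*|_{X_i\times X_i}$ is a non-negative flow whose row and column sums equal the positive and negative parts of $b|_{X_i}$, the restriction $b_i := b|_{X_i}$ must itself lie in $V$ and $\gamma^*|_{X_i\times X_i}$ is a feasible flow witnessing $\EMD_{X_i}(b_i) \leq \mathrm{cost}(\gamma^*|_{X_i\times X_i})$. Summing,
\[
\sum_{i=1}^t \EMD_{X_i}(b_i) \;\leq\; \sum_{i=1}^t \mathrm{cost}(\gamma^*|_{X_i\times X_i}) \;=\; \mathrm{cost}(\gamma^*) \;=\; \EMD_X(b),
\]
and the reverse inequality is immediate by concatenating any optimal sub-flows into a global feasible flow. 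Taking $C$ a large constant and boosting Lemma~\ref{lem:bad-approx}'s success probability (e.g.\ by a constant number of independent repetitions and taking the smallest output) makes the overall success probability at least $0.9$.

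The main obstacle is justifying that an optimal flow can simultaneously be taken to have few support edges \emph{and} each edge of length at most $\EMD_X(b)$: this is precisely where the integer-$b$ hypothesis is essential, and it is resolved by selecting an integer BFS of the (totally unimodular) transportation LP. The rest is standard: cutting probabilities for random shifts in $\ell_1$ decompose additively over coordinates, and sub-flows in disjoint vertex classes can be freely combined.
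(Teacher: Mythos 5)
Your proposal is correct and follows the same overall architecture as the paper's proof---invoke Lemma~\ref{lem:bad-approx} for a rough estimate $\boldeta$, impose a randomly shifted grid, use integrality of $b$ to control the flow, and conclude by decomposing the optimal flow cell by cell---but the core probabilistic step is handled differently. The paper takes \emph{any} integer-valued optimal flow $\gamma$, sets the grid side to $100\boldeta$, and runs a first-moment argument: the random variable $\bS = \sum_{i,j}\gamma_{ij}\ind\{x_i,x_j\text{ separated}\}$ has $\Ex[\bS] = \EMD_X(b)/(100\boldeta) \le 1/100$ by linearity, and since $\gamma$ is integer-valued $\bS$ is a nonnegative integer, hence $\Pr[\bS=0]\ge 0.99$. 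You instead pick an integer \emph{basic feasible solution} $\gamma^*$ (invoking total unimodularity and the BFS sparsity bound of $\le 2n-1$ support edges), argue that each support edge has $\ell_1$-length at most $\EMD_X(b)\le\boldeta$ because $\gamma^*_{ij}\ge 1$, and then union-bound over the $O(n)$ edges with a grid side inflated by a factor of $Cn$. Both arguments are sound and both use integrality in an essential way (the paper to discretize $\bS$, you to lower-bound $\gamma^*_{ij}$ and to invoke TU). The paper's route is slightly slicker: it needs no BFS machinery, no sparsity bound, and a smaller grid side, giving a marginally better (though still $\poly(nd)$) diameter bound; your route trades those extra ingredients for a perhaps more elementary union bound. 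One small remark: your suggestion to ``boost Lemma~\ref{lem:bad-approx} by taking the smallest output over repetitions'' is unnecessary (the lemma already succeeds with probability $1-o(1)$) and, as stated, slightly risky---taking a minimum only preserves the lower bound $\boldeta\ge\EMD_X(b)$ after a union bound over the runs, which you should make explicit if you include it.
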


\begin{proof}
    The algorithm proceeds by first executing the algorithm of Lemma~\ref{lem:bad-approx} in order to obtain an approximation of $\EMD_X(b)$ in $\boldeta$. We assume the algorithm succeeds (which occurs with probability at least $1-o(1)$) so $\boldeta$ is at least $\EMD_X(b)$ and at most $\tilde{O}(n^2\sqrt{d}) \cdot \EMD_X(b)$. The partition of $X$ proceeds by imposing a randomly shifted grid of side-length $100\boldeta$, since this means that any pair $x_i, x_j$ fall in different parts with probability at most $\|x_i - x_j\|_1 / (100 \boldeta)$. In particular, we sample a random vector $\bh \sim [0,100\boldeta]^d$ and we let $X_1,\dots, X_t$ be the sets consisting of non-empty parts for each integer $a_1, \dots, a_d$,
    \[ \left\{x_i \in X :  \bh + a_{\ell} \cdot \boldeta \leq x_{i \ell} < \bh + (a_{\ell} + 1) \cdot \boldeta \right\}. \]
    Since $b \in V$ is an integer vector, the integrality of min-cost flows implies that there exists a minimizing feasible flow $\gamma \in \R^{n\times n}_{\geq 0}$ with integer entries. We let 
    \[ \bS = \sum_{i=1}^n \sum_{j=1}^n \gamma_{ij} \cdot \ind\{ x_i, x_j \text{ fall in different parts} \},\]
    and note that the randomly shifted grid construction implies
    \[ \Ex[\bS] = \sum_{i=1}^n \sum_{j=1}^n \gamma_{ij} \cdot \frac{\|x_i - x_j\|_1}{100\cdot \boldeta} \leq \frac{\EMD_X(b)}{100 \cdot \boldeta} \leq \frac{1}{100}. \]
    Since $\bS$ is a non-negative integer, it must be $0$ with probability at least $99/100$. In this case, in an optimal coupling, no coupled pairs fall in different parts. This implies $b_1,\dots, b_t \in V$, and that we may compute $\EMD_X(b)$ by summing $\EMD_{X_i}(b)$. The final claim of maximum pairwise distance follows from the diameter of each grid cell $O(d) \cdot \boldeta$, which is at most $\poly(nd) \cdot \EMD_X(b)$.
\end{proof}

The above lemma upper bounds the maximum distance, and the following claim applies a small amount of noise in order to lower bound the minimum distance while ensuring that the Earth mover's distance does not change significantly. 

\begin{claim} \label{clm:min distance}
     For any $X = \{x_1,\dots, x_n\} \subset (\R^d, \ell_1)$ and $\eps > 0$, let $\bY = \{ \by_1, \dots, \by_n \} \subset (\R^{d + d'}, \ell_1)$ with $d' = \Theta(\log n)$ where 
     \[ \by_i = (x_i, \bz_i) \in \R^{d + d'}, \]
     with $\bz_i \sim [0, \eps]^{d'}$. Then, the minimum distance $\|\by_i - \by_j\|_1 = \Omega(\eps d')$ with probability $1-o(1)$, and for any $b \in V$,
     \[ \left|\EMD_{X}(b) - \EMD_{\bY}(b) \right| \leq O(\eps n \log n). \]
\end{claim}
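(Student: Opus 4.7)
The plan is to handle the two conclusions separately, both stemming from the additive decomposition $\|\by_i - \by_j\|_1 = \|x_i - x_j\|_1 + \|\bz_i - \bz_j\|_1$ that arises because the coordinates of $\bY$ are split into an ``$x$-block'' and a ``$\bz$-block.'' The first part relies on concentration of the noise, while the second part uses only its deterministic coordinate-wise range.

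For the minimum distance, observe that $\|\by_i - \by_j\|_1 \geq \|\bz_i - \bz_j\|_1$, so it suffices to lower bound $\min_{i \neq j}\|\bz_i - \bz_j\|_1$. Each of the $d'$ coordinates of $\bz_i - \bz_j$ has magnitude distributed as $\eps \cdot |U_1 - U_2|$ for independent $U_1, U_2 \sim U[0,1]$, with expectation $\eps/3$ and bounded in $[0,\eps]$. Since these coordinates are independent, Hoeffding's inequality gives, for any fixed pair $(i,j)$,
\[
\Pr\bigl[\|\bz_i - \bz_j\|_1 \leq \eps d' / 6\bigr] \leq \exp(-\Omega(d')).
\]
Choosing $d' = C\log n$ with a sufficiently large constant $C$ and applying a union bound over the $\binom{n}{2}$ pairs gives the claimed $\Omega(\eps d')$ lower bound on the minimum distance with probability $1 - o(1)$.

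For the Earth mover's distance bound, note first that $\|\by_i - \by_j\|_1 \geq \|x_i - x_j\|_1$ for every pair, so any flow that is feasible in both instances has weakly higher cost in $\bY$, yielding $\EMD_X(b) \leq \EMD_\bY(b)$ deterministically. For the reverse inequality, take an optimal flow $\gamma$ realizing $\EMD_X(b)$ and upper bound its cost when used in $\bY$:
\[
\sum_{ij}\gamma_{ij}\|\by_i - \by_j\|_1 = \EMD_X(b) + \sum_{ij}\gamma_{ij}\|\bz_i - \bz_j\|_1 \leq \EMD_X(b) + \eps d' \cdot \|b_+\|_1,
\]
using the deterministic bound $\|\bz_i - \bz_j\|_1 \leq \eps d' = O(\eps \log n)$ together with $\sum_{ij}\gamma_{ij} = \|b_+\|_1$. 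For the intended application (where $b$ encodes supply/demand vectors with entries bounded by $1$, hence $\|b_+\|_1 \leq n$), this gives $\EMD_\bY(b) \leq \EMD_X(b) + O(\eps n \log n)$.

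I do not foresee any substantial obstacle; both steps are routine. The only parameter choice to worry about is taking the constant $C$ in $d' = C\log n$ large enough that $n^2 \cdot \exp(-\Omega(d')) = o(1)$, which is immediate. The slightly subtle point to flag in writing is the implicit normalization $\|b_+\|_1 \leq n$ (reflecting the unit-mass convention of the surrounding framework); otherwise the bound would read $O(\eps \log n \cdot \|b_+\|_1)$.
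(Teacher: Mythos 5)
The paper states Claim~\ref{clm:min distance} without proof, so there is no reference proof to compare against; your argument stands on its own and is correct. The block decomposition $\|\by_i - \by_j\|_1 = \|x_i - x_j\|_1 + \|\bz_i - \bz_j\|_1$ is exactly the right starting point; the Hoeffding plus union-bound step for the minimum-distance guarantee is sound (each coordinate of $\|\bz_i - \bz_j\|_1$ is an independent $[0,\eps]$-bounded variable with mean $\eps/3$, and choosing the implied constant in $d' = \Theta(\log n)$ large enough handles the $\binom{n}{2}$ union bound); and the EMD bound follows cleanly from monotonicity of the cost under appending nonnegative coordinates and the deterministic bound $\|\bz_i - \bz_j\|_1 \leq \eps d'$.

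The issue you flag is real: the stated bound $O(\eps n \log n)$ cannot hold for arbitrary $b \in V$ since $|\EMD_X(b) - \EMD_\bY(b)|$ scales linearly in $b$ while the right-hand side is fixed, and your more precise bound $O(\eps d' \cdot \|b_+\|_1)$ is what the argument actually delivers. This is a looseness in the claim statement, not a gap in your proof. In the intended application---Lemma~\ref{lem:aspect-ratio} as invoked in Theorem~\ref{thm:main theorem formal}---the vector $b$ encodes unit supply/demand for an $n$-versus-$n$ matching instance, so $\|b_+\|_1 = n$ and the bound holds as written; and in the proof of Lemma~\ref{lem:aspect-ratio} the claim is applied with $\eps' = \tfrac{\eps}{n\log n}\EMD_{X_i}(b_i)$, which together with $\|b_+\|_1 \le n$ yields the required $O(\eps)\cdot \EMD_{X_i}(b_i)$ additive error. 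It would be reasonable to record this normalization explicitly when writing up the claim.
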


\begin{proof}[Proof of \Cref{lem:aspect-ratio}]
First, we invoke \Cref{lem:partition X preserve emd} and obtain a partition $X_1 \dots X_t$ and $b_1 \dots b_t$ such that each $X_i$ has diameter at most $\poly(nd) \cdot \emd_{X}(b)$ and 
\[
\emd_X(b) = \sum_{i=1}^t \emd_{X_i}(b_i).
\]
Then, we apply \Cref{clm:min distance} with precision parameter $\varepsilon' = \frac{\varepsilon}{n \log n} \cdot \emd_{X_i}(b_i)$ to each $X_i$ and, with probability $1- o(1)$, obtain $\bY_1 \dots \bY_t$ such that each $\bY_i$ has aspect ratio at most 
\[
\rho = \frac{\poly(nd) \cdot \emd_{X_i}(b_i)}{\varepsilon' \cdot \log n} \leq \poly(nd \varepsilon^{-1}).
\]
Moreover, for each $i \in [t]$ we have
\begin{equation} \label{eq:approximation of emd}
 \left|\EMD_{X_i}(b_i) - \EMD_{\bY_i}(b_i) \right| \leq O(\eps) \cdot \emd_{X_i}(b_i),
\end{equation}
which implies $|\sum_{i=1}^t \emd_{\bY_i}(b_i) - \emd_X(b)| \leq O(\varepsilon) \cdot \emd_X(b)$.
Finally, by properly rescaling (and translating) $\bY_i$ and $b_i$ we can ensure that each pair of points lies at distance at least $d \cdot \varepsilon^{-1}$, each coordinate of points in $\bY_i$ is at most $\Phi = \poly(nd\varepsilon^{-1})$, and that $\emd_{\bY_i}(b_i)$ stays constant. Then, rounding each coordinate to the closest integer in $[1, \Phi]$ perturbs each pairwise distance by at most a factor $1 \pm \varepsilon$, thus preserving \Cref{eq:approximation of emd}.

\end{proof}

\printbibliography

\end{document}